\def\denseformat{
	\setlength{\textheight}{9in}
	\setlength{\textwidth}{6.9in}
	\setlength{\evensidemargin}{-0.2in}
	\setlength{\oddsidemargin}{-0.2in}
	\setlength{\headsep}{10pt}
	\setlength{\topmargin}{-0.3in}
	\setlength{\columnsep}{0.375in}
	\setlength{\itemsep}{0pt}
}
\newtheorem{thm}{Theorem}[section]
\theoremstyle{definition}
\theoremstyle{plain}
\newtheorem{lem}[thm]{Lemma}
\newtheorem{col}[thm]{Corollary}
\newtheorem{dfn}[thm]{Definition}
\begin{document}

\title{Locally-Iterative Distributed ($\Delta + 1$)-Coloring below Szegedy-Vishwanathan Barrier, and Applications to Self-Stabilization and to Restricted-Bandwidth Models}

%\acmJournal{JACM}

%\footnote{This research has been supported by Israel Science Foundation grant 724/15.}

\author{Leonid Barenboim\thanks{Open University of Israel.
		E-mail: {\tt leonidb@openu.ac.il}
		\newline  $^{**}$ \   Ben-Gurion University of the Negev. Email: {\tt elkinm@cs.bgu.ac.il}
		\newline $^{***}$ Open University of Israel. Email: {\tt uri.goldenberg@gmail.com} \newline This research has been supported by Israel Science Foundation grant 724/15.}
	\and Michael Elkin$^{**}$ \and Uri Goldenberg$^{***}$}

%\author{Leonid Barenboim\thanks{Open University of Israel.
%		E-mail: {\tt leonidb@openu.ac.il}
%		\newline  $^{**}$ \   Ben-Gurion University of the Negev. Email: {\tt elkinm@cs.bgu.ac.il}
%		\newline $^{***}$ Open University of Israel. Email: {\tt uri.goldenberg@gmail.com}}
%	\and Michael Elkin$^{**}$ \and Uri Goldenberg$^{***}$}

%%\begin{titlepage}
%%	\def\thepage{}
%%	\maketitle

%\author{Leonid Barenboim}
%\affiliation{%
%\institution{Open University of Israel}}
%\email{leonidb@openu.ac.il}

%\author{Michael Elkin}
%\affiliation{%
%\institution{Ben-Gurion University of the Negev}}
%\email{elkinm@cs.bgu.ac.il}

%\author{Uri Goldenberg}
%\affiliation{%
%\institution{Open University of Israel}}
%\email{uri.goldenberg@gmail.com}

\begin{titlepage}
	\def\thepage{}
	\maketitle
	
\begin{abstract}
	We consider graph coloring and related problems in the distributed message-passing model. {\em Locally-iterative algorithms} are especially important in this setting. These are algorithms in which each vertex decides about its next color only as a function of the current colors in its $1 - hop-neighborhood$. In STOC'93 Szegedy and Vishwanathan showed that any locally-iterative $(\Delta + 1)$-coloring algorithm requires $\Omega(\Delta \log \Delta + \log^* n)$ rounds, unless there exists "a very special type of coloring that can be very efficiently reduced" \cite{SV93}. 
	No such special coloring has been found since then. This led researchers to believe that Szegedy-Vishwanathan barrier is an inherent limitation for locally-iterative algorithms, and to explore other approaches to the coloring problem \cite{BE09,K09,B15,FHK16}.
	The latter gave rise to faster algorithms, but their heavy machinery which is of non-locally-iterative nature made them far less suitable to various settings. In this paper we obtain the aforementioned special type of coloring. Specifically, we devise a locally-iterative $(\Delta + 1)$-coloring algorithm with running time $O(\Delta + \log^* n)$, i.e., {\em below} Szegedy-Vishwanathan barrier. This demonstrates that this barrier is not an inherent limitation for locally-iterative algorithms. As a result, we also achieve significant improvements for dynamic, self-stabilizing and bandwidth-restricted settings. This includes the following results.
	
	\begin{itemize}
		\item We obtain self-stabilizing distributed algorithms for $(\Delta + 1)$-vertex-coloring, $(2\Delta - 1)$-edge-coloring, maximal independent set and maximal matching with $O(\Delta + \log^* n)$ time. This significantly improves previously-known results that have $O(n)$ or larger running times \cite{GK10}.
		
		\item We devise a $(2\Delta - 1)$-edge-coloring algorithm in the CONGEST model with $O(\Delta + \log^* n)$ time and $O(\Delta)$-edge-coloring in the Bit-Round model with $O(\Delta + \log n)$ time. The factors of $\log^* n$ and $\log n$ are unavoidable in the CONGEST and Bit-Round models, respectively. Previously-known algorithms had superlinear dependency on $\Delta$ for $(2\Delta - 1)$-edge-coloring in these models.
		%We obtain the first algorithms with linear-in-$\Delta$ dependency.
		
		\item We obtain an arbdefective coloring algorithm with running time $O(\sqrt \Delta + \log^* n)$. Such a coloring is not necessarily proper, but has certain helpful properties. We employ it in order to compute a proper $(1 + \epsilon)\Delta$-coloring within $O(\sqrt \Delta + \log^* n)$ time, and $(\Delta + 1)$-coloring within $O(\sqrt {\Delta \log \Delta} \log^* \Delta + \log^* n)$ time. This improves the recent state-of-the-art bounds of Barenboim from PODC'15 \cite{B15} and Fraigniaud et al. from FOCS'16 \cite{FHK16} by polylogarithmic factors.
		
		\item Our algorithms are applicable to the SET-LOCAL model \cite{HKMS15} (also known as the weak LOCAL model). In this model a relatively strong lower bound of $\Omega(\Delta^{1/3})$ is known for $(\Delta + 1)$-coloring. However, most of the coloring algorithms do not work in this model. (In \cite{HKMS15} only Linial's $O(\Delta^2)$-time algorithm and Kuhn-Wattenhofer $O(\Delta \log \Delta)$-time algorithms are shown to work in it.) We obtain the first linear-in-$\Delta$ $(\Delta + 1)$-coloring algorithms that work also in this model.
		
	\end{itemize}
	
\end{abstract}
\end{titlepage}

\section{Introduction}
{\large \bf 1.1 The Classical Model\\}
In the {\em LOCAL model} of distributed computing \cite{L87} a network is represented by an $n$-vertex graph $G = (V,E)$ with maximum degree $\Delta$ whose vertices host processors. The vertices communicate with one another over the edges of $G$ in {\em synchronous}  rounds. In each round vertices perform local computations and exchange messages with their neighbors. The amount of local computations, as well as message size, is unrestricted. The {\em running time} is the number of rounds from the beginning of the execution until all vertices compute their respective parts in the solution. Another model of interest is the {\em CONGEST model}, which is similar to the LOCAL model, except that message size is restricted to $O(\log n)$ bits per edge per round.

 The problem that we are studying is how many rounds %(also known as {\em running time}  in the {\em message-passing model of distributed computing}) 
are required for computing a proper\footnote[1]{A coloring $\varphi : V \rightarrow [\Delta + 1]$ is called {\em proper}, if $\varphi(u) \neq \varphi(v)$, for every edge $e = (u,v) \in E$.}  $(\Delta + 1)$-coloring of $G$. This is one of the most fundamental and well-studied distributed symmetry-breaking problems \cite{CV86,GP87,L87,SV93,KW06,BE10,BE11,BE13,BEK14,BEPS12,B15,FHK16}, and it has numerous applications to resource and channel allocation, scheduling, workload balancing, and to mutual exclusion \cite{K09,GK10}.
The study of distributed coloring algorithms on paths and cycles was initiated by Cole and Vishkin in 1986 \cite{CV86}, who devised a $3$-coloring algorithm with $O(\log^* n)$ time\footnote[2]{Unless said otherwise, algorithms that we discuss are deterministic.}. The first distributed algorithm for the $(\Delta + 1)$-coloring problem on general graphs was devised by Goldberg and Plotkin in 1987 \cite{GP87}. The running time of their algorithm is $2^{O(\Delta)} + O(\log^* n)$. ($\log^*$ is a very slow-growing function, defined formally in Section \ref{sc:pr}.) Goldberg, Plotkin and Shannon \cite{GPS88} improved this bound to $O(\Delta^2 + \log^* n)$. Linial \cite{L87} showed a lower bound of $\frac{1}{2} \log^* n - O(1)$. His lower bound applies to a more relaxed $f(\Delta)$-coloring problem, for any, possibly quickly-growing function $f()$. Linial also strengthened the upper bound of \cite{GPS88}, and showed that an $O(\Delta^2)$-coloring can be computed in $\log^* n + O(1)$ time. (Via a standard color reduction, described e.g., in \cite{BE13} Chapter 3, given an $\alpha$-coloring one can compute a $(\Delta + 1)$-coloring in $\alpha - (\Delta + 1)$ rounds. Thus, Linial's algorithm also gives rise to $(\Delta + 1)$-coloring in $O(\Delta^2 + \log^* n)$ time.)

In STOC'93, Szegedy and Vishwanathan \cite{SV93} studied {\em locally-iterative} coloring algorithms. An algorithm $\mathcal{A}$ is an $\alpha$-to-$\beta$ locally-iterative, for a pair of parameters $\alpha > \beta$, if it maintains a sequence $\varphi_1, \varphi_2,...,\varphi_T$ of {\em proper} $\alpha$-colorings, where $\varphi_i$ is the coloring on round $i$, for every $1 \leq i \leq T$, the coloring $\varphi_T$ is a $\beta$-coloring, and $T$ is the running time of the algorithm. On each round $i$, every vertex $v$ computes its new color $\varphi_{i+1}(v)$ based only on the colors $\{\varphi_i(u) \ | \ u \in \hat{\Gamma}(v)\}$, where $\hat{\Gamma}(v) = \{v\} \cup \{u \in V \ | \ (u,v) \in E \}$ is the $1 - hop-neighborhood$ of $v$.
%%%Szegedy and Vishwanathan showed upper and lower bounds on the quantity $\Psi(n,\Delta,D)$, which is the number of colors into which an $n$-vertex graph $G$ of maximum degree $\Delta$ can be properly recolored within one single round, assuming that it was properly $D$-colored in the beginning of the round. Note, however, that for the lower bound of \cite{SV93} to hold, the proper $D$-coloring of $G$ is assumed to be {\em arbitrary}. 
%%%As a corollary of their upper bound on $\Psi(n, \Delta, D)$, 
Szegedy and Vishwanathan \cite{SV93} derived an improved upper bound of $O(\Delta \log \Delta + \log^* n)$ for locally-iterative $(\Delta + 1)$-coloring. Specifically, they devised an $O(\Delta^2)$-to-$(\Delta + 1)$-locally-iterative algorithm with running time $O(\Delta \log \Delta)$. (This upper bound was later re-derived in a somewhat more explicit way by Kuhn and Wattenhofer \cite{KW06}.) 
%%%As a corollary of their lower bound on $\Psi(n, \Delta, D)$, 
Szegedy and Vishwanathan \cite{SV93} also showed a {\em heuristic} lower bound on the number of rounds that a locally-iterative algorithm needs in order to compute a $(\Delta + 1)$-coloring from an $O(\Delta^2)$-coloring. Their lower bound (Theorem 12 in \cite{SV93}, marked as "heuristic") is $\Omega(\Delta \log \Delta)$. 
%(Strictly speaking, it says that $\Omega(\Delta \log \Delta)$ rounds are required to decrease the number of colors from $\Delta^2$ to $\Delta + 1$. 
By Linial's lower bound \cite{L87}, $\frac{1}{2} \log^* n - O(1)$ rounds are required to compute an $O(\Delta^2)$-coloring.

All $(\Delta + 1)$-coloring algorithms developed before 2009 were locally iterative. (See Table 1 below for a summary of known locally-iterative algorithms.) 
However, since 2009, a variety of algorithms that employ various complicated non-locally-iterative techniques were obtained. This started with the linear-in-Delta algorithms of \cite{BE09,K09,BEK14}, and proceeded with sublinear algorithms of \cite{B15,FHK16}.
%%%In \cite{BE09,K09} the first- and the second-named authors of the current paper, and independently Kuhn, devised an $O(\Delta + \log^*n)$-time $(\Delta + 1)$-coloring algorithm, using {\em defective colorings}. (See Section \ref{sc:pr} for the definition of this notion.) The algorithms of \cite{BE09,K09,BEK14} are, however, not locally-iterative. This direction was further pursued by the first-named author in \cite{B15}, who devised an algorithm with running time $\tilde{O}(\Delta^{3/4} + \log^* n)$, using {\em arbdefective colorings}. (See Section \ref{sc:pr}; the notion originates from \cite{BE10}.) This result was further improved by Fraigniaud at al. \cite{FHK16}, who devised the current state-of-the-art $(\Delta + 1)$-coloring with running time $O(\sqrt{\Delta} \log^{2.5} \Delta + \log^* n)$. 
The algorithms of \cite{BE09,K09,B15,FHK16} are all not locally-iterative, as they all decompose the graph into many subgraphs, compute colorings for them, and carefully  combine them into a single coloring for the original graph. In view of Szegedy-Vishwanathan's heuristic lower bound (henceforth, {\em SV barrier}), this seemed to be inevitable.
In the current paper we show that this is not the case, and devise the first {\em locally-iterative} $(\Delta + 1)$-coloring algorithm with running time $O(\Delta + \log^* n)$, i.e., {\em below the SV barrier} of $\Omega(\Delta \log \Delta + \log^* n)$. Unlike previously locally-iterative algorithms, our algorithm does not necessarily reduce the number of employed colors in every round. Instead, if the initial number of colors is $\Delta^2$, it can keep being $\Omega(\Delta^2)$ for almost the entire execution of the algorithm, and then ''suddenly" reduce to $\Delta + 1$ in the last few rounds. The colorings $\varphi_1, \varphi_2, ...., \varphi_T$, $T = O(\Delta)$, that it computes on rounds $1,2,...,T$, respectively, are all proper, but they are {\em not at all arbitrary}. Rather they have some special properties that guarantee that in $O(\Delta)$ rounds the number of colors reduces to $(\Delta + 1)$.

Interestingly, in their seminar paper \cite{SV93}, Szegedy and Vishwanathan mention a possibility of such a phenomenon. In the preamble to their aforementioned "heuristic" theorem (Theorem 12) they wrote: 

\textit{''There is a possibility, however, that after a few steps of iteration we arrive at a very special type of coloring that can be very efficiently reduced in steps thereafter. Assuming that this does not happen, the results of the previous section give the following theorem:\\
	Theorem 12 (heuristic): Let $1 \leq b < a \leq \Delta/2$. To decrease the number of colors from $a\Delta$ to $b\Delta$ it takes $\Theta(\Delta \log(a/b))$ steps. In particular, to decrease the number
	of colors from $\Delta^2/2$ to $\Delta$ requires $\Theta(\Delta \log \Delta)$ steps."}\footnote[1]{The argument of \cite{SV93} applies, in fact, to reducing the number of colors to $\Delta + 1$, as opposed to $\Delta$.}
%Of course, this statement also implies that the number of rounds to obtain $(\Delta + 1)$-coloring, rather than $\Delta$, is $\Theta(\Delta \log \Delta) $, as explained by Szegedy and Vishwanathan \cite{SV93}.

We also use our new locally iterative technique to devise improved {\em not} locally-iterative coloring algorithms. Specifically, we obtain $(1 + \epsilon)\Delta$-coloring within $O(\sqrt{\Delta} + \log^* n)$ time, for an arbitrarily small constant $\epsilon > 0$, and a $(\Delta + 1)$-coloring within $O(\sqrt {\Delta \log \Delta} \log^* \Delta + \log^* n)$ time. This improves the best previously-known running time $O(\sqrt \Delta \log^{2.5} \Delta + \log^* n)$ of Fraigniaud et al. \cite{FHK16}, by a polylogarithmic in $\Delta$ factor. 

\begin{table}[H]
	\begin{center}
		
		\begin{tabular}{|| c c ||}
			\hline
			Running time & Reference \\ 
			\hline
			$2^{O(\Delta)} + O(\log^*n)$  & Goldberg, Plotkin \cite{GP87} \\ 
			$O(\Delta^2) + \log^*n$ & Linial \cite{L87} \\
			$O(\Delta)\cdot \log{n}$ & Goldberg at el. \cite{GPS88} \\
			$O(\Delta^2) + \log^*n$ & Goldberg et al. \cite{GPS88} \\ 
			$O(\Delta \log \Delta) + \frac{1}{2} \log^*n$  & Szegedy, Vishwanathan \cite{SV93}\\  
			$O(\Delta\log\Delta) + \log^*n$ & Kuhn, Wattenhofer \cite{KW06}\\
			$O(\Delta) + \log^*n$ & {\bf This paper}\\
			\hline
		\end{tabular}
	\end{center}
	\caption{Known results for locally-iterative $(\Delta + 1)$-coloring.}
\end{table}

%\subsection{Technical Overview}
{\noindent \large \bf 1.2 Our Locally-Iterative Algorithms\\}
We start with describing our most basic subroutine, which we call {\em Additive Group algorithm}, or shortly, AG algorithm. The subroutine starts with a proper $(\Delta + 1)^2$-vertex-coloring $\varphi$ of the input graph $G$, and produces its proper $(\Delta + 1)$-coloring in $O(\Delta)$ rounds, in a locally-iterative way. Assume (for simplicity of presentation) that $\Delta + 1 =p$ is a prime number. We represent every initial color $\varphi(v) = \varphi_0(v)$ as a pair $\langle a_v, b_v \rangle$, where $a_v,b_v$ are from the field of integers with characteristic $p$, i.e.,  $a_v,b_v \in GF(p)$. Then every vertex $v \in V$ (in parallel) checks if there exists a neighbor $u \in \Gamma (v)$, with $b_u = b_v$. If there is no such a neighbor, then the vertex $v$ {\em finalizes} its color, i.e., sets it to $\langle 0,b_v \rangle$. Otherwise, the vertex $v$ sets its color to $\langle a_v, b_v + a_v \rangle$, where the addition is performed in $GF(p)$. We show (see Section \ref{sc:ag}) that when all vertices run this simple iterative step for $2p + 1 = 2(\Delta + 1) + 1$ rounds, the ultimate coloring $\psi$ is a proper $(\Delta + 1)$-coloring. Moreover, at all times the graph is properly colored.

The simplicity and the uniformity of this iterative step makes it very powerful. In dynamic self-stabilizing environments vertices run this step forever in conjunction with an appropriate ''check-and-fix" procedure, no matter what changes or faults occur in the network. It turns out that still, once faults stop occurring, within additional $O(\Delta)$ rounds the coloring converges to a proper $(\Delta + 1)$-coloring. In the edge-coloring scenario, every edge $e = (u,v)$ has a color $\varphi(e) = \langle a_e, b_e \rangle$, known to both endpoints. The endpoint $u$ checks locally if there is an edge $e_u$ incident on $u$, $e_u \neq e$, with $b_{e_u} = b_e$, and $v$ makes an analogous test among edges incident on it. Then $u$ and $v$ communicate to one another {\em one single bit} each, which enables both of them to update the color of $e$. Therefore, this algorithm gives rise to the first communication- and time-efficient $(2\Delta - 1)$-edge-coloring algorithm. 
%%%Moreover, this algorithm is extremely well suited to dynamic and self-stabilizing scenarios.

Some subtleties arise when $(\Delta + 1)$ is not prime, and we overcome them by showing that in some cases the proof goes through even if the arithmetics is performed in an additive group $Z_{\Delta + 1}$, rather than in a Galois field $GF(p)$. Another difficulty stems from the need to combine the AG algorithm with Linial's algorithm. The latter algorithm reduces the number of colors to $O(\Delta^2)$, and from there the AG algorithm takes over. However, in the self-stabilizing setting some vertices may run Linial's algorithm, while others have already proceeded to AG algorithm. Careful adaptations to both algorithms are required to handle such situations.

Finally, we also extend the AG algorithm to computing {\em arbdefective coloring}. For a pair of parameters  $\alpha$ and $\beta$, a coloring $\varphi$ is said to be {\em $\alpha$-arbdefective $\beta$-coloring} if the $\beta$ color classes of $G$ induce subgraphs of arboricity at most $\alpha$ each. Arbdefective colorings were introduced by the first- and the second-named authors in \cite{BE10}, and they were shown to be extremely useful for efficient computation of proper colorings in \cite{BE10,B15,FHK16}. Our extension of AG algorithm from proper to arbdefective colorings (we call the extended algorithm {\em ArbAG}) works very similarly to the AG algorithm. The only difference is that on each round, each vertex $v$ tests if it has at most a certain number of neighbors $u$ with $b_u = b_v$. (Recall that in AG algorithm, this threshold number is $0$.) Other than that ArbAG has the same simple locally-iterative structure as algorithm AG, but the number of iterations of ArbAG is significantly smaller. (Note, however, that strictly speaking, a locally iterative algorithm is required to maintain a proper coloring on each round, while algorithm ArbAG maintains an arbdefective coloring.)
This is in sharp contrast to previous methods \cite{BE10,B15} of computing arbdefective colorings. The latter are far more involved, far less communication-efficient, and less time-efficient by polylogarithmic factors.
As a result we also obtain improved (again, by polylogarithmic factors) algorithms for general (not necessarily locally-iterative) $(\Delta + 1)$-coloring and $(1 + \epsilon)\Delta$-coloring. 

%{\noindent \large \bf 1.4 Organization\\}
%Section 2 contains definitions and basic facts. Section 3 describes our main coloring technique in the %static setting. Section 4 deals with self-stabilization. Section 5 considers edge-coloring in %restricted-bandwidth models. Section 6 describes arbdefective-colorings and their applications that %result in more efficient proper-colorings. 

{\noindent \large \bf 1.3 Applications\\}
In the Conclusions section of the paper \cite{KW06} by Kuhn and Wattenhofer, the authors explain why locally-iterative algorithms are particularly important from practical perspective. They mention ''emerging dynamic and mobile distributed systems such as peer-to-peer, ad-hoc, or sensor networks" as examples of networks for which such algorithms can be especially suitable. They also point out that locally-iterative algorithms are typically communication-efficient ones. 

In this paper we demonstrate that our novel locally-iterative algorithms indeed provide dramatically improved bounds for both the dynamic Self-Stabilizing scenarios and for scenarios in which communication-efficiency is crucial. In the next three subsections we discuss these applications of our locally-iterative technique one after another.

%\subsubsection{Self-Stabilizing Symmetry Breaking}
{\noindent \bf 1.3.1 Self-Stabilizing Symmetry Breaking\\}
The Self-Stabilizing setting was introduced by Dijkstra \cite{D74}, and is being intensively studied since then. See, e.g., Dolev's monograph \cite{D00} and surveys by Herman \cite{H02}, by Guelleti and Kheddouci \cite{GK10}. The latter article surveys results on self-stabilizing coloring, independent sets and matchings that were achieved before 2010. Since then, several additional results were obtained, either for more general or more restricted scenarios. This includes distance-2 coloring of vertices \cite{BM12} and of edges \cite{LL14}, and maximal independent sets in wireless sensor networks \cite{AAD19}. Self-stabilization in dynamic systems was defined in \cite{DH97}. 

In the context of $(\Delta + 1)$-coloring, the setting we consider is the following one. The network is represented by a synchronous message-passing system with a synchronous scheduler and a distributed demon. Every vertex $v$ of a graph $G = (V,E)$ of maximum degree at most $\Delta$ and at most $n$ vertices has a unique ID number. In each round each vertex reads all messages that were received on its edges, produce new messages, performs local computations, and clears the memory used for storing the messages in the end of the round. The memory of each vertex consists of two areas. The {\em Read Only Memory} (henceforth, ROM) consists of hard-wired data such as vertex ID, degree bound $\Delta$, vertices bound $n$, and program code. The ROM is faultless, but its contents cannot be changed during execution. The other area of the memory is {\em Random Access Memory} (henceforth, RAM). This memory may change during execution, and it is appropriate for storing variables, such as vertex colors. 

%However, this memory area
The RAM area, however, may change not only as a result of an algorithm instruction, but also as a result of faults or adversarial activity of the demon. Since the demon is a distributed one, it may change the memories of numerous processors simultaneously. Such faults may make arbitrary and completely unpredictable changes in any round in the entire RAM in all vertices. In particular, the memory areas that store incoming and outgoing messages may be affected, thus messages may be lost or corrupted. Moreover, in the {\em Fully-Dynamic Self-Stabilizing setting}, in each round vertices may crash, new vertices may appear and communication links between vertices may change arbitrarily, as long as the bounds on $n$ and $\Delta$ hold\footnote[1]{In fact, since the dependence of our algorithms' running time on $n$ is just $\log^* n$, the bound for the number of vertices may be double- or triple-exponential in the real number of vertices, and still the running time will be affected by just an additive constant term.}.  For example, colors are stored in RAM, and as long as faults occur, vertices may hold arbitrary colors, possibly the same as those of their neighbors, no matter what operations are performed by an algorithm. 
The objective is to devise algorithms in which once faults and dynamic changes stop occurring, the algorithm {\em self-stabilizes} quickly to a proper solution.
%However, an important property of the self-stabilizing setting is that at a certain point of time and on, faults stop occurring. The goal is computing a proper solution within as few rounds as possible starting from that moment.

The relevant notion of running time in this context is called {\em stabilization time} (also known as ''quiescence" time), which is the maximum number $T$ of rounds, so that $T$ rounds after the last fault or dynamic change of the graph we are guaranteed that an algorithm arrives to a proper solution, e.g., the coloring of the graph is a proper $(\Delta + 1)$-coloring. One can define analogously self-stabilizing variants of $(2\Delta - 1)$-edge-coloring (see Section 1.2.2), of Maximal Independent Set (henceforth, MIS) and of Maximal Matching (henceforth, MM)\footnote[2]{A subset $U \subseteq V$ of vertices is an {\em MIS} if there are no edges between pairs of vertices in $U$, and for every vertex $v \in V \setminus U$, there exists a neighbor $u \in U$. A subset $M \subseteq E$ of edges is an {\em MM} if no two edges of $M$ are incident, and for every $e \in E \setminus M$, there exists an edge $e' \in M$ incident on it.}.

Self-stabilizing symmetry-breaking problems were extensively studied \cite{HH92,IKK02,KK06,SS93}. See also \cite{GK10} for an excellent survey of self-stabilizing symmetry-breaking algorithms. However, all of them have prohibitively large stabilization time of $O(n)$ or more. A general scheme for transforming $T$-round algorithms from the LOCAL model into $T$-round self-stabilizing  algorithms was described in \cite{LSW09}. This, however, may result in a significant growth in the message size, due to the need of collection information of $T$-hop-neighborhoods. In contrast, in this paper we devise the first self-stabilizing algorithms with stabilization time of $O(\Delta + \log^* n)$ and small messages, for all these four fundamental problems.
We note that the fact that our algorithms are {\em deterministic} is particularly useful in this setting. Indeed, this prevents the possibility that adversarial faults will manipulate random bits of the algorithm.

{\noindent \bf 1.3.2 Edge-Coloring\\}
Another classical and extremely well-studied symmetry breaking problem is that of $(2\Delta - 1)$-edge-coloring \cite{PR01,BE11,BEM17,BEPS12,EPS15,DGP98,GP97,FGK17,PS97}. An {\em edge-coloring} $\varphi$ of a graph $G = (V,E)$ is a function $\varphi: E \rightarrow N$. It is said to be {\em proper} if for every pair of incident edges $e,e' \in E$, $e \neq e'$, we have $\varphi(e) \neq \varphi(e')$. The classical theorem of Vizing \cite{V64} states that every graph is $(\Delta + 1)$-edge-colorable. However, existing distributed deterministic solutions \cite{PR01,BE11,BEM17,BEPS12,EPS15,FGK17} with running time of the form $f(\Delta) + O(\log^* n)$ employ $(2\Delta - 1)$ colors or more in general graphs. 
(There are efficient randomized distributed algorithms \cite{BEPS12,EPS15} that compute $(1 + \epsilon)\Delta$-edge-colorings in time close to $(\log n)/ \Delta^{1 - o(1)}$. This running time is incomparable to running time of the form $f(\Delta) + O(\log^* n)$, for some function $f()$, achieved by deterministic algorithms that we discuss here.)
The first efficient deterministic algorithm for $(2\Delta - 1)$-edge-coloring was devised by Panconesi and Rizzi \cite{PR01}. Its running time is $O(\Delta + \log^* n)$.

In the LOCAL model of distributed computing, messages of arbitrary size are allowed. The $(2\Delta - 1)$-edge-coloring problem for a graph $G$ reduces to $(\Delta + 1)$-vertex-coloring problem for the line graph $L(G)$ of $G$, and in the LOCAL model this reduction can be implemented without any overhead in running time. Therefore, the novel sublinear-in-$\Delta$ time algorithms for $(\Delta + 1)$-vertex-coloring \cite{B15,FHK16} immediately give rise to sublinear-in-$\Delta$ time algorithms for $(2\Delta - 1)$-edge-coloring. However, all these edge-coloring algorithms \cite{PR01,B15,HKMS15} are not locally iterative. Moreover, they do not apply (or require significantly more time) in the CONGEST model of distributed computing. 
%%%In the latter model, every vertex $v$ is allowed to send $O(\log n)$ bits of information to each of its neighbors in every round. 
Implementing Panconesi-Rizzi algorithm in the CONGEST model requires $O(\Delta^2 + \log^* n)$ time. Simulating vertex-coloring for a line graph also yields a multiplicative overhead of factor at least $\Delta$ in the running time. Therefore, to the best of our understanding, the state-of-the-art solution for $(2\Delta - 1)$-edge-coloring in the CONGEST model requires $\tilde{O}(\Delta^{3/2} + \log^* n)$ time, and it is not locally iterative. The best currently-known locally-iterative solution is even slower, and requires $O(\Delta^2 \log \Delta + \log^* n)$ time.  (It is achieved by simulating the locally-iterative $O(\Delta \log \Delta)$-time algorithm of \cite{KW06,SV93} in the line graph in the CONGEST model.)
The problem of devising communication-efficient algorithms for symmetry-breaking problems was raised in a recent work by Pai et al. \cite{PPPRR17}.

We adapt our locally-iterative algorithm for $(\Delta + 1)$-vertex-coloring to work for $(2\Delta - 1)$-edge-coloring directly, i.e., without simulation of the line graph. As a result we obtain a locally-iterative $(2\Delta - 1)$-edge-coloring algorithm with running time $O(\Delta + \log^* n)$ in the CONGEST model. Moreover, we show that unlike previous solutions (that require stabilization time of $\Omega(n)$), our algorithm works in the self-stabilizing setting, still with small messages, with stabilization time $O(\Delta + \log^* n)$. Moreover, our algorithm is also applicable to the more restricted Bit-Round \cite{KSOS06} model in which each vertex is only allowed to send 1 bit in each round over each edge. 

As a separate contribution, we devise a $(2\Delta - 1)$-edge-coloring algorithm for $n$-vertex oriented forests that requires $\log^* n + O(1)$ time, and applies to the CONGEST model. The currently existing solution to this problem that has this running time, due to Panconesi and Rizzi \cite{PR01}, employs messages of size $O(\Delta)$.
\\

%\subsubsection{SET-LOCAL Model}
{\noindent \bf 1.3.3 SET-LOCAL Model\\}
An additional application of our algorithms is in the SET-LOCAL model \cite{HKMS15} that represents restricted networks in which vertices do not have IDs (but start from a proper coloring), and are not capable to distinguish between identical messages received from different neighbors. Since our algorithms are locally-iterative and compute the next colors based only on sets of current colors of $1$-hop-neighborhoods, our algorithms are directly applicable to the SET-LOCAL model. Thus our algorithms compute proper $(\Delta + 1)$-coloring (and solve related problems) in $O(\Delta)$ time in the SET-LOCAL model starting from a proper $O(\Delta^2)$ coloring. The best previous algorithms in this model required $O(\Delta \log \Delta)$ time \cite{SV93,KW06,HKMS15}. 
A lower bound of $\Omega(\Delta^{1/3})$ for $(\Delta + 1)$-coloring in this setting was obtained by Hefetz et al. \cite{HKMS15}.

{\noindent \large \bf 1.3.4 Summary\\}
We believe that these applications demonstrate the power of locally-iterative coloring. Bypassing Szegedy-Vishwanathan barrier via a locally-iterative algorithm does not only provide a surprising answer to a quarter-century-old open problem, but also provides new precious insights into distributed coloring in general. We are confident that these insights will be instrumental in achieving further breakthroughs in this important field.\\

\section{Preliminaries} \label{sc:pr}
The function $\log^* n$ is the number of times the $\log_2$ function has to be applied iteratively starting from $n$, until we arrive at a number smaller than $2$.
The unique identity number (ID) of a vertex $v$ in a graph $G$ is denoted $id(v)$.
The diameter $Diam(G)$ of a graph $G = (V,E)$ is the maximum (unweighted) distance between vertices $u,v \in V$. The {\em arboricity} $a = a(G)$ of a graph $G = (V,E)$ is the minimum number of forests into which the edge set $E$ can be partitioned.
A {\em $d$-defective $p$-coloring} is a vertex coloring using $p$ colors such that each vertex has at most $d$ neighbors colored by its color. A {\em $b$-arbdefective $p$-coloring} is a vertex coloring using $p$ colors, such that each subgraph induced by vertices of the same color has arboricity at most $b$.
We employ the following important fact. For any integer $\Delta > 0$, there exists a prime $q$ in $[\Delta, 2\Delta]$.
This is due to Bertrand-Chebyshev postulate. See, e.g., Theorem 418 in \cite{hardy_write}.

\section{Additive-Group Coloring} \label{sc:ag}
\subsection{The Main Algorithm}
In this section we present our main algorithm that computes a proper $O(\sqrt{k})$-coloring from a proper $k$-coloring, where $k =\Omega(\Delta^2)$.
Consider a graph $G=(V, E)$ with a proper $k$-coloring $\psi$. For all vertices $v \in V$, we represent a color $\psi(v) = i$ by a pair $
\langle a_v,b_v \rangle$. We do it by finding a prime number $q, \sqrt {k} \leq q \leq 2\sqrt{k} $. %Such a prime number exists, by Bertrand-Chebyshev postulate \cite{hardy_write}.
The color $\psi(v) = i$ is represented by the following pair $\psi(v) = \langle \left \lfloor i/q \right \rfloor, i \bmod q \rangle$. %This change increases the number of colors at most by a small constant and the coloring remains proper. This only identifies each color in an "x,y" axis.
Our final goal is to eliminate the first coordinate, i.e., to change all nodes colors such that for every vertex $v \in V$, it will hold that $\psi(v) = \langle 0, b_v\rangle $, $0 \leq b_v < q$, and $\psi$ is a proper $q$-coloring. Our algorithm proceeds in iterations, starting from the initial coloring $\psi$. In each iteration colors may change, but the coloring remains proper. We employ the following definition.

\begin{dfn} Two neighbors $u,v$ in $G$ {\em conflict} with one another if and only if $\psi(v) = \langle a, b\rangle $ and $\psi(u) = \langle a', b\rangle $, where $0 \leq a,b,a' < q$.
\end{dfn}

Denote $\psi(v) = \langle a, b\rangle $. We will refer to $a$ as the first coordinate and to $b$ as the second coordinate.
Denote by $\psi_i(v)$ the color of $v \in V$ in round $i$.
Our algorithm starts from a proper $k = \Omega(\Delta^2)$ coloring of the input graph $G = (V,E)$. In each round the algorithm performs the following step, for $q$ rounds. For all $v \in V$ in parallel, if a node $v$ conflicts with a neighboring node $u$, then the new color of $v$ in the end of this round is $\psi_{i+1}(v) = \langle a, (b + a) \bmod q\rangle $. Otherwise (this means $v$ does not conflict with any neighbor), we set $\psi_{i+1}(v) = \langle 0, b\rangle $, and the color of $v$ becomes final and will not change anymore.\footnote[1]{Note, however, that a {\em finalized} vertex $v$, i.e., a vertex with $\psi_i(v) = \langle 0, b \rangle$, can keep running the same iterative step, and still its colors will stay unchanged.} This completes the description of the algorithm. Note that a node does not have to send its new color to all of its neighbors. Rather it is enough to send only one bit indicating whether its color became final or that it changed according to the rule specified above. We will use this property later. The pseudocode of the algorithm is provided below. (The pseudocode is for a specific vertex $v$ that runs this algorithm. All vertices run it in parallel.) Next, we prove correctness.

\clearpage

\begin{algorithm}
	\caption{Additive-Group Coloring}
	\begin{algorithmic}[1]
	
	  \STATE /* Initially, each vertex is aware of its own color and the colors of its neighbors */
		
		\FOR {round $i = 0,1,,... q$}
		
		\STATE let $\psi_i(v) = \langle a_v , b_v \rangle$ be the color of $v$ in iteration $i$ 
		
		\IF {not exists $ (v , u) \in E$ where $\psi_i(u) = \langle a_u , b_u \rangle$ with $b_u = b_v$ }
		
		\STATE $\psi_{i+1}(v) = \langle 0  , b_v \rangle$
		
		\STATE Send $0$ to all neighbors
		
		\ELSE

		\STATE $\psi_{i +1}(v) =  \langle a_v , (b_v + a_v)  \mbox{ mod } q \rangle $		
		
		\STATE Send $1$ to all neighbors
		
		\ENDIF
		
		\STATE Receive the bits sent by neighbors of $v$ and deduce the colors $\psi_{i+1}$ of these neighbors
		
		\ENDFOR
	\end{algorithmic}
\end{algorithm}

\begin{lem} \label{lem:propercol}
	For each iteration $i$, the coloring $\psi_i(G)$ is proper.
\end{lem}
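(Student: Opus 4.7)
The plan is to prove this by induction on the iteration index $i$. The base case $i=0$ holds by assumption, since $\psi_0 = \psi$ is given as a proper $k$-coloring of $G$. For the inductive step, I assume $\psi_i$ is proper and aim to show that for every edge $(u,v) \in E$, $\psi_{i+1}(u) \neq \psi_{i+1}(v)$. I would fix such an edge and write $\psi_i(u) = \langle a_u, b_u \rangle$, $\psi_i(v) = \langle a_v, b_v \rangle$, so by the induction hypothesis the pairs differ.

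The main step is a short case analysis based on whether $u$ and $v$ finalize in round $i+1$ (i.e., whether each has a conflicting neighbor in $\psi_i$). In the case that both finalize, neither has a conflict with the other, so $b_u \neq b_v$; their new colors $\langle 0, b_u \rangle$ and $\langle 0, b_v \rangle$ differ in the second coordinate. In the case that exactly one, say $v$, finalizes, the absence of any conflicting neighbor at $v$ again forces $b_u \neq b_v$; writing $\psi_{i+1}(u) = \langle a_u, (b_u + a_u) \bmod q\rangle$ and $\psi_{i+1}(v) = \langle 0, b_v \rangle$, these differ because either the first coordinates are distinct (when $a_u \neq 0$) or, when $a_u = 0$, the second coordinates reduce to $b_u$ and $b_v$. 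In the case where neither finalizes, the new colors are $\langle a_u, (b_u + a_u) \bmod q\rangle$ and $\langle a_v, (b_v + a_v) \bmod q\rangle$; distinctness in the first coordinate handles the subcase $a_u \neq a_v$, and when $a_u = a_v$ the induction hypothesis forces $b_u \neq b_v$, so the second coordinates $(b_u + a_u) \bmod q$ and $(b_v + a_u) \bmod q$ differ because $q$ is prime (in fact, any modulus suffices here since only subtraction is used).

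The only subtle obstacle I expect is the subcase in which a non-finalizing vertex happens to have $a_u = 0$; naively one might worry that its new color $\langle 0, b_u \rangle$ could collide with the finalized color $\langle 0, b_v \rangle$ of a neighbor, but the ``no conflict'' condition at the finalized endpoint precisely rules this out by guaranteeing $b_u \neq b_v$. Apart from this, the argument is a direct inspection of the three cases, and the proof would conclude by noting that the case analysis covers every edge of $G$, completing the induction.
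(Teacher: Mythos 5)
Your proof is correct, and it uses the same basic strategy as the paper (induction on $i$ plus a case analysis over one edge), but the case decomposition is organized differently and is arguably a bit cleaner. You split by finalization status of the two endpoints (both finalize / exactly one / neither), whereas the paper's proof first splits on whether $u$ and $v$ conflict with \emph{each other} in round $i$ (equal second coordinates or not) and then sub-splits on who had a conflict. Your approach handles the subtle $a_u = 0$ non-finalizing vertex head-on: you explicitly note that when the non-finalizing vertex has zero first coordinate, the distinctness has to come from the second coordinates, and you correctly source the needed $b_u \neq b_v$ from the no-conflict condition at the finalizing endpoint. The paper gets around the same issue more implicitly, by first disposing of the subcase where both new colors happen to have the form $\langle 0, \cdot \rangle$ before invoking the ``one has zero in the first coordinate, the other has not'' dichotomy. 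You also correctly observe that primality of $q$ is irrelevant for this lemma (any modulus $q$ larger than the coordinate range suffices, since the argument only ever uses cancellation of a common additive offset); primality is only needed for the later lemmas bounding how often two neighbors can conflict over $q$ rounds.
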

\begin{proof}
	The proof is by induction on $i$.\\
	{\bf Base: ($i = 0$)}:  holds trivially, since the initial coloring is proper.\\
	{\bf Step:} Assuming that in iteration $i$ the coloring is proper, we prove that in iteration $i + 1$ it is proper as well. If a color  of a node $v \in V$ is $\psi_i(v) = \langle a, b\rangle $, then for the next iteration the color is either $\psi_{i+1}(v) = \langle 0, b\rangle $ or $\psi_{i+1}(v) = \langle a, (b + a) \bmod q\rangle $.
	Consider an adjacent node $u$, i.e., $(u,v) \in E$. If $\psi_i(u) = \langle c,b\rangle $, where $0 \leq c < q$, then $c \ne a$, by the induction hypothesis. In this case, the new colors of the nodes will be $\psi_{i+1}(v) = \langle a, (b + a) \bmod q\rangle $ and $\psi_{i+1}(u) = \langle c, (b + c) \bmod q\rangle $ and since $c \ne a$ this means that the new colors of $u$ and $v$ are distinct. Otherwise, $\psi_i(u) = \langle c,d\rangle $, where $d \neq b$. If in iteration $i + 1$ it holds that $\psi_{i+1}(v) = \langle 0,b\rangle $ and $\psi_{i+1}(u) = \langle 0,d\rangle $, we are done since $b \neq d$. Otherwise, $u$ or $v$ had conflicts in iteration $i$. If exactly one of them had a conflict, then their colors in iteration $i + 1$ are distinct. (One of them has 0 in the first coordinate, while the other has not, in iteration $i + 1$.) It is left to consider the case that both had conflicts. Thus, $\psi_{i + 1}(v) = \langle a, (b + a) \bmod q\rangle $ and $\psi_{i+1}(u) = \langle c, (d + c) \bmod q\rangle $. If $a \neq c$, we are done. Otherwise, $a = c$ and $b \neq d$, because $\psi_i$ is proper. Thus, $b + a  \not\equiv  
	d + c \mbox{ } (\bmod \ q)$, and $\psi_{i+1}(v) \neq \psi_{i+1}(u)$.
\end{proof}
We say that a vertex is in a {\em working} stage as long as its color $\langle a,b\rangle $ satisfies $a \neq 0$. Once $a$ becomes $0$, the vertex is in the {\em final} stage.
In order to analyze the running time of the algorithm we observe in Lemmas \ref{lem:enoughcolors}, \ref{lem:enoughcolorsb}  and Corollary \ref{col:squarerootcoloring}, assuming that $q$ is sufficiently large, that a pair of neighbors can conflict at most twice in $q$ rounds. (Once in a working stage, and once in a final stage of one of the vertices.) Therefore, a vertex with less than $q/2$ neighbors will have a round out of $q$ in which it conflicts with no neighbor. In this round it will select a final color. Since $q > 2 \cdot \Delta$, all vertices in the graph will select a color within $q$ rounds. This is formalized in the following analysis.
	\begin{lem} \label{lem:enoughcolors}
		For $t \leq q$, suppose that our algorithm is executed for $t$ rounds, and consider two neighboring nodes $u,v$ in $G$ that are in their respective working stages during these entire $t$ rounds.
		Then $u, v$ have the same second coordinate in their colors in the same round $i$, $0 \leq i < t$ (that is, $\psi_i(u) = \langle a,b\rangle $ and $\psi_i(v) = \langle c,b \rangle$, for some $0 \leq a,b,c < q$) at most once during these $t$ consequent rounds.
	\end{lem}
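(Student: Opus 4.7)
The key observation is that during the working stage a vertex's first coordinate is frozen: if $\psi_i(v) = \langle a_v, b_v^{(i)} \rangle$ with $a_v \neq 0$, then on round $i{+}1$ either $v$ finalizes (first coordinate becomes $0$, contradicting our assumption that $v$ remains working throughout) or $v$ sets $\psi_{i+1}(v) = \langle a_v, (b_v^{(i)} + a_v) \bmod q \rangle$. Since by hypothesis $v$ stays in the working stage through all $t$ rounds, $v$ must have had some conflict on every round; therefore $a_v$ is constant and
$$b_v^{(i)} \equiv b_v^{(0)} + i \cdot a_v \pmod q \qquad \text{for every } 0 \le i < t,$$
and analogously for $u$. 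I stress that this uses only the fact that $u$ and $v$ are working; it does not require that they conflict with \emph{each other}.

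Plugging these formulas into the coincidence condition $b_u^{(i)} = b_v^{(i)}$ and rearranging, I obtain the single linear congruence
$$i \cdot (a_u - a_v) \equiv b_v^{(0)} - b_u^{(0)} \pmod q. \qquad (\ast)$$
I then split on whether $a_u = a_v$ or not.

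If $a_u = a_v$, then Lemma~\ref{lem:propercol} applied at round $0$ gives $\psi_0(u) \neq \psi_0(v)$, forcing $b_u^{(0)} \neq b_v^{(0)}$. The left side of $(\ast)$ is then $0$ while the right side is a nonzero residue modulo $q$, so no round $i$ in the interval satisfies it. If $a_u \neq a_v$, then since both $a_u, a_v$ lie in $\{1, 2, \ldots, q-1\}$ and $q$ is prime, $a_u - a_v$ is a nonzero element of $\mathrm{GF}(q)$ and hence invertible; $(\ast)$ then has a unique solution $i$ modulo $q$, so at most one solution in the range $\{0, 1, \ldots, t-1\}$ since $t \le q$. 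In either case the second coordinates coincide at most once, which is exactly the claim.

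The only subtle point in the argument is the very first step: one must justify that the affine recurrence for $b_u$ and $b_v$ applies in \emph{every} round of the window, not merely in rounds where $u$ and $v$ happen to conflict with one another. This is where the hypothesis ``working during the entire $t$ rounds'' is essential, together with the algorithm's rule that non-conflict immediately causes finalization. Beyond that, the proof is routine linear algebra in $\mathrm{GF}(q)$ and uses primality of $q$ in exactly one place, namely to invert $a_u - a_v$.
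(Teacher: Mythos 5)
Your proof is correct and follows essentially the same approach as the paper: both proofs reduce the conflict condition to a linear equation over $\mathrm{GF}(q)$ in the round index and invoke primality of $q$ to conclude there is at most one solution in a window of length $q$. The paper's version is slightly terser (it fixes a round $i$ at which a conflict occurs, so that properness at round $i$ immediately forces the first coordinates to differ, and then tracks the difference $(c-a)(j-i) \bmod q$), whereas you derive the closed-form $b^{(i)} = b^{(0)} + i\,a \pmod q$ from round~$0$ and handle the case $a_u = a_v$ explicitly. Your added observation --- that remaining in the working stage forces a conflict (hence an update $b \mapsto b + a$) in \emph{every} round, which is what makes the affine recurrence valid throughout the window --- is a useful bit of care that the paper leaves implicit.
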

	\begin{proof}
		Assume that in some iteration $i$ it holds that $\psi_i(u) = \langle a,b\rangle $ and $\psi_i(v) = \langle c,b\rangle $.
		For each of the following iterations $j = i + 1, i + 2,...$, the difference between the second coordinates is $(c - a) \cdot (j - i) \bmod q$. 
		Note that since $q$ is a prime and $a \ne c$ (since, by Lemma \ref{lem:propercol}, the coloring is proper in all iterations, and in particular, $\psi_i$ is a proper coloring), the equality $(c - a)(j - i) \bmod q = 0$ can only hold  when $(j - i) \bmod q = 0$, i.e., only after additional $q$ iterations.
		%Since $q$ is prime and $a \ne c$ this follows from prime fields algebra.   
	\end{proof}
	%A neighbor $v$ of $u$ that is not in the working stage is said to be {\em in the final stage}. 
	In the following lemma we complement Lemma \ref{lem:enoughcolors}.
	\begin{lem} \label{lem:enoughcolorsb}
		For $t \leq q$, suppose that our algorithm is executed for $t$ rounds, and consider two neighboring nodes $u,v$ in $G$, such that $u$ is in working stage and $v$ is in final stage during these entire $t$ rounds.
		Then $u, v$ have the same second coordinate in their colors in the same round $i$, $0 \leq i < t$ (that is, $\psi_i(u) = \langle a,b\rangle $ and $\psi_i(v) = (0,b)$, for some $0 \leq a,b < q$) at most once during these $t$ consequent rounds.
	\end{lem}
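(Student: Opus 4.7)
The plan is to adapt the argument used in the proof of Lemma \ref{lem:enoughcolors} to this more asymmetric setting. Since $v$ is in its final stage throughout all $t$ rounds, its color is $\psi_i(v)=\langle 0,b\rangle$ for a fixed value $b$ in every round $i$ in the interval; the only ``moving target'' we need to track is $u$'s second coordinate.

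First I would observe that because $u$ remains in the working stage throughout the $t$ rounds, the first coordinate of $u$ must be some fixed nonzero value $a$: the working-stage update rule preserves the first coordinate, and the only way for this coordinate to change is for $u$ to finalize, which by assumption does not happen during these $t$ rounds. Next I would note that $u$ must in fact have a conflict in each of the rounds $i=0,1,\ldots,t-2$. Indeed, if $u$ had no conflict in some round $j<t-1$, then by the algorithm's rule $u$ would become final at the end of round $j$, contradicting the assumption that $u$ is in the working stage in round $j+1$. Therefore the second coordinate evolves according to $b_u^{(i)}=b_u^{(0)}+i\cdot a\bmod q$ for every $0\le i<t$.

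The claim that $u$ and $v$ share a second coordinate in round $i$ then translates to the linear congruence
\[
i\cdot a \equiv b-b_u^{(0)} \pmod{q}.
\]
Because $q$ is prime and $a\neq 0$, the coefficient $a$ is invertible in $GF(q)$, so this congruence has a unique solution $i\bmod q$. Since $t\le q$, the window $\{0,1,\ldots,t-1\}$ contains at most one integer congruent to this solution modulo $q$, which yields the lemma.

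I do not expect any real obstacle: the only subtlety is the verification that $u$'s first coordinate is truly constant and that $u$ genuinely conflicts in every intermediate round, both of which follow directly from the ``working stage'' hypothesis together with the algorithm's finalization rule. The rest is the same invertibility-in-$GF(q)$ argument that drove Lemma \ref{lem:enoughcolors}.
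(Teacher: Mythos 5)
Your proof is correct and takes essentially the same route as the paper: both fix $v$'s color at $\langle 0,b\rangle$, track $u$'s second coordinate as an arithmetic progression $b_u^{(0)}+i\cdot a \pmod q$, and use that $a\ne 0$ is invertible in $GF(q)$ to get a unique conflict round in a window of length $t\le q$. Your explicit verification that $u$ must conflict in every round $i\le t-2$ (so the progression formula actually holds) makes a point the paper leaves implicit, but the underlying argument is identical.
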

	\begin{proof}
		Since $v$ is in final stage, its color does not change during these $t$ rounds. Indeed, it holds that $\langle 0,b\rangle  = \langle 0, (b + 0) \bmod q\rangle $. On the other hand, $u$ is in the working stage. If initially the color of $v$ is $\langle c,d\rangle $, for some $0 \leq c,d < q$, then in the following $t$ rounds it changes as follows: $\langle c, (d + c) \bmod q\rangle $, $\langle c, (d + 2c) \bmod q\rangle, \ldots,\langle c,(d + tc) \bmod q\rangle $. Since $q$ is prime, all these values of the second coordinate are distinct in the field of integers modulo $q$. In other words, the equality $d + xc  \equiv b \mbox{ } (\bmod \ q)$ holds for exactly one element $x$ of this field. Thus $v$ conflicts with $u$ at most once, in the round $i$ where $d + ic \equiv b \mbox{ } (\bmod \ q)$.
	\end{proof}
	\begin{col} \label{col:squarerootcoloring}
		Given a graph $G = (V, E)$ with a proper $k$-coloring, where $k = \Theta(\Delta^2)$, our Additive-Group Coloring algorithm produces a proper $O(\sqrt{k})$ coloring within $O(\Delta)$ rounds, each of which can be implemented via one-bit messages.
	\end{col}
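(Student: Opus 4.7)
The goal is to show that after running Algorithm AG for $T = q$ rounds, every vertex has finalized, so that the resulting proper coloring uses at most $q = O(\sqrt{k})$ colors. Since $k = \Theta(\Delta^2)$, the prime $q$ with $\sqrt{k} \le q \le 2\sqrt{k}$ satisfies $q = \Theta(\Delta)$, and by choosing the hidden constant in $k = \Theta(\Delta^2)$ sufficiently large we may assume $q > 2\Delta$. Hence $T = q = O(\Delta)$, matching the desired running time.

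Before the main argument I would record the easy observation that a finalized vertex remains finalized: if $\psi_i(v) = \langle 0, b_v\rangle$, then both update rules produce $\psi_{i+1}(v) = \langle 0, b_v\rangle$ (the ``conflict'' rule gives $\langle 0, b_v + 0\rangle$; the ``no-conflict'' rule gives $\langle 0, b_v\rangle$ directly). Consequently every vertex undergoes at most one stage transition over the whole execution, from working to final.

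Now suppose for contradiction that some vertex $v$ has not finalized by round $q$, i.e., $v$ is in working stage throughout rounds $0, 1, \ldots, q-1$. I would bound the number of conflicts in which $v$ participates during this window. Fix any neighbor $u$. By the observation above, the $q$-round window decomposes into a (possibly empty) prefix in which $u$ is still working and a (possibly empty) suffix in which $u$ is already final, each of length at most $q$. Lemma \ref{lem:enoughcolors} applies to the prefix (both $v$ and $u$ working) and yields at most one conflict there; Lemma \ref{lem:enoughcolorsb} applies to the suffix ($v$ working, $u$ final) and yields at most one conflict there. Hence $(u,v)$ is in conflict in at most $2$ rounds of the window. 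Summing over the at most $\Delta$ neighbors of $v$, the total number of (round, neighbor) conflict incidences involving $v$ is at most $2\Delta$, and in particular $v$ is in conflict with some neighbor in at most $2\Delta < q$ of the $q$ rounds. Therefore some round $i \in \{0, \ldots, q-1\}$ has $v$ in no conflict, forcing $v$ to finalize at round $i+1$, contradicting the assumption.

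It follows that every vertex finalizes within $q$ rounds, so its color has the form $\langle 0, b\rangle$ with $0 \le b < q$, giving at most $q = O(\sqrt{k})$ distinct colors in total; properness is inherited at every round from Lemma \ref{lem:propercol}. The main obstacle in the argument is the careful splitting of the window according to the unique stage transition of each neighbor $u$, so that Lemmas \ref{lem:enoughcolors} and \ref{lem:enoughcolorsb} can be invoked on disjoint sub-windows whose union covers the entire $q$-round range; once this is set up, the $2\Delta < q$ conflict budget is immediate.
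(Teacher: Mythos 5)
Your proposal is correct and follows the same approach as the paper: a vertex can conflict with any fixed neighbor at most twice over $q$ rounds (once working–working via Lemma~\ref{lem:enoughcolors}, once working–final via Lemma~\ref{lem:enoughcolorsb}), so with $q > 2\Delta$ some round is conflict-free and the vertex finalizes, after which $q \leq 2\sqrt{k} = O(\Delta)$ gives the round bound and the color bound. Your rendering via explicit prefix/suffix decomposition and pigeonhole is a cleaner writeup of exactly the paper's "a node cannot conflict with each neighbor more than twice" argument.
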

	\begin{proof}
		By Lemma \ref{lem:enoughcolors}, for $q > 2\Delta$, two adjacent nodes in the working stage (whose colors are not final) cannot conflict with one other more than once during the first $q$ rounds of the algorithm.  However, two adjacent nodes can also conflict if exactly one of them has selected a final color. Once this happens, it will conflict with its neighbor that is still in the working stage at most once during these $q$ rounds. (See Lemma  \ref{lem:enoughcolorsb}.) Since any node starts from a working state, and once the state transits to final its color does not change anymore, a node cannot conflict with each of its neighbors more than twice. Therefore, for each node, within $q > 2 \cdot \Delta$ rounds, there must be a round in which it does not conflict with any of its neighbors. Hence, all nodes will reach a final stage within $q$ rounds. Since $q \leq 2\sqrt k = O(\Delta)$, the statement about the running time of the corollary follows. 
Recall also that on every round, each vertex $v$ can update its neighbors regarding its new color via one-bit messages. These messages indicate whether $v$ finalized its color or not.

		A final color is of the form $\langle 0,b\rangle $, $0 \leq b < q$. Thus the number of employed colors is at most  $q = O(\sqrt{k})$.
	\end{proof}
%}
\begin{col} \label{col:propercoloring}
	Any graph $G = (V, E)$ can be colored with $\Delta + 1$ colors within $O(\Delta) + \log^* n$ rounds, by a locally-iterative algorithm.
\end{col}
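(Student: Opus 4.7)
The plan is to chain three locally-iterative procedures, each bringing the coloring one step closer to $\Delta + 1$ colors. Starting from the trivial proper coloring given by vertex IDs, first run Linial's algorithm \cite{L87} for $\log^* n + O(1)$ rounds to obtain a proper $k$-coloring with $k = \Theta(\Delta^2)$; by adjusting Linial's constant we may assume $k$ is large enough that Corollary \ref{col:squarerootcoloring} applies. Linial's procedure is locally-iterative (intermediate colorings are proper, and each vertex's update is a function of $1$-hop neighborhood colors). Then invoke Corollary \ref{col:squarerootcoloring} on this $k$-coloring: after $O(\Delta)$ more rounds we obtain a proper $q$-coloring, where $q$ is a prime in $[\sqrt{k}, 2\sqrt{k}]$, so $q = O(\Delta)$ and $q > 2\Delta$.

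The third phase drops from $q$ colors to $\Delta + 1$ via a simple ``peel-the-local-maxima'' rule applied in every round: if $v$'s current color $c$ satisfies $c > \Delta$ and strictly exceeds each neighbor's color, then $v$ recolors to the smallest element of $\{0, 1, \dots, \Delta\}$ that no neighbor currently uses; otherwise $v$ keeps its color. Such a free color always exists because at most $\Delta$ of the $\Delta + 1$ candidates are blocked. Two vertices that update simultaneously with the same old color must be non-adjacent by properness, so the parallel updates introduce no new conflict and the coloring stays proper. Let $M_r$ be the maximum color present after $r$ rounds of this phase; whenever $M_r > \Delta$, every vertex holding $M_r$ is locally maximal (properness forces smaller colors on its neighbors) and therefore updates to a color $\leq \Delta$, so $M_{r+1} \leq M_r - 1$. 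Hence after at most $q - 1 - \Delta = O(\Delta)$ rounds we have a proper $(\Delta + 1)$-coloring.

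The three phases compose into a single locally-iterative algorithm, because the definition in Section 1.1 permits the update rule to depend on the global round index while still requiring each new color to be a function of the current $1$-hop neighborhood colors. The total running time is $(\log^* n + O(1)) + O(\Delta) + O(\Delta) = O(\Delta) + \log^* n$. The main technical point is that Corollary \ref{col:squarerootcoloring} cannot by itself deliver exactly $\Delta + 1$ colors---its analysis inherently needs $q > 2\Delta$---so Phase 3 is essential to close this remaining constant-factor gap, and the real challenge is to do so without leaving the locally-iterative framework; the peel-off rule above achieves this cleanly.
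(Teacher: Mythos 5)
Your proposal is correct and follows the same three-phase route as the paper: Linial's $\log^* n + O(1)$-round reduction to $O(\Delta^2)$ colors, the Additive-Group algorithm of Corollary~\ref{col:squarerootcoloring} down to $O(\Delta)$ colors in $O(\Delta)$ rounds, and a final $O(\Delta)$-round reduction to $\Delta+1$ colors. Where the paper simply invokes ``the standard color reduction'' (citing \cite{BE13}, Chapter~3) and notes it is locally-iterative, you make this step explicit with the peel-the-local-maxima rule; that is a valid instantiation and gives the same bound. One small point of care: your properness argument for the third phase (``two vertices that update simultaneously with the same old color must be non-adjacent'') is phrased as if vertices of one color class move per round, but the cleaner and fully covering observation for your rule is that two \emph{adjacent} vertices can never both be local maxima, so at most one endpoint of any edge recolors in a given round, and it chooses a color distinct from all of its (static) neighbors' colors; the conclusion $M_{r+1}\le M_r-1$ then goes through exactly as you state.
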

\begin{proof} Running Linial's algorithm \cite{L87} on the input graph $G = (V, E)$ will produce a coloring $\varphi(G)$ using $O(\Delta^2)$ colors within $\log^*n + O(1)$ rounds. (Recall that Linial's algorithm is locally-iterative.)\\
	At the second stage we run our Additive-Group algorithm on $\varphi(G)$. This results in a new proper coloring $\psi(G)$ that employs $O(\Delta)$ colors. Computing the coloring $\psi$ from $\varphi$ requires $O(\Delta)$ rounds, by Corollary \ref{col:squarerootcoloring}.
	At the last stage we reduce the number of colors to $\Delta + 1$ using the standard color reduction. This also requires $O(\Delta)$ time. Note that the standard color reduction is a locally-iterative algorithm as well.
	Therefore, the overall running time is $\log^*n + O(1) + O(\Delta) + O(\Delta) = O(\Delta + \log^* n)$.
\end{proof}

\subsection{Halving the Number of Colors using 1-Bit-Messages per Round}
\label{sc:smbit}

In this section we devise a more bit-efficient algorithm than the algorithm presented in the previous section. Specifically, while the Additive-Group coloring stage requires just 1 bit per edge per round, the standard color reduction performed in the last stage may require $O(\log\Delta)$ bits for color updates for each round. We devise an improved method that requires messages of just $1$ bit. Specifically, we devise an algorithm reducing the number of colors from $O(\Delta^2)$ to $\Delta + 1$ within $O(\Delta)$ rounds using messages of $1$ bit per edge per round.
Consequently, the overall bit complexity of the $(\Delta + 1)$ coloring algorithm is $O(\log{n} + \Delta)$ in the one bit model.

In this algorithm there is no need for a prime parameter, but rather any integer greater than $\Delta$ will do. Given a graph with a proper $k$-coloring, $k \geq 2\Delta + 2$, we set $q = \lceil  \frac{k}{2} \rceil$, where $q \ge \Delta +1 $, and produce a proper $q$-coloring.
Initially, each color $c$, $0 \leq c < k$, is represented as an ordered pair:
$\langle \lfloor  c / q \rfloor, c \bmod q  \rangle$. Note that $\lfloor  c / q \rfloor \in \{0,1\}$. The pseudocode is provided below.

\begin{algorithm} 
	\caption{One-bit AG halving reduction} \label{alg:halvingalg}
	\begin{algorithmic}[1]
	
	  \STATE /* Initially, each vertex is aware of its own color and the colors of its neighbors */
		
		\FOR {round $i = 0,1,..., \Delta + 1$}
		
		\STATE let $\psi_i(v) = \langle a_v , b_v \rangle$ be the color of $v$ in iteration $i$ \ \ \ // $a_v \in \{0 , 1\}$
		
		\STATE $\forall v \in V$ such that $\psi_i(v) = \langle 1 , b_v \rangle$ in parallel do:
		
		\IF {not exists $ (v , u) \in E$ where $\psi_i(u) = \langle 0 , b_v \rangle $}
		
		\STATE $\psi_{i+1}(v) = \langle 0  , b_v \rangle$
		
		\STATE Send $0$ to all neighbors
		
		\ELSE

		\STATE $\psi_{i +1}(v) =  \langle 1 , b_v + 1  \mbox{ mod } q \rangle $		
		
		\STATE Send $1$ to all neighbors
		
		\ENDIF
		
		\STATE Receive the bits sent by neighbors of $v$ and deduce the colors $\psi_{i+1}$ of these neighbors
		
		\ENDFOR
	\end{algorithmic}
\end{algorithm}

We analyze the algorithm using the following lemmas.

\begin{lem} \label{lem:proper}
	Given an arbitrary graph $G = (V, E)$ with a proper $k \ge 2 \Delta +2$ coloring, one-bit AG halving reduction preserves a proper coloring of the input graph in every round.
\end{lem}
\begin{proof}
	Assume that in iteration $i$ the coloring is proper. Therefore, for every edge $(u,v) \in E$, we have $\langle a_u, b_u \rangle = \psi_i(u) \ne \psi_i(v) = \langle a_v, b_v \rangle$. In iteration $i + 1$ there are 2 possibilities.
	
	Case 1: $\psi_{i+1}(v) = \langle 0  , b_v \rangle$, and this means that $\psi_i(u) \ne \langle 0 , b_v \rangle$, since in this case $\psi_i(v)$ is either $\langle 0 , b_v \rangle$ or $\langle 1, b_v \rangle$. Moreover, this means that $\psi_{i+1}(u)$ cannot become $\langle 0, b_v \rangle$ during this iteration. Thus, $\psi_{i+1}(u) \neq \psi_{i + 1}(v)$.
	
	Case 2: $\psi_{i +1}(v) =  \langle 1 , b_v + 1  \mbox{ mod } q \rangle $. From the proper coloring assumption we know that if $\psi_{i}(u) =  \langle 1 , b_u \rangle $ then $\psi_i(v)$ is $\langle 1, b_v \rangle$ with $b_v \neq b_u$. Therefore, either $\psi_{i + 1}(u) = \langle 1, b_u + 1 \mbox{ mod } q \rangle \neq \psi_{i+1}(v)$ or $\psi_{i + 1}(u) = \langle 0, b_u \rangle \neq \psi_{i + 1}(v)$. On the other hand, if $\psi_i(u) = \langle 0, b_u\rangle$, then $\psi_{i+1}(u) = \langle 0, b_u \rangle$ as well, and again $\psi_{i+1}(u) \neq \psi_{i+1}(v)$.
\end{proof}

Next we show that Algorithm \ref{alg:halvingalg} actually halves the palette within $(\Delta + 1)$ rounds.

\begin{lem} \label{lem:enoughcolorsonebit}
	Given any graph $G = (V, E)$ with a proper $k \ge 2\Delta + 2$ coloring, One-bit AG halving reduction will cause every node to have a final color in the range $\{0,1,...,q-1\}, q = \lceil k/2 \rceil$, after $\Delta +1$ rounds.
\end{lem}
\begin{proof}
	Note that a node $u$ can conflict with another node $v$ in One-bit AG halving reduction if $\psi(u) = \langle 0 , b_v \rangle$ and $\psi(v) = \langle 1 , b_v \rangle$. After that these nodes may conflict again only once $q$ additional rounds have passed. Therefore, within $q$ rounds, a node can have a conflict at most once with every adjacent node. Thus, if $q \ge \Delta + 1$, from the pigeonhole principle there will always be a round where $v$  finalizes its color.   
\end{proof}

	Now we discuss the scenario when $(\Delta + 1)$-coloring is computed from scratch. To this end, Linial's algorithm is executed first. In each of its $O(\log^* n)$ rounds, vertices exchange their colors in that round with their neighbors. The ranges of colors in round $1,2,3,...$ are $O(n)$, $O(\Delta \log n)^2$, $O(\Delta \log \log n)^2$,..., respectively. Consequently, the bit complexity per edge is $O(\log n + \log \Delta + \log \log n + \log \Delta + \log \log \log n + ....) = O(\log n + \log \Delta \cdot \log^* n)$. Note that $\log \Delta \log^* n \leq O(\log n + \Delta)$. We summarize this in the next corollary.
	
	\begin{col} \label{col:graphd2col}

	Coloring any input graph properly with $\Delta + 1$ colors can be computed within $O(\Delta + \log{n})$ rounds in the one-bit model. Moreover, obtaining a $(\Delta + 1)$-coloring from $O(\Delta^2)$-coloring in this model requires $O(\Delta)$ rounds.

	\end{col}

The last assertion of Corollary \ref{col:graphd2col} follows from Corollary \ref{col:squarerootcoloring} and Lemma \ref{lem:enoughcolorsonebit}. 

\subsection{Computing $O(\Delta \cdot k)$ Coloring within $O(\Delta / k)$ Rounds} \label{sc:re}
In this section we describe a minor change in AG algorithm that applies to the CONGEST, LOCAL and SET-LOCAL models. (It will not apply to the one-bit model). This way, a faster computation is performed, in the expense of increasing the number of colors.  Specifically, for an integer $k$, such that $1 \leq k < \Delta$, we compute $O(\Delta \cdot k)$-coloring within $O(\Delta / k)$ rounds, starting from an $O(\Delta^2)$-coloring.
The change we suggest is to use triplets instead of ordered pairs for representing colors.

We provide the pseudocode of the algorithm below. (See Algorithm \ref{alg:refine}.) Next, we analyze the algorithm.
The algorithm starts with a proper $O(\Delta^ 2)$ coloring, where each color is represented by a triplet $\langle a_v,b_v,c_v \rangle$, such that $a_v,b_v \in \{0,1,....,q - 1\}, q = O(\Delta)$, $c_v = 0$, where $q$ is a prime. During an execution the colors change, but it always holds that $0 \leq a_v, b_v < q$ and $0 \leq c_v < k < \Delta < q$. The pseudocode describes the steps performed in a single round. The same steps are executed in  every round $i = 0,1,2,...$.

\begin{algorithm}
	\caption{Refine-AG} \label{alg:refine}
	\begin{algorithmic}[1]
		
		\STATE Let $\psi_i(v) = \langle a_v,b_v,c_v \rangle $ be the current color // $0 \leq c_v < k$, initially $c_v = 0$
		
		// Invariant: $a_v = 0$ or $c_v = 0$
		
		\IF {$a_v \neq 0$}
		\IF {exists an index $j, 0 \leq j < k$, such that the following two conditions hold:

			\STATE 1. for all neighbors $u$ of $v$ with $a_u \neq 0$:
			
			\STATE $ (b_v + j \cdot a_v) \bmod q  \neq  (b_u + j  \cdot a_u) \bmod q $
			
			\STATE and
			
			\STATE 2. for all neighbors $u$ of $v$ with $a_u = 0$:
			
			\STATE $\langle (b_v + j \cdot a_v) \bmod q, j \rangle \neq	\langle b_u , c_u \rangle $
			
		}	 		
		
		\STATE $\psi_{i + 1}(v) = \langle 0, (b_v + j \cdot a_v) \bmod q, j \rangle$
		
		\ELSE 
		
		\STATE $\psi_{i + 1}(v) = \langle a_v, (b_v + k  \cdot a_v) \bmod q, 0 \rangle$
		\ENDIF
		
		\ENDIF		
		
		\IF {$a_v = 0$}
		
		\STATE $\psi_{i + 1}(v) = \psi_{i}(v)$
		
		\ENDIF
		
		\STATE send $\psi_{i + 1}(v)$ to all neighbors of $v$
		
		\STATE receive the colors $\psi_{i + 1}$ of all neighbors of $v$
		
	\end{algorithmic}
\end{algorithm}

%Next, we show that this procedure maintains a proper coloring in each round.
Next we argue that the algorithm maintains a proper coloring throughout its execution.

\begin{lem} \label{lem:refineproper}
	Given an arbitrary graph $G = (V, E)$ with a proper $O(\Delta ^ 2)$ coloring, Refine-AG produces a proper coloring after every round.
\end{lem}
\begin{proof}
	The proof is by induction on the number of rounds/iterations.\\ {\bf Base:} The initial coloring is proper.\\
	{\bf Step:}	We assume that in iteration $i$ the coloring is proper. Next, we show that it is also proper in iteration $i + 1$. Fix a vertex $v$. For a positive integer $x$, we denote the values of $a_v,b_v,c_v$ in iteration $x$ by $a_{v_x},b_{v_x},c_{v_x}$, respectively, i.e., $\psi_x(v) = \langle a_{v_x},b_{v_x},c_{v_x} \rangle$. If line 9 of the algorithm was executed then $\psi_{i + 1}(v) = \langle 0, (b_{v_i} + j \cdot a_{v_i}) \bmod q, j \rangle$, for some index $j$, $0 \leq j < k$. Since $a_{v_{i+1}} = 0$, $\psi_{i+1}(v)$ cannot be equal to the chosen colors in iteration $i + 1$ of any of $v$'s neighbors $u$ that executed line 11, simply because $a_{u_{i+1}} \neq 0$. Thus, assume that another neighbor $u$ executed line 9 and caused a conflict. This means that both nodes have the same index $j$, and
	$ (b_{v_{i}} + j \cdot a_{v_{i}}) \bmod q = (b_{u_{i}} + j \cdot a_{u_{i}}) \bmod q $. 
	%or $\langle (b_{v_{i}} + j \cdot a_{v_{i}}) \bmod q, j \rangle = \langle b_{u_i}, c_{u_i} \rangle$. (The former case is when $a_{u_i} \neq 0$, and the latter when $a_{u_i} = 0$.) 
	But this is impossible, since the if statement in lines 3-5 prevents it.
	%and when $a_{u_{i}} = 0$ the if statement in lines 7-8 prevents it.
	
%	Now, assume that the other node $u$ executed line 11 and caused a conflict, i.e., the colors of $u$ and $v$ in round $i + 1$ become equal. Recall that $v$ executed line $9$. Thus $a_{v_{i + 1}} = 0$  and $a_{u_{i + 1}} \neq 0$, and the colors are not equal. This is a contradiction.
	It is left to analyze the case that both neighbors execute line 11. This means that $a_{v_{i+1}} \neq 0$ and $a_{u_{i+1}} \neq 0$. Thus $u$ and $v$ have not executed line 9 before. The values of $c_u$ and $c_v$ can become non-zero only in line 9. Therefore, $c_{u_i} = c_{v_i} = 0$. Hence, if $\langle a_{v_i}, (b_{v_i} + k  \cdot a_{v_i}) \bmod q, 0 \rangle = \langle a_{u_i}, (b_{u_i} + k  \cdot a_{u_i}) \bmod q, 0 \rangle$, then $\langle a_{v_i}, b_{v_i}, c_{v_i} \rangle = \langle a_{v_i}, b_{v_i}, c_{v_i} \rangle$. This is a contradiction to the correctness of the coloring in round $i$.
%	Therefore $a_{u_{i}} = a_{v_{i}}$ and from the proper coloring assumption $b_{u_{i}} \ne b_{v_{i}}$ so also 
%	$(b_v + k \cdot a_v) \bmod q \ne (b_u + k \cdot a_u) \bmod q$, and this contredicts the assumption that the color is not proper.
\end{proof}

The next lemma helps us to show that only a bounded number of conflicts can occur throughout the execution of Algorithm \ref{alg:refine}.
\begin{lem} \label{lem:refine_conflict_iteration}
	For $(u,v) \in E$ with $a_u \neq 0, a_v \neq 0$, in each round there can be at most one index $j$,
	%that can produce the same triplet 
	such that  $ (b_v + j \cdot a_v) \bmod q = (b_u + j \cdot a_u) \bmod q$. 
\end{lem}
\begin{proof}
Assume for contradiction that there are two indices $j_1 > j_2$, such that \\
(1) $(b_v + j_1 \cdot a_v) \bmod q = (b_u + j_1 \cdot a_u) \bmod q$\\ and \\
 (2) $(b_v + j_2 \cdot a_v) \bmod q = (b_u + j_2 \cdot a_u) \bmod q$.\\
By subtracting $(2)$ from $(1)$ we get  $(j_1 - j_2)(a_v - a_u) \equiv 0 ( \mbox{ mod } q)$. 
Since $0 < j_1 - j_2 < k < q$, it follows that $j_1 - j_2 \not\equiv 0$ $ (\mbox { mod } q)$, and thus $a_u \equiv a_v$ $ ( \mbox{ mod } q)$. \\
\noindent Then, from (1) it follows that $b_u \equiv b_v$ $( \mbox{ mod } q)$. But this means that $\langle a_u, b_u , c_u \rangle = \langle a_v, b_v , c_v\rangle$, since $a_u \neq 0, a_v \neq 0$ implies $ c_u = c_v = 0$. However, this is a contradiction to the correctness of the coloring in each round. 
%	Assume that there are two adjacent nodes conflicts in index $j_1$ and $j_2$ (without the loss of generality $j_1 < j_2$). In other words $\langle j_1, b_v + (a_v \cdot j_1)  \mbox{ mod } p \rangle = \langle j_1, b_u + (a_u \cdot j_1)  \mbox{ mod } p \rangle$ I.e. $\langle j_2, b_v + (a_v \cdot j_2)  \mbox{ mod } p \rangle = \langle j_2, b_u + (a_u \cdot j_2)  \mbox{ mod } p \rangle$. This means that in index $j_1$ the difference between the second coordinate is 0 ,I.e. $b_v + (a_v \cdot j_1) \bmod q = b_u + (a_u \cdot j_1) \bmod q$  .The situation with index $j_2$ is analogous. This implies $(a_u - a_v) \cdot (j_2 - j_1)= 0$ and it could only happen if $k > q$. Note that in case where $a_u = a_v$ from the proper coloring assumption $b_u \neq b_v$ and therefore, no index can generate conflict. 
\end{proof}
We say that a node $u$ {\em conflicts} with its neighbor $v$, if
$a_v \neq 0$, and there exist an index $j$, $0 \leq j \leq k$, such that either ($a_u \neq 0$ and $b_v + j \cdot a_v  \equiv b_u + j \cdot a_u (\bmod q)$) , or ($a_u = 0$ and $ \langle b_u + j \cdot a_v (\bmod q), j \rangle = \langle b_u, c_u \rangle$).
%%% $v$ caused $u$ to execute line 11 of Refine-AG. 
Next, we analyze how many times a node $u$ can conflict with a neighbor $v$ during an execution of $O(\Delta/k)$ rounds of Refine-AG.
\begin{lem} \label{lem:refine_conflict_algorithm}
	A node $u \in V$ can conflict with a neighbor $v$ of $u$ at most twice during $\left \lfloor q/k \right \rfloor - 1$ rounds of Refine-AG.
\end{lem}
\begin{proof} If $a_u = a_v \neq 0$, then $b_u \neq b_v$. Then for any index $j$, $0 \leq j \leq k$, we have $b_v + j \cdot a_v  \not\equiv b_u + j \cdot a_u (\bmod \ q)$,
 and no conflict occurs. 
Consider now the case that $a_v \neq 0$ and $a_u \neq 0$, and $a_v \neq a_u$. As long as both $v$ and $u$ keep being in non-final states (having their first coordinates  different from 0), we argue that once a conflict between them occurs, the  next conflict between them can happen only after at least $q/k - 1$ rounds.
Indeed, if $v$ and $u$ as above conflict at a certain round, it means that there exists an index $j$, $0 \leq j < k$, such that $b_v + j \cdot a_v  \equiv b_u + j \cdot a_u (\bmod \ q)$. On each of the subsequent rounds (as long as both $v$ and $u$ did not finalize their colors), we will have their respective second coordinate  $b_v$ and $b_u$ increase by $k \cdot a_v$ and by $k \cdot a_u$, respectively. As a result, if a conflict occurs again after some $h$ rounds, for an integer $h \geq 1$, then we have: $(b_v + k \cdot h \cdot a_v) + j' \cdot a_v  \equiv (b_u + k \cdot h \cdot a_u) + j' \cdot a_u) (\bmod \ q)$, for some $j', 0 \leq j' < k$.
Denote $z \equiv b_v + j\cdot a_v \equiv b_u + j\cdot a_u (\bmod \ q)$. we have
\begin{equation} \label{eq:pr}
z + (k \cdot h + j'-j) \cdot a_v \equiv  z + (k \cdot h + j'-j) \cdot a_u (\bmod \ q). 
\end{equation}
As $0 \leq j, j' \leq q$,
 for $h < q/k - 1$, we have $k \cdot h + (j'-j) < k (h +1) < q$. Thus, equality in equation (\ref{eq:pr}) can only happen if $a_v \equiv a_u (\bmod \ q)$. This is, however, a contradiction.
Thus, during rounds indexed $h$ with $0 \leq h \leq q / k -1$, at most one conflict can occur between non-finalized vertices  $v$ and $u$.

A conflict can also occur if $u$ is in non-final state and $v$ is in a final state.
i,e, $\langle b_u + j \cdot a_v, j \rangle = \langle b_v, c_v \rangle$, with $a_u \neq 0$ and $a_v = 0$. Observe that from that point on, the vertex $v$ will not change its color, and thus a conflict can occur only with the same index $j$.
On every round (as long as $u$ does not finalize), $k \cdot a_u$ is added to $b_u$.
Hence for both $b_u + j \cdot a_u$ and $b_u + k \cdot h \cdot a_u + j \cdot a_u$ to be equal to the same value $b_v$ (in $Z_q$), we must have $k \cdot h \geq q$. Hence within $q / k - 1$ rounds, at most one conflict of this kind can occur.
Thus, overall $v$ and $u$ may be in conflict at most twice, during $q/k - 1$ rounds.
%%%Otherwise, once a conflict of $u$ with $v$ occurs, the next conflict is going to occur after at least $q / k$ rounds, as long as both $u$ and $v$ are in non-final states, i.e., $a_u \neq 0, a_v \neq 0$. This is because in each round the second coordinate of $u$ advances $k$ times by a value of $a_u$, while the second coordinate of $v$ advances $k$ times by a value of $a_v$. A conflict can also occur if $u$ is in non-final state and $v$ is in a final state. But this can happen only once.
\end{proof}

We now ready to summarize the properties of Algorithm \ref{alg:refine} (Procedure Refine-AG).
\begin{col}
	Refine-AG produces a proper coloring using $O(\Delta \cdot k)$ colors within $O(\Delta / k)$ rounds, starting from an $O(\Delta^2)$-coloring.
\end{col}
\begin{proof}
Consider the total number of pairs $(u, R)$, where $u$ is a neighbor of $v$ that conflicts with $v$ on round $R$. Denote this number by $N$. We have $N \leq 2 \cdot \Delta$, but also $N \geq k \cdot h$.
The former inequality is because every neighbor can belong to at most two such pairs, as long as the number of rounds on which the color of $v$ did not finalize satisfies $h \leq q/k -1$. The latter inequality assumes that in each of the $h$   rounds, the vertex $v$ had at least $k$ conflicts, and thus did not finalize. Thus $h \leq 2 \cdot \Delta / k$. In fact, we run the algorithm for one more round, i.e, for $\lceil 2 \cdot \Delta / k \rceil + 1$ rounds, to ensure that there will be a round in which there exists an index $j \in [0, k - 1]$ for which $v$ has no conflict.
We select $q$ to satisfy $\lceil 2 \cdot \Delta / k \rceil + 1 \leq q / k - 1$, i.e., $2 \cdot \Delta / k + 3 \leq q / k$. Hence we set $q \geq 2 \cdot \Delta + 3 \cdot k$. This guarantees that every vertex finalizes within $\lceil 2 \cdot \Delta / k \rceil + 1$ rounds.

%%%	From Lemma \ref{lem:refineproper} - Lemma \ref{lem:refine_conflict_algorithm} we observe that in order for a node not to select a final color in a certain round, it must conflict with at least $k$ neighbors in that round. Moreover, within $\left \lfloor q/k \right \rfloor$ rounds, this node can conflict with a certain neighbor only twice. Let $q > 2\Delta + k$ be a prime number. By the pigeonhole principle, the number of rounds out of $\left \lfloor q/ k \right \rfloor $ in which a node conflicts with $k$ different neighbors is at most $2 \cdot \Delta / k < \left \lfloor q/k \right \rfloor$.  Therefore, within $\lceil 2 \cdot \Delta / k + 1 \rceil$ rounds each node must find a final color.
\end{proof}

To implement this algorithm using bit-messages we can send on every round a single bit indicating if the vertex $v$ (that runs the algorithm) finalizes or not, and if it does finalize, we append the value of $j$ with which $v$ finalizes to the message. Overall, the algorithm requires every vertex to send $O(\Delta/k)$ bit-messages and one single message of size $O(\log k)$. Thus, the algorithm can be implemented in $O(\Delta/ k \cdot \log k)$ bit rounds.

\section{Fully-Dynamic Self-Stabilizing algorithms with $O(\Delta + \log^*n)$ rounds}
{\noindent \large \bf 4.1 Fully-Dynamic Self-Stabilizing $(\Delta + 1)$-Coloring\\}
In this section we employ a variant of Linial's algorithm for $O(\Delta^2)$-coloring that allows a vertex $v$ to avoid being colored by colors from a given set $R(v)$ of size at most $O(\Delta)$ 
\cite{B15}. (This is useful when selecting a new color, to avoid collisions with some neighbors that have already obtained final colors.) We refer to this algorithm as Algorithm Excl-Linial. 

	Algorithm Excl-Linial is identical to Linial's original algorithm, except for the final stage that transforms a proper $O(\Delta^3)$-coloring into a proper $O(\Delta^2)$-coloring. In this stage each vertex $v$ computes a polynomial $P_v(x)$ of degree $2$ in a field of size $O(\Delta)$, and selects a color $\langle x,P_v(x) \rangle$, such that $\langle x,P_v(x) \rangle \neq \langle y,P_u(y) \rangle $, for any neighbor $u$ of $v$ and any $y$ in that field. Since the degree of the polynomials in this stage is $2$, each polynomial intersects with a neighboring node's polynomial in at most two points. Hence, there are at most $2\Delta$ points on $P_v$ that may intersect with some neighbor. If the field is of size at least $2\Delta + 1$, there must be a point such that $\langle x,P_v(x) \rangle \neq \langle y,P_u(y) \rangle$ for all neighbors $u$ of $v$ and all elements $y$ in the field. Such a pair is selected by the original algorithm of Linial. In the modified variant, on the other hand, the field is of size greater than $3\Delta$.
	Consequently, if a set $R(v)$ of at most $\Delta$ forbidden colors is provided, there still exists an element $x$ in that field, such that $\langle x,P_v(x) \rangle$ is not equal to any of the colors in the set $R(v)$, and neither to any $\langle y,P_u(y) \rangle$, for a neighbor $u$ and an element $y$. Such a color is selected as a final color. Thus, we obtain an $O(\Delta^2)$-coloring, where all colors belong to sets that exclude $O(\Delta)$ colors each, within $\log^* n + O(1)$ time.
More generally, if a forbidden set of colors is of size $c \cdot \Delta$, for some constant $c  > 0$,  then Algorithm Excl-Linial works in the same way, but uses a field of size at least $(c + 2) \cdot \Delta$.	
	This completes the description of algorithm Excl-Linial. %The resulting algorithm will be henceforth referred to as {\em Procedure Mod-Linial}. 
\\
\indent Before describing our self-stabilizing algorithm, we define some notation, and describe yet another useful variant of Linial's algorithm, which we call Algorithm Mod-Linial.
%Our self-stabilizing algorithm proceeds as follows.
Let $r = \log^* n + O(1)$ denote the number of iterations in Linial's algorithm. Let $t_r = O((\Delta \log{n})^2), t_{r - 1} = O((\Delta (\log \Delta + \log \log{n})^2), ..., t_1 = O(\Delta^2) $ denote upper bounds on the number of colors in the different iterations of Linial's algorithm. 
%(And thus, an upper bound on the number of colors in all iterations of Linial's algorithm).
Define the intervals $I_0,I_1,I_2,...$ as follows. $I_0 = [0, t_1 - 1], I_1 = [t_1 ,t_1 + t_2 - 1],...,
I_{r - 1} =  [ \ \sum_{i = 1}^{r-1} t_i, \ \sum_{i = 1}^r t_i - 1], $ %[t_1 + t_2 + ... + t_{r - 1}, \ t_1 + t_2 + ... + t_r - 1]$, 
$I_r =  [ \ \sum_{i = 1}^{r} t_i, \ \sum_{i = 1}^r t_i + n - 1].$
%[t_1 + t_2 + ...  + t_r, \ t_1 + t_2 +... + t_r  + n -1]$.
%where $r = O(\log^* n)$ is the number of iterations in Linial's algorithm. 
Since each such interval contains a sufficient number of colors, we can map each color palette of each iteration of Linial's algorithm to one of the intervals defined above. Specifically, the palette of the first iteration is mapped to $I_{r - 1}$ (which is of size $t_r$), the palette of the second iteration is mapped to $I_{r -2}$ (which is of size $t_{r - 1}$), and so on, up to the last palette that is mapped to $I_0$. This way Linial's algorithm is modified, so that in each iteration $i = 1,2,...,r$ a coloring using a palette $I_{r -i + 1}$ is transformed into a coloring using the palette  $I_{r  -i}$. (The actual number of colors used from this palette is $O((\Delta \log^{(i)} n)^2)$.) The modified algorithm will be referred to as Mod-Linial.
It accepts as input a color of a vertex $v$, a (sub)set of its neighbors colors, and a set of $O(\Delta)$ forbidden colors, and returns a new color for $v$.
The range $I_r =  [ \ \sum_{i = 1}^{r} t_i, \ \sum_{i = 1}^r t_i + n - 1]$ will be used for an initial $n$-coloring obtained from IDs.

Observe that the idea described above in algorithm Excel-Linial, can be easily incorporated into algorithm Mod-Linial as well. Specifically, on each iteration $i = 1,2,...,r$ of algorithm Mod-Linial, every vertex $v$ evaluates a polynomial $P_v$. Consequently, two polynomials $P_v$ and $P_u$ of neighboring vertices $v$ and $u$ may agree in at most a certain pre-determined number of values. The polynomials are over the field $GF(q)$, for $q = O(\Delta)$ being a prime (characteristic of $GF(q)$). By increasing this characteristic by an additive $c \cdot \Delta$ term, for a constant $c > 0$, one can ensure that the chosen color for $v$ will exclude a list $Q_v$ of at most $c \cdot \Delta$ forbidden colors.

Our fully-dynamic self-stabilizing algorithm works as follows. The RAM of each vertex consists of a variable that holds a color in a range $\{0,1,...,t_1 + t_2 + ...  + t_r  + n -1 \}$.  The ROM of each vertex holds the algorithm, the number of vertices $n$ and the maximum degree $\Delta$. In each round each vertex $v$ checks whether it is in a proper state, i.e., its color is different from colors of all its neighbors. (See the pseudocode of Procedure Check-Error below.) If $v$ is not in a proper state, the vertex returns to its initial state. (See lines 4- 5 of Procedure Self-Stabilizing-Coloring.) We define the initial state of a vertex with ID $j \in {0,1,...,n-1}$ by the color  $t_1 + t_2 + ...  + t_r  + j$. Otherwise (i.e., if Procedure Check-Error returned that its color is different from colors of all its neighbors),  the vertex is in a proper state. Then, the vertex $v$ computes its next color or finalizes the current one. (See lines 7 - 23 of Procedure Self-Stabilizing-Coloring.) Specifically, as long as the vertex color belongs to an interval $I_j$ for $j \geq 2$, i.e., the color is significantly larger than $\Delta^2$, the vertex computes the next color from a smaller range using the algorithm Mod-Linial (lines 9-10 of Procedure Self-Stabilizing-Coloring). Once a color is in the interval $I_1$, the vertex must select a new color in the interval $I_0$ that is distinct from any neighboring color that is also in $I_0$. This is done in lines 12 - 14 of the procedure.
The set $S'$, computed in line 13 and provided as the third parameter of Procedure Mod-Linial in line 14, contains all possible colors that neighbors $u$ of $v$ that run already lines 15 - 21 (i.e., their colors are small enough) may obtain in the current iteration. Note that for each such $u \in \Gamma(v)$ there are at most $2$ such colors.
Finally, a color that is in the range $I_0$ either becomes final or changes to another color in $I_0$ according to Algorithm AG. See lines 15 - 21. This completes the description of the algorithm. Its pseudocode is provided below. Next, we analyze the algorithm.

\begin{algorithm}
	\caption{Check-Error ($my\_color$, $[neighbors\_colors]$)}
	
	\begin{algorithmic}[1]
		\IF {$ my\_color \in neighbors\_colors$} 
		
		\STATE return error
		
		\ENDIF
		
		\STATE return valid
	\end{algorithmic}
\end{algorithm}

\begin{algorithm}
	\caption{Self-Stabilizing-Coloring (run by every vertex $v$ in parallel)} \label{alg:selfstcol}
	\begin{algorithmic}[1]
	
    \STATE clear buffers of incoming and outgoing messages
			
		\STATE send $my\_color$, $my\_ID$ to all neighbors
		
		\STATE receive the colors and IDs of all neighbors, and store colors in $[neighbors\_colors]$, such that any color of a neighbor $u$ that is greater than $t_1 + t_2 + ...  + t_r$ is replaced with $t_1 + t_2 + ...  + t_r + ID_u$
		
		\IF {Check-Error($my\_color$, $[neighbors\_colors]$) = error \OR $my\_color > t_1 + t_2 + ...  + t_r$ }
		
		\STATE $my\_color = t_1 + t_2 + ...  + t_r + my\_ID$ \ \ \ /* initial state */ 
		
		\ELSE
		
		\STATE Let $I_j$ denote the range that $my\_color$ belongs to
		
		\STATE Let $Q$ denote the subset of $[neighbors\_colors]$ of all colors that belong to $I_j$
		
		\IF {$j \geq 2$}
		
		\STATE $my\_color$ = Mod-Linial($my\_color, Q, \emptyset$)
		
		\ELSIF {$j = 1$}
		
		\STATE Let $S$ denote the subset of $[neighbors\_colors]$ of all colors that belong to $I_0$, represented as ordered pairs
		
		\STATE Let $S' = \{ \langle a, (b + a) \mbox{ mod } q\rangle  \ | \ \langle a,b\rangle  \in S \} \cup  \{ \langle 0, b\rangle  \ | \ \langle a,b\rangle  \in S \}$
		
		\STATE $my\_color$ = Mod-Linial($my\_color, Q, S'$) /* avoid collisions with $S'$ */
		
		\ELSIF {$j = 0 $}
		
		\STATE represent $my\_color$ as an ordered pair $\langle a, b\rangle $
		
		\IF {$\langle a,b\rangle $ conflicts with a color in $Q$}
		
		\STATE $my\_color = \langle a, (a + b) \mbox{ mod } q\rangle $
		
		\ELSE
		
		\STATE $my\_color = \langle 0, b\rangle $  /* final color */
		
		\ENDIF
		
		\ENDIF
		
		\ENDIF
		
	\end{algorithmic}
\end{algorithm}

\clearpage

We start with the observation that the submodules invoked by the algorithm are self-stabilizing.
\begin{lem} \label{lem:modulstab}
The procedures Check-Error and Mod-Linial are self-stabilizing.
\end{lem}
\begin{proof}
Procedure Check-Error is applied solely with RAM variables, without message exchange. Consequently, in the period when faults no longer occur, the procedure returns 'valid' iff the value of the variable $my\_color$ does not appear in any entry of the collection $[neighbors\_colors]$. In other words, this procedure performs the desired operation, regardless of the actions of the adversary during the period of faults.

Procedure Mod-Linial is applied solely with RAM variables as well. Specifically, it is invoked with the variable $my\_color$ and two collections $Q,S'$ of variables. There are two possibilities: either the preconditions of the procedure apply, and then it returns a proper solution, or they do not apply, and it returns some value, which may be wrong. However, once faults no longer occur, the execution of procedure Check-Error before procedure Mod-Linial guarantees that $my\_color$ does not belong to $Q$. This, in turn, guarantees that $I_j$ is computed properly in line 7. This results in a correct construction of the sets $Q$, $S$ and $S'$ in lines 8,12,13.  Therefore, the preconditions of Procedure Mod-Linial hold, and as will be shown in the sequel, it returns a proper value.
\end{proof}

Observe that the set $S'$, computed in line 13, is the set of all possible colors that neighbors of the vertex $v$ (that runs the algorithm) that already executed algorithm AG (in lines 15 -21) may obtain on the next round. The set $S$ is the set of the current colors of these neighbors.

%\clearpage

We start with arguing that the algorithm maintains a proper coloring.

\begin{lem} \label{lem:selfstabilizinglinialcorrect}
	Given an arbitrary graph $G = (V, E)$, our self-stabilizing algorithm produces a proper coloring $\psi(G)$ in each round, once faults no longer occur.
\end{lem}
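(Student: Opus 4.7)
The plan is to show that one round after the last fault the coloring becomes proper, and then remains proper in every subsequent round. The tool is a case analysis on the pair of states of the endpoints of each edge, exploiting the fact that the palettes $I_0, I_1, \ldots, I_r$ used at the different layers of the algorithm are pairwise disjoint. Let $T$ be the last round in which a fault occurs.

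Consider an edge $(u,v)$ at the end of round $T+1$. If an endpoint takes the error branch (line 2 of \emph{Self-Stabilizing-Coloring}), it resets to $t_1+\cdots+t_r+id(\cdot)\in I_r$; two resetting endpoints receive distinct colors because IDs are unique, and a resetting endpoint is separated from a non-resetting one because any non-resetting endpoint's new color lies in $I_0\cup\cdots\cup I_{r-1}$. Hence we may assume both endpoints are in the valid state, which already implies that their current colors differ. A sub-case analysis on the level pair $(j_u, j_v)$ now handles everything: if $j_u\neq j_v$, the two new colors fall in different intervals, with the sole exception of the $I_1$--$I_0$ interface ($j_u=1,\,j_v=0$ or vice versa); in that exception the exclusion set $S'$ in line 10, which enumerates both candidate successors $\langle a,(b+a)\bmod q\rangle$ and $\langle 0,b\rangle$ of each $I_0$-neighbor, guarantees that $v$'s new color differs from the neighbor's new color. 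If $j_u=j_v\geq 2$ or $j_u=j_v=1$, the correctness of \emph{Mod-Linial}---Linial's polynomial method augmented with an $O(\Delta)$-size forbidden set---yields distinct successors in the next palette. Finally, if $j_u=j_v=0$, the inductive step of Lemma \ref{lem:propercol} applies verbatim to the edge $(u,v)$, since its proof only uses the local fact that $u$ and $v$ currently have distinct colors. This establishes properness at the start of round $T+2$. For $t\geq T+2$, properness propagates by induction: once the coloring at round $t$ is proper, every vertex is in the valid state, the error branch is never entered, and the same valid--valid sub-case analysis shows that round $t+1$ is proper as well.

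\textbf{The main obstacle} will be the $I_1$--$I_0$ interface. A vertex $v\in I_1$ moves to $I_0$ in the final Linial stage and must avoid the next color of each of its $I_0$-neighbors, even though the AG step at each such neighbor has two possible outcomes. The set $S'$ is designed to list both outcomes per neighbor, so $|S'|$ may be as large as $2\Delta$; I will therefore need to verify that \emph{Mod-Linial} still succeeds with this inflated forbidden set, which reduces to checking that the Galois field used in the last Linial stage has size at least a sufficient constant factor times $\Delta$. This is precisely what the enlargement described in the preamble to algorithm \emph{Excl-Linial} provides. Once this budget is confirmed, the case analysis above closes the proof.
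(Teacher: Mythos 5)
Your proof is correct and follows essentially the same strategy as the paper's: a per-edge case analysis splitting on whether each endpoint takes the error branch, then on the pair of palette levels $(j_u,j_v)$, relying on the disjointness of the intervals $I_0,\dots,I_r$, the correctness of Mod-Linial with a forbidden set, and the AG step (Lemma~\ref{lem:propercol}) for the $I_0$--$I_0$ case. You are somewhat more explicit than the paper on two points worth keeping in mind---the mixed reset/non-reset edge (settled by noting that resetters land in $I_r$ while all others land in $I_0\cup\cdots\cup I_{r-1}$) and the $|S'|\le 2\Delta$ budget for the field size in the last Excl-Linial stage---but these are refinements of the same argument, not a different route.
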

	\begin{proof}
		Consider a round $i$. If a node $v \in V$ has a color that is equal to that of a neighbor $u$, i.e., $\psi_i(u) = \psi_i(v)$, then (by line 5 of Algorithm \ref{alg:selfstcol}), $\psi_{i+1}(v) = t_r + t_{r - 1} + ...  + t_1 + id(v) \ne \psi_{i+1}(u) = t_r + t_{r - 1} + ...  + t_1 + id(u)$. In this case $\psi_{i+1}(v)$ must be different from the colors $\psi_{i+1}$ of all neighbors $u$ of $v$, since their colors either become at most $t_r + t_{r - 1} + ...  + t_1$ or become equal to  $t_r + t_{r - 1} + ...  + t_1 + id(u) \neq \psi_{i+1}(v)$. 
		
		Otherwise, lines 6 - 23 are executed.
		Since it is assumed that no more faults will occur, we prove that lines 6-23 provide a proper coloring. If $j \geq 2$ (line 10) then $\psi_{i+1}(v)$ will be in the range $I_{j-1}$. (Any element in $I_j$ is greater than any element in $I_{j-1}$, and thus numerical values of colors decrease as the algorithm proceeds. Also, note that all intervals are disjoint.) Therefore, all neighbors $u$ with $\psi_i(u) \not \in I_j$ will not select a new color $\psi_{i + 1}(u)$ from $I_{j - 1} $. For a neighbor $u$ with $\psi_i(u) \in I_j$, its color belongs to $Q$, and  Mod-Linial algorithm will produce a proper coloring.\\
		If $j = 1$ then Procedure Mod-Linial works in the following way. It computes a new color from $t_0$, such that it is distinct from all neighbors' colors that transit from $I_1$ to $I_0$ in round $i$, and from all colors of the set $S'$. The latter set contains all possible colors that can be used in round $i + 1$ by neighbors of $v$ with colors in the range $I_0$ in round $i$. Consequently, the new color of $\psi_{i + 1}(v)$ of $v$ is distinct from the new colors of such neighbors. Moreover, the new color is also distinct from new colors of the rest of the neighbors, since they were either in ${I_1}$ in round $i$, and do not collide with $v$ in round $i + 1$ due to correctness of Mod-Linial, or in a higher range, and thus are not in ${I_0}$ in round $i + 1$.\\ 
		%Linial's algorithm assume that it is known by the polinoms produces my node adjacents which colors they are possible to choose from and uses the pigeon holes principle to choose a color that will not be used by any of the neighbors. But, we know that when $j = 0$ there are only 2 possible colors for nodes either $\langle a , (a + b) \bmod q\rangle $ or $\langle 0, b\rangle $ so, we give this options instead of the $q$ options Linial's needed and its sure that $\psi_{i + 1}(v)$ will be a valid color.\\
		If $j = 0$, then lines 15 - 22 execute our Additive-Group algorithm (see Lemma \ref{lem:propercol} and Corollary \ref{col:squarerootcoloring}), and produce a proper coloring for neighbors with $j = 0$. For neighbors with $j > 0$, the coloring is proper as well, by analysis of previous cases in this proof.
	\end{proof}
%}

Next we analyze the quiescence (i.e., stabilization) time of our algorithm.
\begin{lem} \label{lem:selfstabilizinglinialfinish}
	Given an arbitrary graph $G = (V, E)$, our fully-dynamic self-stabilizing algorithm produces a proper $O(\Delta)$-coloring with $O(\Delta + \log^*n)$ stabilization time. 
\end{lem}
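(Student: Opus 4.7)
The plan is to combine Lemma~\ref{lem:selfstabilizinglinialcorrect}, which already guarantees a proper coloring every round after faults cease, with a phase-by-phase stabilization argument that tracks how fast a vertex's color descends through the intervals $I_r, I_{r-1}, \ldots, I_0$ and then finalizes inside $I_0$ via the AG rule. Let $t^\star$ denote the round of the last fault or dynamic update; the goal is to show that within $O(\Delta + \log^* n)$ rounds of $t^\star$ every vertex holds a final AG color $\langle 0, b\rangle$ with $0 \leq b < q = O(\Delta)$, yielding the desired proper $O(\Delta)$-coloring.

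First I would dispose of a single cleanup round: at round $t^\star + 1$ each vertex $v$ whose current color coincides with one of its neighbors' colors executes the reset in line~2 and lands in $I_r$, while the remaining vertices take one step of Mod-Linial or AG; by Lemma~\ref{lem:selfstabilizinglinialcorrect} the resulting coloring is already proper. From round $t^\star + 2$ onward the Check-Error test never fires, so colors descend monotonically: lines~6--7 move a color from $I_j$ to $I_{j-1}$ for every $j \geq 2$, lines~9--11 move a color from $I_1$ into $I_0$, and lines~12--18 keep the color inside $I_0$. Since every vertex is in some $I_{j_v}$ with $j_v \leq r = O(\log^* n)$ after the cleanup round, within $r$ additional rounds every vertex's color lies in $I_0$. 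Let $T^\star \leq t^\star + O(\log^* n)$ denote the first such round.

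The heart of the argument, and the step I expect to be the main obstacle, is the AG phase from $T^\star$ onwards: unlike in Corollary~\ref{col:squarerootcoloring}, different vertices enter $I_0$ at different times, so they are not synchronized, and a vertex $v$ may already have taken several AG steps by round $T^\star$. The key observation I would exploit is that while a neighbor $u$ of $v$ is still in some $I_j$ with $j \geq 1$ it cannot conflict with $v \in I_0$ because their colors lie in disjoint intervals, so the conflict count for the pair $(u,v)$ only begins from the first round in which both are in $I_0$. In addition, the set $S'$ computed in line~10 ensures that when a vertex transitions from $I_1$ into $I_0$ it avoids both the current colors and the possible AG-updates of its $I_0$-neighbors, so the next round remains proper and, importantly, Lemmas~\ref{lem:enoughcolors} and~\ref{lem:enoughcolorsb} remain applicable across the late arrival. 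I would then apply these two lemmas to the $q$-round window starting at $T^\star$, exactly as in the proof of Corollary~\ref{col:squarerootcoloring}, to conclude that every pair of neighbors conflicts at most twice during this window; summing over the at most $\Delta$ neighbors of a vertex $v$ yields fewer than $q$ conflicts for $v$ in the window, so at least one round in $[T^\star, T^\star + q - 1]$ is conflict-free for $v$, and in that round $v$ finalizes to $\langle 0, b_v\rangle$.

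Putting the three phases together yields a stabilization time of $1 + O(\log^* n) + q = O(\Delta + \log^* n)$. Since every finalized color has the form $\langle 0, b\rangle$ with $0 \leq b < q = O(\Delta)$, the ultimate coloring uses only $O(\Delta)$ colors and is proper by Lemma~\ref{lem:selfstabilizinglinialcorrect}, establishing the fully-dynamic self-stabilizing proper $O(\Delta)$-coloring within the claimed stabilization time.
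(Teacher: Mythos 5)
Your proposal is correct and follows essentially the same two-phase structure as the paper's proof: a descent through the intervals $I_r,\ldots,I_1$ taking $r+1 = O(\log^* n)$ rounds (the paper states the invariant that after round $i$ all colors lie in $I_0\cup\cdots\cup I_{r+1-i}$), followed by an invocation of the AG analysis of Corollary~\ref{col:squarerootcoloring} over a $q$-round window once all vertices are in $I_0$. You are more explicit than the paper about why the non-synchronized entry into $I_0$ is harmless---namely that Lemmas~\ref{lem:enoughcolors} and~\ref{lem:enoughcolorsb} apply to any $q$-round window starting from a proper $I_0$-coloring, regardless of which vertices are already finalized at the start of that window---but this is exactly the implicit content of Corollary~\ref{col:squarerootcoloring} as the paper uses it, so the underlying argument is the same.
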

	\begin{proof}
		By induction on $i$, it is easy to see that in the end of each round $i = 1,2,...$, counting from the moment that faults stop occurring, all colors are in the range $I_0 \cup I_1 \cup ... \cup I_{r + 1 -i}$. Therefore, within $r + 1 = \log^* n + O(1)$ rounds, all colors are in the range $I_0$, and the coloring is proper. From this moment and on, the procedure executes our Additive-Group algorithm in all vertices. Therefore, by Corollary \ref{col:squarerootcoloring}, within $O(\Delta)$ additional rounds the number of colors becomes $O(\Delta)$.
\end{proof}
%}

We also obtain a self-stabilizing algorithm that employs exactly $(\Delta + 1)$ colors. To this end, in each round each vertex $v$ with a color of the from $\langle 0,b_v \rangle, b_v > \Delta,$ whose all neighbors also have 0 in the first coordinate of their colors performs the following. If $\langle 0,b_v \rangle$ is greater then the colors of all $v$'s neighbors, then $v$ selects a new color $\langle 0, b_v' \rangle$ such that $0 \leq b'_v \leq \Delta$, and $\langle 0,b'_v \rangle $ is distinct from all colors of $v$'s neighbors. Consequently, once all colors in the graph are of the form $\langle 0,b \rangle$, $b = O(\Delta)$, at most $O(\Delta)$ additional rounds are required to arrive to a $(\Delta + 1)$-coloring, because at least one color is eliminated in each round. (This is the greatest color, as long as there are colors greater than $\Delta$.) Moreover, starting from any configuration of the RAM values, in any round the produced coloring is proper, and the color ranges decrease as in the $O(\Delta)$-coloring algorithm. Thus, within $O(\Delta + \log^*n)$ rounds all vertices enter the range of colors of $O(\Delta)$, and within additional $O(\Delta)$ rounds we obtain a $(\Delta + 1)$-coloring.
%Further discussion about $(\Delta + 1)$-coloring appears in the full version of this paper \cite{BEG18}. 
Alternatively, the same effect can be achieved via our 1-bit halving reduction, described in Section \ref{sc:smbit}.
We summarize this below.

\begin{thm} \label{stcol}
	Given an arbitrary graph $G = (V, E)$, our fully-dynamic self-stabilizing algorithm produces a proper $(\Delta + 1)$-coloring with $O(\Delta + \log^*n)$ stabilization time.
\end{thm}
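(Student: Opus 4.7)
The plan is to extend the self-stabilizing algorithm of Lemma \ref{lem:selfstabilizinglinialfinish} with one additional phase that carries out a self-stabilizing version of the standard color reduction, taking the number of colors from $O(\Delta)$ (the output of algorithm AG) down to exactly $\Delta + 1$. Concretely, I would augment the chain of intervals $I_0, I_1, \ldots, I_r$ used in Procedure Self-Stabilizing-Coloring with a further chain of $O(\Delta)$ ``reduction intervals'' $I_{-1}, I_{-2}, \ldots, I_{-c\Delta}$, where $I_{-\Delta-1} = \{0, 1, \ldots, \Delta\}$ contains exactly the final palette. A vertex $v$ whose color lies in $I_{-j}$ with $j < c\Delta$ selects, in the next round, the smallest color in $I_{-(j+1)}$ not appearing among those neighbors whose colors already lie in $I_{-(j+1)}$ or that would transit into $I_{-(j+1)}$ in the same round (computed locally in the same spirit as the set $S'$ in line 10 of Procedure Self-Stabilizing-Coloring). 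Such a color always exists since the forbidden set has size at most $\Delta$.

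Next I would establish a per-round invariant analogous to Lemma \ref{lem:selfstabilizinglinialcorrect}: once faults have ceased, if all vertices begin a round with a proper coloring whose colors lie in $I_0 \cup I_{-1} \cup \cdots \cup I_{-c\Delta}$, then at the end of the round the coloring is again proper. The argument decomposes by cases on which interval each vertex is in and mirrors the proof of Lemma \ref{lem:selfstabilizinglinialcorrect}: neighbors in strictly different reduction intervals cannot collide because their new colors land in disjoint subranges; neighbors in the same $I_{-j}$ avoid collisions by construction, using the $S'$-style look-ahead set so that simultaneous moves do not create conflicts. Error-detection and reinitialization at each round (Procedure Check-Error followed by line 2 of Procedure Self-Stabilizing-Coloring) remain unchanged and ensure that once any conflict is detected, the offending vertex is pushed back to $I_r$.

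For the stabilization time, I would chain three bounds: by Lemma \ref{lem:selfstabilizinglinialfinish} all vertices enter $I_0$ within $O(\Delta + \log^* n)$ rounds after the last fault; once inside $I_0$ the AG algorithm produces an $O(\Delta)$-coloring (with final first coordinate $0$) within an additional $O(\Delta)$ rounds by Corollary \ref{col:squarerootcoloring}; the reduction phase then eliminates one color per round, so $O(\Delta)$ further rounds suffice to reach the palette $\{0, 1, \ldots, \Delta\}$. Summing yields the claimed $O(\Delta + \log^* n)$ stabilization time. The $O(1)$-word memory bound carries over verbatim from the discussion at the end of Section \ref{sc:ag}, because each round's action depends only on $v$'s own color and a running scan of its neighbors' colors in its read-only buffers.

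For the adjustment radius, let $U$ be the set of vertices at which a topological update or fault occurs. A vertex $w \notin \hat{\Gamma}(U)$ sees no changed neighbors, so Procedure Check-Error never fires at $w$ and $w$ retains its finalized color in $I_{-\Delta-1}$. Moreover, every vertex in $\hat{\Gamma}(U) \setminus U$ that has a finalized color not conflicting with any new neighbor also keeps its color, since the only trigger for re-initialization is a detected conflict. Hence only vertices in $\hat{\Gamma}(U)$ may ever recompute, giving adjustment radius $1$. The main subtlety I anticipate is showing that during the new reduction phase, the simultaneous look-ahead set $S'$ is rich enough to prevent race conditions between a vertex about to descend from $I_{-j}$ to $I_{-(j+1)}$ and a neighbor already in $I_{-(j+1)}$ that is itself about to descend; this is where the conservative choice of including both the ``stayed'' and ``descended'' colors in $S'$, as done in line 10 of Procedure Self-Stabilizing-Coloring, must be verified to still leave a free color in the palette of size $\Delta + 1 + O(1)$.
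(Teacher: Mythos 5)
Your proposal takes a genuinely different route from the paper, and it contains a gap that the paper's authors explicitly designed around. The paper does \emph{not} prove Theorem~\ref{stcol} by appending a self-stabilizing version of the standard color reduction. On the contrary, Section~\ref{sc:3dag} is dedicated to showing how to reach exactly $\Delta+1$ colors \emph{``without ever using the standard color reduction''}: the authors first build a three-dimensional variant 3AG($p$) so that the entire descent from $O(\Delta^3)$ colors to $O(\Delta)$ colors is one uniform locally-iterative rule (no phase boundaries), and then obtain the exact palette $\{0,\dots,\Delta\}$ via a high-color/low-color hybrid, where ``low-color'' vertices run AG($N$) with $N=\Delta+1$ (not necessarily prime) and ``high-color'' vertices run AG($p$) for a slightly larger prime $p$ but are forbidden to finalize while any low-color neighbor is unfinalized. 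The motivation the paper gives for avoiding a chained reduction is precisely the self-stabilization concern that your plan runs into: after a fault, different vertices can be in different ``phases,'' and a multi-phase reduction is hard to keep consistent.

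The concrete gap in your plan is the simultaneity of the descent from $I_{-j}$ to $I_{-(j+1)}$. You have \emph{all} vertices whose colors lie in $I_{-j}$ choose ``the smallest color in $I_{-(j+1)}$ not forbidden'' in the same round. Two adjacent vertices $u,v$ both in $I_{-j}$ will therefore both be picking a color from $I_{-(j+1)}$ at the same time, and the choice each makes depends on \emph{its own} entire neighborhood --- information the other does not have. The $S'$-style look-ahead set used in line 10 of Procedure Self-Stabilizing-Coloring works there because the AG rule gives only two possible next colors per neighbor, both locally computable from that neighbor's current color. Nothing analogous holds for ``smallest free color'': $v$ cannot predict which free color $u$ will take, because it depends on colors of $u$'s other neighbors that $v$ does not see. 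So the invariant ``each round's coloring remains proper'' can fail in your reduction phase, and the subtlety you flag at the end of your proposal is in fact a genuine obstruction, not one that $S'$ can be made ``rich enough'' to fix with 1-hop information. (The classical way out --- moving only one color class at a time so that each moving class is an independent set --- requires some global or self-synchronizing notion of which class moves, which is itself nontrivial to make self-stabilizing with adjustment radius~$1$, and is not supplied in your proposal.) The stabilization-time accounting and the adjustment-radius-$1$ argument you give are in the spirit of the paper's (the latter essentially matches the text immediately preceding Theorem~\ref{stcol}), but they rest on the unestablished correctness of the reduction phase.
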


Note that the proof above applies in a fully dynamic setting. Specifically, the edges may appear and fall, vertices can connect and disconnect, but as long as upper bounds on $n$ and $\Delta$ are hard-wired in the ROM and are not violated, the algorithm will stabilize to a $(\Delta + 1)$-coloring. (Though, admittedly, this $\Delta$ will be just an upper bound on the current maximum degree of the graph, which can obviously be much smaller.)

{\noindent \large \bf 4.2 Fully-Dynamic Self-Stabilizing MIS, MM, and ($2\Delta - 1)$-Edge-Coloring\\}
We employ our self-stabilizing coloring algorithm from the previous section in order to compute MIS as follows. We add a bit $\mu_v$ to the RAM of each vertex $v \in V$. This bit represents whether $v$ is in the MIS (if $\mu_v = 1$)  or not in the MIS (if $\mu_v = 0$). We add the following instruction in the end of Procedure Self-Stabilizing-Coloring. If all neighbors $u$ of $v$ with smaller colors than that of $v$ have $\mu_u = 0$, then we set $\mu_v = 1$. Otherwise, we set $\mu_v = 0$. This completes the description of the changes required to compute an MIS. Denote by $U$ the vertex set computed by this algorithm. 

The next theorem shows that within $i$ rounds, for $i > 0$, after the stabilization of coloring, all vertices with colors $1,2,...,i$ induce a subgraph with a properly computed MIS. Consequently, within $O(\Delta)$ additional rounds an MIS of the entire input graph is constructed.

\begin{thm} \label{stmis}
	Given an arbitrary graph $G = (V, E)$, our self-stabilizing algorithm produces a proper MIS within $O(\Delta + \log^*n)$ rounds after the last fault.
\end{thm}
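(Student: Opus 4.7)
The plan is to chain two already-established stabilization facts: first the coloring stabilizes, then the MIS stabilizes color-class by color-class on top of it. First I would appeal to Theorem \ref{stcol} to conclude that within $T_1 = O(\Delta + \log^* n)$ rounds after the last fault, every vertex $v$ carries a final color of the form $\langle 0, b_v \rangle$ with $0 \le b_v \le \Delta$, and from that point onward colors never change again. The key observation that makes chaining legitimate is that the MIS instruction appended to Procedure Self-Stabilizing-Coloring only \emph{reads} the colors of $v$ and its neighbors and \emph{writes} the local bit $\mu_v$; it touches neither the error-detection step nor the color-update step, so the analysis underlying Theorem \ref{stcol} carries over verbatim in the presence of the added MIS bookkeeping.

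Next, from round $T_1$ onward the hypotheses of Lemma \ref{stabilizemis} are satisfied, so by an induction on the color index we get that after $i$ additional rounds the subgraph induced by vertices of colors $\langle 0, 0\rangle, \langle 0, 1\rangle, \ldots, \langle 0, i-1\rangle$ has a properly computed MIS (recorded in the $\mu_v$ bits). Since only $\Delta + 1$ distinct final colors occur, after $T_2 = \Delta + 1$ further rounds the entire vertex set carries a correct MIS, yielding a total stabilization time of $T_1 + T_2 = O(\Delta + \log^* n)$, as claimed.

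The main subtlety I expect concerns the pre-stabilization values of the $\mu_v$ bits: these live in RAM and could have been arbitrarily corrupted by faults, and even after the coloring stabilizes they might initially disagree with any valid MIS. However, the appended instruction overwrites $\mu_v$ each round based solely on the current colors and on the $\mu$-values of strictly smaller-colored neighbors. Because the induction in Lemma \ref{stabilizemis} builds the MIS layer by layer starting from the lowest color $\langle 0, 0 \rangle$, each layer's bits are recomputed (and thereby purged of any adversarial contamination) before the next layer depends on them. Thus any hostile initialization of the $\mu_v$ bits is benign and is flushed out within the $\Delta+1$ additional rounds, and no separate "MIS reset" mechanism is needed beyond what Procedure Self-Stabilizing-Coloring already provides. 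Combining the two phases yields Theorem \ref{stmis}.
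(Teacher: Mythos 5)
Your proof is correct and follows essentially the same route as the paper: first the coloring stabilizes in $O(\Delta + \log^* n)$ rounds (Theorem \ref{stcol}), and then the $\mu$ bits stabilize color-class by color-class in $O(\Delta)$ additional rounds via the layer-by-layer induction of Lemma \ref{stabilizemis}. The paper's proof re-derives this induction inside Theorem \ref{stmis} instead of citing the lemma, but the argument --- including the fact you make explicit, that the per-round overwrite of $\mu_v$ purges any corrupted pre-stabilization values --- is the same.
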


\begin{proof}
	Let $t_{cd} = O(\Delta + \log^* n)$ be the stabilization time of the coloring algorithm. (See Theorem \ref{stcol}.) Denote by $U_i$, $i = 1,2,...,\Delta + 1$, the set of vertices $v$ that belong to MIS (i.e., have $\mu_v = 1$) at round $t_{cd} + i$ after faults stop occurring. Let $\psi$ be the $(\Delta + 1)$-coloring maintained by the algorithm. (We know that $t_{cd}$ rounds after the last fault occurred, $\psi$ is indeed a proper $(\Delta + 1)$-coloring.)
	
	We prove by induction on $i$ that at time $t_{cd} + i$ after faults stop occurring, for $i = 1,2,....,\Delta + 1$, $U_i$ is an MIS for the set $\hat{V}_i = \{ v \ | \  1 \leq \psi_i(v) \leq i \}$, where $\psi_i$ is the coloring $\psi$ maintained by the algorithm at that time.\\
	{\bf Base (i = 1):} All vertices of $\hat{V}_1$ form an independent set (because $\varphi_1$ is a proper $(\Delta + 1)$-coloring, because it is the coloring $\psi$ more than $t_{cd}$ rounds after the last fault occurred, and each of them joins MIS because they have no neighbors of smaller color).\\
	{\bf Step:} For some $i \leq \Delta$ we assume that $U_i$ is an MIS for $\hat{V}_i$. Consider a vertex $v \in U_{i + 1}$, i.e., $\psi_{i + 1}(v) = i + 1$. This vertex had the same color $i + 1$ for all the rounds $t_{cd} + 1, t_{cd} + 2,....,t_{cd} + i + 1$, counting from the moment $T$ when faults stopped occurring. By end of round $T + t_{cd} + i$ or earlier, all its neighbors of smaller color (they also did not change their colors during the time interval $[T + t_{cd}, T + t_{cd} + i]$) have set their values $\mu_u$. So in round $T + t_{cd} + i + 1$, if $v$ has no neighbor with a smaller color in  the MIS, it joins MIS. (It might have joined earlier, but it will anyway check again whether it has to join in round $T + t_{cd} + i + 1$.) Since vertices of $V_{i + 1} = \{ v \ | \ \psi_{i + 1}(v) = i + 1 \}$ form an independent set, the resulting set $U_{i + 1}$ is a maximal independent set for $\hat{V}_i \cup V_{i + 1} = \hat{V}_{i + 1}$.
\end{proof}
%}

In the ordinary (non-stabilizing) setting it is possible to compute a maximal matching and an edge coloring by simulating the line-graph of the input graph, and computing an MIS and vertex-coloring of it. These solutions on the line graph directly provide solutions for maximal matching and edge coloring of the input graph within the same running time. This technique is applicable also to the self-stabilizing setting. Specifically, each vertex $v$ simulates virtual vertices, one virtual vertex per edge adjacent on $v$. In the beginning of each round each vertex verifies whether the state of each of its virtual vertices that correspond to some edge equals to the state in the other endpoint of that edge. If this is not the case, the endpoint with a greater ID copies the state of the other endpoint for that virtual vertex. Consequently all edges have consistent representations, i.e., the same state in both their endpoints, in the entire graph. Now, a self-stabilizing MIS or vertex-coloring algorithm can be simulated correctly on the line graph in order to produce self-stabilizing maximal matching and edge-coloring of the input graph. In conjunction with Theorems \ref{stcol}, \ref{stmis} this leads to the following result.
\begin{thm}
	Given an arbitrary graph $G = (V, E)$, our self-stabilizing algorithms produce a maximal matching and a proper $(2\Delta - 1)$-edge-coloring within $O(\Delta + \log^*n)$ stabilization time.
\end{thm}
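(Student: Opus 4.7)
The plan is to reduce both problems to their vertex-analogues on the line graph $L(G)$ and then invoke Theorems \ref{stcol} and \ref{stmis} there, using exactly the local simulation sketched in the paragraph preceding the theorem. Each vertex $v \in V$ simulates $\deg(v)$ virtual $L(G)$-vertices, one per incident edge. Since $L(G)$ has maximum degree at most $2\Delta - 2$ and $|V(L(G))| \leq n^2$, a proper $(2\Delta - 1)$-vertex-coloring of $L(G)$ is a proper $(2\Delta - 1)$-edge-coloring of $G$, an MIS of $L(G)$ is an MM of $G$, and the target running time $O((2\Delta - 2) + \log^* n^2) = O(\Delta + \log^* n)$ matches the claim.

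The steps I would carry out in order are as follows. First, describe a one-round consistency procedure: for each edge $e = (u,v)$, both endpoints store and exchange a copy of $e$'s simulated state (the color for edge-coloring, the $\mu$-bit for MM); whenever the two copies disagree, the endpoint of larger ID overwrites its copy with that of the smaller-ID endpoint. Second, immediately after the consistency step, each endpoint runs one iteration of Procedure Self-Stabilizing-Coloring, or of the MIS update from Theorem \ref{stmis}, on each of its virtual vertices, using the virtual vertices corresponding to the remaining edges incident on $v$ (together with those incident on the far endpoint of $e$) as the $L(G)$-neighborhood. Third, observe that a single post-fault round suffices to make every edge's state consistent at both endpoints, because the overwrite rule is deterministic in the ROM-stored IDs and in the neighbor's announced copy.

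From that moment on, the simulated state evolves identically to a native execution of the corresponding algorithm on $L(G)$. Applying Theorems \ref{stcol} and \ref{stmis} to $L(G)$ then yields a proper $(2\Delta - 1)$-edge-coloring and a maximal matching within $O((2\Delta - 2) + \log^* n^2) = O(\Delta + \log^* n)$ further rounds after the last fault.

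The main obstacle I foresee is verifying that the consistency repair itself does not induce any permanent slowdown. An adversarial fault at $v$ can corrupt the copies of up to $\Delta$ incident edges simultaneously, and one must check that the ID-based overwrite is not thwarted by cascading disagreements. I would handle this by arguing that each individual edge-disagreement is resolved in exactly one round, a resolved edge cannot become inconsistent again without a fresh fault at one of its endpoints, and the Check-Error test (line 1 of Self-Stabilizing-Coloring and its MIS analogue) reinitializes only those virtual vertices whose state actually conflicts with a neighbor, so already-finalized virtual vertices remain intact. This confines the simulation overhead to an additive $O(1)$ term, allowing the stabilization bounds of Theorems \ref{stcol} and \ref{stmis} to transfer verbatim to the line-graph simulation and yielding the stated $O(\Delta + \log^* n)$ bound.
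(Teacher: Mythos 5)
Your proposal is correct and takes essentially the same approach as the paper: simulate the line graph $L(G)$ via virtual vertices at each endpoint, repair inter-endpoint inconsistencies in one round using the fixed ID-based tie-break, and then invoke the self-stabilizing vertex-coloring and MIS theorems on $L(G)$ (whose maximum degree is $\leq 2\Delta-2$ and whose vertex count is $\leq n^2$, so $\log^*$ of it is still $O(\log^* n)$). The paper leaves the consistency step as a one-sentence sketch; your elaboration of why the repair is one-shot and cannot cascade once faults cease is a sound and useful filling-in, not a departure.
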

We remark, however, that while our self-stabilizing vertex-coloring and MIS algorithms require small messages, this is not the case for the edge-coloring and maximal matching algorithms.

\section{Edge-Coloring}
\subsection{Edge Coloring within $O(\Delta + \log^*n)$ Rounds in the CONGEST Model and $O(\Delta + \log n)$ Rounds in the Bit-Round Model}
%\subsection{Algorithm description}
Next, we employ our techniques in order to compute {\em edge colorings} using small messages. The algorithm consists of two stages. The first stage constructs an $O(\Delta^2)$-edge-coloring from scratch, and the second stage computes an $O(\Delta)$-coloring from this $O(\Delta^2)$-coloring. We remark that we cannot use the algorithm of Linial for the first stage, since its message complexity in the case of edge-coloring is quite large. Instead, we do the following. We invoke Kuhn's algorithm \cite{K09} for $2$-defective $\Delta^2$-edge coloring. 
This algorithm orients all edges towards endpoints with greater IDs. Then, each vertex assigns its outgoing edges distinct colors from the set $\{1,2,...,\Delta\}$. It also assigns its incoming edges distinct colors from the same range. Consequently, each edge obtains a pair of colors, one color from each of its endpoints.
%This will produce a ${{\frac{\Delta}{i} + 1}\choose{2}} = {{\Delta + 1}\choose{2}} = \frac{(\Delta + 1) \cdot (\Delta)}{2} = O(\Delta^2)$ coloring with defect $4 \cdot i - 2 = 2$. 
%This can be done locally, wihout any message exchange, given that vertices know the IDs of their neighbors. (Otherwise, this information can be exchanged within a single round, and with messages of size $O(\log n)$ per edge.). 
This is done within a single round by sending a message of size $O(\log n)$ per edge (in both directions).
These messages contain vertex IDs.

Each color of an edge $e \in E$ can be represented as an ordered pair $\psi(e) = \langle i, j\rangle $, where $i, j \in \{1,2,...,\Delta\}$. Note that a set of edges with the same $\psi$-color consists of paths and cycles, since each vertex on such an edge has at most one another edge adjacent on it in this set.
This is because the defect of $\psi$ is $2$. To remove the defect we run Cole and Vishkin coloring algorithm \cite{CV86} on edges of each color class in parallel and assign a new color to each $e \in E$ in the form $\psi(e) = (i, j, k)$. The first two indices $i,j$ are the result of the first stage, and the rightmost index $k \in \{1,2,3\}$ is the result of Cole-Vishkin's algorithm invocation.

Next, we compute an $O(\Delta)$-edge-coloring from the $O(\Delta^2)$-edge-coloring as follows. In each round both endpoints of an edge hold its color, that will be from now on represented as an ordered pair $\langle a,b\rangle $, $a,b \in O(\Delta)$, rather than a triple. Consequently, each endpoint can check for conflicts of edges adjacent on it. For each edge with a conflict at an endpoint, the endpoint that detects the conflict sends a message over this edge (consisting of a single bit) to notify the other endpoint about the conflict. Then, for each edge, both of its endpoints know whether it has a conflict with some adjacent edge or not. If the current edge color is $\langle a,b\rangle $, and there is a conflict, the new color becomes $\langle a, (a + b) \bmod q\rangle $. Otherwise, it becomes $\langle 0,b\rangle $. Both endpoints update the new color of their edge. This is done within a single round and by exchanging just a single bit on each edge. Then all vertices of the graph are ready to proceed to the next round and perform it in a similar way. The algorithm stops once all edges have colors of the from $\langle 0,b\rangle $, $0 \leq b < q = O(\Delta)$. (Here $q$ is a prime number that satisfies that the original number of colors is at most $q^2$ and $q \geq 2\Delta - 1$.)
%4. Run standard color reduction to achieve $2 \cdot \Delta - 1$ proper coloring.

\begin{lem} \label{edgecolor}
	A proper $O(\Delta)$-edge coloring is obtained in $O(\Delta + \log^* n)$ rounds in the CONGEST model.
\end{lem}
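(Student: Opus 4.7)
The plan is to analyze the two stages of the algorithm separately and then combine the round counts and message-size bounds. Stage $1$ will produce a proper $O(\Delta^2)$-edge-coloring from scratch in $O(\log^* n)$ rounds using $O(\log n)$-bit messages; Stage $2$ will apply the edge version of the Additive-Group iteration to reduce this to an $O(\Delta)$-edge-coloring in $O(\Delta)$ further rounds, using only a single bit per edge per round.

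For Stage $1$, I would first check that Kuhn's orientation-based assignment gives a $2$-defective $\Delta^2$-edge-coloring in one round: each vertex learns its neighbors' IDs, orients every edge toward its higher-ID endpoint, and then assigns distinct labels from $\{1,\ldots,\Delta\}$ to its outgoing edges and, independently, distinct labels from the same range to its incoming edges. Fix a color pair $(i,j)$ and a vertex $v$: at most one outgoing edge at $v$ carries $i$ as its first coordinate and at most one incoming edge at $v$ carries $j$ as its second coordinate, so at most two edges of color $(i,j)$ meet at $v$; hence each color class has maximum degree $2$ in the line graph, i.e., decomposes into paths and cycles. I would then run Cole--Vishkin in parallel inside each color class, using each edge's unordered endpoint-ID pair as a proper initial $O(n^2)$-coloring. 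In the CONGEST simulation, at each endpoint $v$ of an edge $e$, $v$ only has to send the other endpoint the color of the (unique, if any) same-class sibling of $e$ at $v$ --- that is $O(\log n)$ bits per edge per round --- producing a proper $O(\Delta^2)$-edge-coloring in $\log^* n + O(1)$ rounds.

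For Stage $2$, I would choose a prime $q = \Theta(\Delta)$ large enough for the edge variant of the AG argument to go through (the line graph has maximum degree $2(\Delta-1)$, so a $q$ larger than a suitable constant multiple of $\Delta$ suffices, and such a $q$ exists by Bertrand--Chebyshev). Each edge color is re-encoded as a pair $\langle a,b\rangle \in \{0,\ldots,q-1\}^2$ stored identically by both endpoints. In each round, each endpoint $v$ of an edge $e=(u,v)$ locally inspects the colors of all other edges incident to $v$ (which it already knows) and computes a single conflict-bit: whether any of them shares $e$'s second coordinate $b$. Both endpoints exchange their bits across $e$, take the OR as the conflict indicator of $e$, and deterministically apply the same update --- $\langle a,b\rangle \mapsto \langle a,(a+b) \bmod q\rangle$ if in conflict, $\langle 0,b\rangle$ otherwise. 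Lemma \ref{lem:propercol}, Lemma \ref{lem:enoughcolors}, Lemma \ref{lem:enoughcolorsb} and Corollary \ref{col:squarerootcoloring} depend only on algebraic properties of adjacent pairs in the relevant conflict graph, so they transfer verbatim from the vertex graph to the line graph and imply that within $q = O(\Delta)$ rounds every edge finalizes to a color $\langle 0,b\rangle$, yielding $O(\Delta)$ colors. Summing the two stages gives the claimed $O(\Delta + \log^* n)$ bound.

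The subtle step I would verify most carefully is the Stage $2$ invariant that both endpoints of an edge always share the same view of its color and of the colors of all edges adjacent to it at their own end. This follows inductively: Stage $1$'s output is symmetric, every per-edge update is decided and applied symmetrically at both endpoints after the one-bit OR exchange, and each endpoint natively sees all its own incident edges' colors --- so the AG arithmetic carries over to the CONGEST setting without requiring any larger messages.
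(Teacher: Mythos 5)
Your proposal is correct and takes essentially the same two-stage route as the paper: Kuhn's one-round orientation-based $2$-defective $\Delta^2$-edge-coloring, Cole--Vishkin on the paths-and-cycles color classes to remove the defect in $O(\log^* n)$ rounds with $O(\log n)$-bit messages, and then the single-bit-per-round AG iteration on the line graph for $O(\Delta)$ rounds, invoking the same Lemmas \ref{lem:propercol}, \ref{lem:enoughcolors}, \ref{lem:enoughcolorsb} and Corollary \ref{col:squarerootcoloring}. You spell out a few details more explicitly than the paper (why Kuhn's coloring is $2$-defective, the both-endpoints-agree invariant, and that $q$ must scale with the line graph's degree $2(\Delta-1)$ rather than $\Delta$), but the argument and its structure are the same.
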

	\begin{proof}
		The algorithm starts with the invocation of Kuhn's algorithm that results in a $2$-defective $\Delta^2$-edge-coloring within $O(1)$ time. Then it is turned into a proper coloring using Cole-Vishkin algorithm within $O(\log^* n)$ time. Indeed, if prior to the execution of the latter algorithm a pair of adjacent edges had the same color $\langle i,j\rangle $, they now have distinct colors $\langle i,j,k\rangle $ and $\langle i,j,k'\rangle $, since Cole-Vishkin algorithm produces a proper $3$-coloring of the edges in the set of color class $\langle i,j\rangle $.
		Next, in each round each color of an edge of the form $\langle a,b\rangle $ is transformed either into $\langle a, (a + b) \bmod q\rangle $ or into $\langle 0,b\rangle $. In both cases the new coloring is proper. See Lemma \ref{col:squarerootcoloring}. Within $O(\Delta)$ rounds all colors obtain the form $\langle 0,b\rangle $.
		%After phase 2 the color is proper. Assume that the color has a defect, that means that $\psi_2(u,v) = \psi_2(v,z)$ and $(u, v)$ and $(v, z) \in E$ this means that $\psi_1(u,v) = \psi_1(v,z)$, this also means that it will both work in the ring Cole-Vishkin algorithm this means that the 3rd index is different. This contradicts the assumption.
	\end{proof}
%}

In the next lemma we argue that the bit-complexity of our algorithm is small. The variant of CONGEST model in which vertices initially know the IDs of their neighbors is called $KT_1$ model. The variant in which they only know their own IDs is called $KT_0$ model \cite{KKT15}.  

\begin{lem} \label{edgebitcom}
	The bit complexity of our edge-coloring algorithm is $O(\Delta + \log n)$ per edge (in the $KT_0$ model). In addition, if initially vertices know the IDs of their neighbors (i.e., in the $KT_1$ model), then the bit complexity is $O(\Delta + \log \log n)$ per edge.
\end{lem}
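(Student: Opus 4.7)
The plan is to decompose the edge-coloring algorithm into its two stages and bound, separately for each stage, the total number of bits transmitted across a single edge in each direction, with one argument for the general case and a refined argument when neighbor IDs are known a priori.

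Stage 2 (the Additive-Group reduction from $O(\Delta^2)$ to $O(\Delta)$ colors) is immediate from its description: in each of its $O(\Delta)$ rounds (by Lemma \ref{edgecolor}) each endpoint of an edge sends just one bit, a conflict indicator, and both endpoints then deterministically apply the same AG update to the pair $\langle a,b\rangle$. Hence Stage 2 contributes $O(\Delta)$ bits per edge, regardless of whether neighbor IDs are known initially.

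For Stage 1 (producing the proper $O(\Delta^2)$-edge-coloring) I will split the cost into the Kuhn step and the subsequent Cole--Vishkin sweep on each color class. In the general case the Kuhn step uses one round in which the two endpoints of each edge exchange the information needed to agree on the pair $\langle i,j\rangle$; this requires $O(\log n)$ bits per direction (the $\log n$-bit ID, needed both to fix the orientation and to be paired with the $O(\log \Delta)$-bit local Kuhn color). The Cole--Vishkin sweep then reduces a proper $c$-coloring of each color-class path/cycle to a proper $3$-coloring in $O(\log^* c)$ rounds, with round-$i$ messages of size $O(\log^{(i)} c)$; starting from $c = O(\Delta^2)$ this sums geometrically to $O(\log \Delta) = O(\log n)$ bits per edge. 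Adding Stage 2 yields the claimed $O(\Delta + \log n)$ bound.

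When the IDs of neighbors are known in advance, the ID portion of the Kuhn round disappears: each vertex computes orientations locally, and the only Stage-1 Kuhn communication is the $O(\log \Delta)$-bit Kuhn color that each endpoint sends to the other. Moreover, the initial color that seeds the Cole--Vishkin sweep on each color-class path can be chosen as a canonical deterministic function of the two endpoint IDs (together with the already-shared Kuhn color), so both endpoints know it locally and the first Cole--Vishkin round requires no message. From the next round on the messages shrink geometrically starting at $O(\log \log n)$ bits and sum to $O(\log \log n)$ bits per edge. Since the $O(\log \Delta)$ Kuhn contribution is absorbed into the $\Delta$ term (as $\log \Delta \le \Delta$ for $\Delta\ge 1$), the grand total is $O(\Delta + \log \log n)$ bits per edge.

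The main technical obstacle is justifying that the ``first'' Cole--Vishkin round under known IDs is genuinely free, i.e., that both endpoints of every edge can deterministically reconstruct the same initial path-coloring identifier using only ID information already in their possession. Once this local reconstruction is in place, the remainder is a standard geometric-series calculation on the Cole--Vishkin message sizes, together with the trivial one-bit-per-round accounting for Stage 2.
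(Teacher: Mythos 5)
Your overall decomposition into Stage 1 (Kuhn orientation plus Cole--Vishkin on color classes) and Stage 2 (AG iterations at one bit per round), and the resulting bounds, match the paper's proof. But one intermediate claim is wrong and another, which you yourself flag as the main obstacle, is not actually resolved by the reason you give.

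The error: you take the Cole--Vishkin sweep to start from a proper $c$-coloring with $c = O(\Delta^2)$. That cannot be right, because the edges inside a single Kuhn color class all carry the \emph{same} pair $\langle i,j\rangle$; the $O(\Delta^2)$-coloring assigns one color to an entire path, so it cannot seed Cole--Vishkin on that path. Cole--Vishkin is seeded by an ID-derived labeling of the path edges, so $c = \Theta(n)$ (or $\Theta(n^2)$), and the round-$i$ message sizes begin at $O(\log n)$, not $O(\log\Delta)$. Your own known-IDs paragraph silently assumes this (you describe the seed as a function of endpoint IDs and put $O(\log\log n)$ as the size of the next round's message), so the two halves of your Stage-1 argument use inconsistent initial colorings. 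The final $O(\Delta + \log n)$ bound survives only because you already charge $O(\log n)$ for the ID exchange and because $\log\Delta \le \log n$; the paper's proof states the correct seed directly, saying the first CV round is ``performed based on IDs of $O(\log n)$ bits.''

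The unresolved part: you assert that with known neighbor IDs the first CV round requires no message because ``both endpoints know it locally,'' where ``it'' is $e$'s own seed color. That is true but insufficient. The CV update of $e=(u,v)$ requires the seed color of the adjacent path edge $e'=(v,w)$, and $u$ does not know $w$'s ID ($w$ is not a neighbor of $u$). The argument that actually works is that the \emph{shared} endpoint $v$ knows the IDs of both endpoints of $e$ and of $e'$, hence can compute $e$'s round-1 color locally, and the first bits put on the edge are then the $O(\log\log n)$-bit round-1 colors, after which the geometric series gives $O(\log\log n)$ total. That is the missing piece; once stated, your $O(\Delta+\log\log n)$ conclusion is correct and coincides with the paper's.
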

	\begin{proof}
		Exchanging initial IDs between neighbors requires $O(\log n)$ bits. Exchanging the colors during the $2$-defective $\Delta^2$-edge-coloring requires $O(\log \Delta)$ bits. The first round of Cole-Vishkin algorithm is performed based on IDs of $O(\log n)$ bits. The second round of Cole-Vishkin algorithm requires $O(\log \log n)$ bits, the third one requires $O(\log \log \log n)$ bits, and so on. The last round of Cole-Vishkin algorithm requires a constant number of bits. The exchange between neighbors of the resulting proper $O(\Delta^2)$-edge coloring of the input graph requires $O(\log \Delta)$ bits. Each of the following $O(\Delta)$ rounds requires $1$ bit per message.  
	\end{proof}
%}

We can also produce edge-coloring with exactly $(2\Delta - 1)$-colors as follows. Once the stage of $O(\Delta)$-edge-coloring terminates, we apply a procedure similar to One-bit AG halving reduction. (See Section \ref{sc:smbit}.) Specifically, let $k$ be the current number of colors, and $q = \left \lceil k/2 \right \rceil$. (Recall that in this algorithm $q$ does not have to be prime.) We represent each color of an edge as an ordered pair $\langle a_e, b_e \rangle$, where $a_e \in \{0,1\}$, $b_e \in \{0,1,..., q - 1 \}$. Then we execute $2\Delta$ rounds to halve the number of colors. In each round, for each edge $e = (u,v) \in E$, its endpoints $u,v$ check whether $b_e$ is distinct from all $b_{e'}$ of edges $e'$ adjacent on these endpoints. Then $v$ notifies $u$ whether this is the case for all edges adjacent on $v$. In parallel, $u$ notifies $v$ whether this is the case for all edges adjacent on $u$. If both $u$ and $v$ pass the check, they update the color of $e$ to $\langle 0, b_e \rangle$. Otherwise, they update it to $\langle 1, b_e + 1 \mbox{ mod } q \rangle$. Since each edge has at most $2\Delta - 2$ edges adjacent on it, within $2\Delta - 1$ rounds all edges $e \in E$ select a color with $a_e = 0$. (See Lemma \ref{lem:enoughcolorsonebit}.) Hence the number of colors is halved. Repeating this for a constant number of phases converts the $O(\Delta)$-edge-coloring into a $(2\Delta - 1)$-edge-coloring.
% each vertex sends to its neighbors all colors that are used by edges adjacent on it. Since the range of colors is $O(\Delta)$, this can be done by sending a boolean array of length $O(\Delta)$. This requires $O(\Delta)$ bit-rounds. Then, each vertex knows whether an edge adjacent on its has a locally-maximal color. (I.e., greater than the colors of all other edges adjacent to that edge.) Each endpoint of such an edge selects the minimum color in the range $[0, 2 \cdot \Delta - 2]$ as the new proper color of that edge. Since both endpoints have the same set of forbidden colors of all adjacent edges, both of them select the same new edge-color. Then, endpoints of edges with color updates notify their neighbors about the new color. Since the set of color updates consists of non-adjacent edges, each vertex has to send information about a single color, which requires $O(\log \Delta)$ bits. After $O(\Delta)$ such iterations of maximal-color reductions, we arrive at a $(2\Delta - 1)$-edge-coloring. 
We summarize this below.

\begin{thm}
	We compute $(2\Delta - 1)$-edge-coloring within $O(\Delta + \log^*n)$ rounds in the CONGEST model, within $O(\Delta + \log \log n)$ rounds in the Bit-Round model with knowledge of neighbors' IDs ($KT_1$ model), and within $O(\Delta + \log n)$ time in the Bit-Round model without knowledge of neighbors' IDs ($KT_0$ model).
\end{thm}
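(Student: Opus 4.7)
The plan is to combine three ingredients already set up in the paper: (i) the $O(\Delta^2)$-edge-coloring subroutine used in Lemma \ref{edgecolor} (Kuhn's $2$-defective $\Delta^2$-edge-coloring followed by Cole-Vishkin on each defective class), (ii) the edge-side simulation of the AG algorithm described just above Lemma \ref{edgecolor}, and (iii) the exact $(\Delta+1)$ refinement from Section \ref{sc:3dag} (Algorithm 3AG) lifted to the line graph. The first two give a proper $O(\Delta)$-edge-coloring in $O(\Delta + \log^* n)$ rounds, and then 3AG shrinks the palette to exactly $2\Delta - 1$ colors in an additional $O(\Delta)$ rounds, yielding the CONGEST claim.

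To execute 3AG on edges rather than vertices, I would represent each edge color as a triple $\langle a_e, b_e, c_e \rangle$ stored simultaneously at both endpoints, and have each endpoint $v$ of an edge $e = (u,v)$ perform the local consistency tests required by 3AG (``is $a_e$ distinct from $a_{e'}$ over all edges $e'$ incident to $v$?'' and the analogous test for $b_e$). These two Boolean outcomes, one for each coordinate, are exchanged in a single communication round, using $2$ bits in each direction along $e$. With both endpoints now holding the same four test outcomes for $e$, they deterministically compute the same next triple $\langle a_e, b_e, c_e \rangle$ for $e$ prescribed by 3AG. Thus the endpoints stay synchronized for $e$ throughout the execution, the coloring remains a proper edge-coloring at every step (by Lemma \ref{lem:propercol} lifted to the line graph), and the termination analysis of 3AG applies verbatim, finishing within $O(\Delta)$ iterations.

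For the two Bit-Round claims I need to bound the total bits transmitted across the preprocessing phase, since the refinement phase only costs $O(\Delta)$ bits per edge. Following Lemma \ref{edgebitcom}, Kuhn's $2$-defective coloring needs $O(\log \Delta)$ bits per edge to transmit colors in $\{1,\dots,\Delta\}$, which is absorbed into $O(\Delta)$. The Cole-Vishkin rounds cost $O(\log n)$ bits for the first iteration (since its input labels are IDs), then $O(\log\log n)$, $O(\log\log\log n)$, etc. If neighbors' IDs are already known, the initial ID exchange is free and the dominant cost becomes $O(\log \log n)$; otherwise it is $O(\log n)$. Summing, the Bit-Round complexities are $O(\Delta + \log\log n)$ and $O(\Delta + \log n)$, as required.

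The main subtlety, and what I would be careful about, is ensuring that the line-graph simulation of 3AG truly keeps both endpoints in lock-step even though each endpoint performs its ``conflict'' test over a different local star: the test is defined on adjacencies of $e$ in the line graph, which is exactly the union of the stars at $u$ and $v$, so the two bits each endpoint contributes cover disjoint halves and together reconstruct the correct global adjacency information for $e$. Verifying that this partition matches the one 3AG actually uses, and that the $2$-bit exchange suffices per iteration under the Bit-Round constraint, is the step I expect to require the most care; once that is in place the time and bit bounds fall out from Lemma \ref{edgecolor}, Lemma \ref{edgebitcom}, and Corollary \ref{col:squarerootcoloring}.
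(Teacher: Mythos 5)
Your proposal matches the paper's proof essentially verbatim: Kuhn's $2$-defective $\Delta^2$-edge-coloring, Cole--Vishkin on each color class, then the Section~\ref{sc:3dag} technique simulated at edges via two-bit exchanges per round, with the same Bit-Round bookkeeping as Lemma~\ref{edgebitcom}. The only cosmetic difference is that you interpose the $O(\Delta)$-palette AG stage from Lemma~\ref{edgecolor} before invoking the Section~\ref{sc:3dag} refinement, whereas the paper applies 3AG (including its $Z_N$-arithmetic final reduction) directly to the $O(\Delta^2)$-edge-coloring; this is redundant but harmless.
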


\subsection{$(2  \Delta -1)$-Edge Coloring of Oriented Forests in the CONGEST Model}
In this section we devise a $(2\Delta - 1)$-edge-coloring algorithm for oriented forests that requires $\log^* n + O(1)$ rounds using only small messages, i.e., it can be executed in the CONGEST model.
We note that the currently existing algorithm for this problem, due to \cite{PR01}, requires messages of size $O(\Delta)$.
Our algorithm colors an input tree $T$ as follows. (The same algorithm applies to oriented forests as well.)

The algorithm starts with computing a $3$-vertex-coloring of $T$, via Cole-Vishkin algorithm, in ${\log^{*}n + O(1)}$ rounds. Denote the resulting coloring by $\varphi$.
Then we perform a shift-down (for just one round), to ensure that all siblings have the same color. (In the shift-down operation, all vertices $v \in V$, except the root $r$, adopt the color of their parent $\pi(v)$. The root $r$ selects 
a color from the set $\{1,2\}$, different from its current color.)
Then for all vertices $v \in V$, such that $\varphi(v) = 1$, run in parallel: color their descending edges by colors $1,2,..,\Delta -1$, except for the root $r$, that might have $\Delta$ descending edges. The root colors these edges with colors $1,2,..,\Delta -1$, and $2 \Delta - 1$  (if indeed $\deg(r) = \Delta$ and ${\varphi(r) = 1}$).\\
Next, for all vertices $v \in V$, such that $\varphi(v) = 2$, run in parallel: color their descending edges by colors $\Delta, \Delta + 1, \Delta + 2,..., 2 (\Delta - 1)$. (The root may need an additional color, which is $2\Delta - 1$.)\\
Finally, for all vertices $v \in V$ such that $\varphi(v) = 3$, color their descending edges as follows (after all descending edges of vertices with $\varphi(v) = 1$ and $\varphi(v) = 2$ have been already colored).\\
%If $v \ne r$
As $\varphi(r) \in \{1,2\}$, we have $v \neq r$. It means that $v$'s parent $\pi(v))$ (of $\varphi$-color 1 or 2) already assigned the edge $(\pi(v), v)$ a color. (Generally, when a vertex $v$ colors its descending edges, it informs the opposite endpoint of the color that the edge between them received.) The vertex $v$  also knows the $\varphi$-color of all its children. So, it knows that edges descending from its children are all colored by colors either from the set $\{1,2,...,\Delta - 1\}$, or from the set $\{\Delta, \Delta + 1,..., 2( \Delta -1)\}$. In either case, there are at most $\Delta$ forbidden colors from edges descending from $v$. In other words, there are at least $\Delta -1$ permitted colors. Also, there are at most $ \Delta -1$ edges descending from $v$. So, $v$ edge-colors them by these colors.

%%%Otherwise, $v = r$. Again $r$ knows the $\varphi$ vertex-colors of all its children edges. Hence there are at most ${ \Delta - 1}$ forbidden colors from edges descending from $r$ (and they are all from the same range). So, there are at least ${\Delta}$ colors which are permitted for all the edges descending from the root. Moreover, there are at most $ \Delta$ descending edges. Hence, the root can assign colors to all its descending edges.\\
%%%The total running time is ${\log^{*}n + O(1)}$ rounds, and only short messages were used.

\begin{thm}
Our algorithm computes $(2\Delta - 1)$-edge-coloring of oriented $n$-vertex forests in $\log^* n + O(1)$ time, in the CONGEST model.
\end{thm}

\section{Arbdefective $O(\frac{\Delta}{p})$-coloring with defect $O(p)$}
Lovasz \cite{L66} showed that in a graph with maximum degree $\Delta$, there exists a $p$-defective $\frac{\Delta}{p}$-coloring, where $1 \le p \le \Delta$.
In this section we devise an algorithm for $O( \sqrt{\Delta})$-{\em arbdefective} $O(\sqrt{\Delta})$-coloring within $O(\sqrt{\Delta} + \log^* n)$ rounds.
More generally, our algorithm computes an $O(p)$-arbdefective $O(\Delta/p)$-coloring within time $O(\Delta/p + \log^* n)$. (Definitions of defective- and arbdefective-colorings can be found in Section 2.)
Our algorithm starts with computing an $O(\sqrt{\Delta})$-defective $O(\Delta)$-coloring. 
This is done using the algorithm of \cite{BEK14} within $O(\log^*n)$ rounds. 
(More generally, the algorithm of \cite{BEK14} computes a $p$-defective $O((\Delta/p)^2)$-coloring, for any positive parameter $p$, in $\log^* n + O(1)$ time.)
Then we perform $O(\Delta/p) = O(\sqrt{\Delta})$ rounds of color updates, rather than $O(\Delta)$ as in our Additive-Group algorithm. The update rule for arbdefective coloring is different from the rule for proper coloring. Specifically, we tolerate up to $p$ conflicts.
In other words, instead of setting $\psi_{i+1}(v) =  \langle 0, b\rangle  $ only if there are no neighbors with the same value $b$ in the second coordinate, we set this if there are at most $p = \Theta( \sqrt{\Delta})$ neighbors of different $\psi_i$-color with the same second coordinate $b$. We will show in the sequel that after $O(\Delta/p) = O(\sqrt{\Delta})$ rounds all colors are of the form $\langle 0,b\rangle $, and each color class induces a subgraph of arboricity $O(p)$. Thus, as a result we have an $O(\sqrt \Delta)$-arbdefective $O(\sqrt \Delta)$-coloring, and, more generally, an $O(p)$-arbdefective $O(\Delta/ p )$-coloring. The operations are performed in a field of a prime characteristic $q$, $q \geq 2 \lceil \Delta / p \rceil + 1$.  The pseudocode of the algorithm is provided below. The next lemmas analyze its running time and show its correctness. 

\begin{algorithm}
	\caption{Arbdefective-Color($G,v, p = \sqrt{\Delta}$)}
	\begin{algorithmic}[1]
		\STATE $\psi = $ compute an $O(p)$-defective $O((\Delta/p)^2)$-coloring of $G$ using \cite{BEK14}  /* $O(\sqrt \Delta)$-defective $O(\Delta)$-coloring */
		
		\STATE represent $\psi_0(v)$ as an ordered pair $\langle a,b\rangle $, such that $a,b \in O(\Delta / p)$. /*  $a,b \in O(\sqrt \Delta)$ */
		
		%\STATE $p = \sqrt \Delta$
		
		\STATE let $q = \Theta(\Delta/p)$ be the smallest prime such that $q$ is greater than $2\left \lceil \Delta/p \right \rceil + 1$
		
		\FOR {$i = 0,1,..., 2\left \lceil \Delta/p \right \rceil$}
		
		\IF {$v$ has at most $p$ neighbors $u$ of a different $\psi_i$-color, such that the second coordinate of $\psi_i(u)$ equals the second coordinate of $\psi_i(v)$}
		
		\STATE $\psi_{i+1}(v) = \langle 0,b\rangle $
		
		\ELSE
		
		\STATE $\psi_{i+1}(v) = \langle a, (a + b) \bmod q\rangle $
		
		\ENDIF
		
		\STATE send $\psi_{i+1}(v)$ to all neighbors of $v$
		
		\STATE receive from all neighbors of $v$ their colors $\psi_{i+1}$
		
		\ENDFOR
		
	\end{algorithmic}
\end{algorithm}

\begin{lem} \label{arbcolorparts}
	The produced coloring $\psi_{2\left \lceil \Delta/p \right \rceil + 1}$ is of the form $\langle 0,b\rangle $, $0 \leq b < q = \Theta(\Delta/p)$, for all $v \in V$.
\end{lem}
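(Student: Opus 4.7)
The plan is to argue, via a conflict-budget and pigeonhole argument in the spirit of Corollary~\ref{col:squarerootcoloring}, that every vertex $v$ must encounter a round in which at most $p$ of its same-second-coordinate neighbors are of a different initial $\psi$-color within the first $T := 2\lceil \Delta/p \rceil + 1$ iterations, at which point the algorithm freezes $v$ at a color of the form $\langle 0, b \rangle$. Observe first that once $\psi_i(v) = \langle 0, b \rangle$, both branches of the update rule yield $\psi_{i+1}(v) = \langle 0, b \rangle$ (since $a = 0$ turns the ``working'' branch into a no-op), so finalized colors persist; it thus suffices to upper-bound the number of iterations in which $v$ can remain in the working stage.

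Next I would adapt Lemmas~\ref{lem:enoughcolors} and~\ref{lem:enoughcolorsb} to the present setting. For a neighbor $u$ of $v$ with a different initial $\psi$-color and with $a_u \neq a_v$, the second coordinates of $u$ and $v$ evolve as $b_u + i a_u$ and $b_v + i a_v$ modulo the prime $q$ as long as both are working, so they can coincide in at most one round of $\{0, 1, \ldots, q-1\}$; once one of them finalizes, the same primality argument yields another at-most-one coincidence. Hence such a $u$ contributes at most two ``conflict incidents'' to $v$'s count over the $T \le q$ iterations. If instead $a_u = a_v$ but the initial $\psi$-colors differ (so $b_u \neq b_v$), the two vertices evolve in lockstep and never share a second coordinate, contributing zero. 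Neighbors with the same initial $\psi$-color as $v$ are excluded from the conflict count by the algorithm itself.

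Summing over the at most $\Delta$ relevant neighbors yields a total budget of at most $2\Delta$ conflict incidents experienced by $v$ while in the working stage. If $v$ were still working after all $T$ iterations, then in each of the $T$ rounds its conflict count must have exceeded $p$, giving a total of at least $(p+1)T > pT \ge p \cdot 2\lceil \Delta/p \rceil \ge 2\Delta$, contradicting the budget. Hence $v$ enters the finalized stage within $T$ iterations, which is exactly the statement of the lemma.

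The main obstacle I anticipate is the careful per-neighbor bookkeeping that delivers the clean ``at most two incidents per neighbor'' bound: one must verify that the working-to-finalized transition of $u$ (or of $v$) does not cause a double-count across the two phases, and one must confirm that the corner case $a_u = a_v$ with differing initial $\psi$-colors really contributes nothing. Both checks are of the same mod-$q$ flavor as in Lemmas~\ref{lem:enoughcolors} and~\ref{lem:enoughcolorsb}, so I expect them to go through without additional machinery.
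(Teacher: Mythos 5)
Your proof is correct and essentially matches the paper's: each neighbor $u$ of $v$ with distinct initial $\psi$-color conflicts with $v$ at most twice over the $T \le q$ rounds (once while both are working, once after one of them finalizes), giving a budget of $2\Delta$ conflict incidents that, by pigeonhole over the $T = 2\lceil \Delta/p\rceil + 1$ rounds, forces a round with at most $p$ conflicts, at which $v$ finalizes. One minor slip: in the case $a_u = a_v$, $b_u \ne b_v$ you claim zero conflicts, but after $u$ finalizes (freezing its second coordinate at $b_u + i^{*}a_u \bmod q$) the still-working $v$ can collide with it once, since $(j - i^{*})a_u \equiv b_u - b_v \pmod q$ has a unique solution in $[0,q)$; this is still at most one and therefore within the two-per-neighbor budget, so your conclusion is unaffected.
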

	\begin{proof}
		Consider a vertex $v \in V$. The vertex $v$ can conflict at most twice with each neighbor $u$ of different $\psi$-color within $q$ rounds, i.e., at most once before $u$ finalizes its color, and at most once after that. If $v$ conflicts with more than $p$ neighbors in each round, it means it has more than $\frac{1}{2} \cdot p \cdot (2\left \lceil \Delta/p \right \rceil + 1) > \Delta$ neighbors. This is a contradiction. Therefore, there is a round $i \in \{0,1,...,2\left \lceil \Delta/p \right \rceil\}$ in which $v$ conflicts with at most $p$ neighbors. In this round its color finalizes, i.e., becomes of the form $\langle 0,b\rangle $.
	\end{proof}
%}

In the next lemma we bound the arbdefect of the resulting coloring. 
\begin{lem} \label{arbcol}
	The resulting coloring $\psi_{2\left \lceil \Delta/p \right \rceil + 1}$ has arbdefect at most $O(p) = O(\sqrt{\Delta})$.
\end{lem}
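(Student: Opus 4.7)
The plan is to bound, for each final color class $C_b = \{v \in V : \psi_{2\lceil \Delta/p\rceil + 1}(v) = \langle 0,b\rangle\}$, the arboricity of $G[C_b]$ by exhibiting an edge orientation whose maximum out-degree is $O(p)$; Nash--Williams then yields arboricity $O(p) = O(\sqrt{\Delta})$. For every vertex $u$ let $f_u$ denote the first round at which $\psi_{f_u}(u)$ has first coordinate $0$ (the round $u$ \emph{finalizes}). I will orient each edge $(u,v) \in G[C_b]$ from the later finalizer to the earlier one, breaking ties by ID, so that out-neighbors of $u$ are exactly those $v \in C_b \cap \Gamma(u)$ with $f_v < f_u$ (or equal with smaller ID).

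The key step is to argue that every out-neighbor $v$ of a vertex $u \in C_b$ has $\psi_{f_u - 1}(v)$ whose second coordinate is exactly $b$. If $f_v < f_u$, then $v$ is already in its final stage by round $f_u - 1$ with color $\langle 0, b\rangle$, so its second coordinate is $b$. If $f_v = f_u$, then at round $f_u - 1 = f_v - 1$ both $u$ and $v$ still lie in the working stage, yet both produce a color of the form $\langle 0,b\rangle$ in the next round; the update rule $\psi_{i+1}(v) = \langle 0, b\rangle$ then forces the second coordinate of $\psi_{f_v - 1}(v)$ to equal $b$. In both cases, $v$ contributes to the tally of neighbors of $u$ with matching second coordinate at round $f_u - 1$.

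With this observation in place, I partition the out-neighbors of $u$ according to their original $\psi$-color, where $\psi$ is the $O(p)$-defective coloring of \cite{BEK14} computed in line 1. Out-neighbors whose $\psi$-color differs from that of $u$ are directly constrained by $u$'s finalization condition at round $f_u - 1$: there are at most $p$ such neighbors with matching second coordinate. Out-neighbors sharing $u$'s $\psi$-color are bounded by the defect of $\psi$, which caps the total number of same-$\psi$-color neighbors of $u$ by $O(p)$, regardless of current second coordinates. Summing the two contributions, the out-degree of $u$ is $O(p)$. Since any induced subgraph of $G[C_b]$ inherits the orientation, every such subgraph $H$ satisfies $|E(H)| \le O(p) \cdot |V(H)|$, and Nash--Williams gives arboricity $O(p)$ of $G[C_b]$.

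The main delicate point will be the corner case $f_u = 0$, in which the finalization check at ``round $-1$'' is vacuous. Here any out-neighbor $v$ must satisfy $\psi_0(v) = \langle 0,b\rangle = \psi_0(u)$; since the pair encoding in line 2 is injective on the $O((\Delta/p)^2)$ colors of $\psi$, any such $v$ shares $u$'s $\psi$-color, so the defect bound alone delivers the $O(p)$ cap. A minor additional observation is that Nash--Williams only guarantees arboricity at most twice the maximum out-degree (achieved on subsets of size $2$), but this factor-of-two loss is absorbed in the $O(\cdot)$.
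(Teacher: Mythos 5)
Your proof is correct and follows essentially the same approach as the paper: orient each edge toward the earlier finalizer, then bound out-degree by $p$ (different-$\psi$-color out-neighbors, constrained by the finalization test) plus $O(p)$ (same-$\psi$-color out-neighbors, constrained by the defect of $\psi$), and convert out-degree to arboricity. The paper's version glosses over the tie case and the $f_u = 0$ case, both of which you correctly fill in, and it converts out-degree to arboricity via an explicit forest-labeling of the (acyclic) orientation rather than Nash--Williams, but these are cosmetic differences.
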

	\begin{proof}
		For the purpose of analysis, orient each edge $(u,v) \in E$ towards the endpoint that first set $\psi_{i+1}$ to $\langle 0,b\rangle $. If both endpoints $u,v$ did it in the same round, orient $(u,v)$ towards the endpoint with greater ID. Let $i$ denote the round in which $v$ selects a color of the form $\langle 0,b\rangle $ for the first time. Observe that once a vertex $v$ finalizes its color to $\langle 0,b\rangle $, its outgoing neighbors have already colors of the form $\langle 0,b'\rangle $.  Thus, they will never change their colors from this moment on. Moreover, the number of such neighbors of $v$ of different original $\psi$-color and the same second coordinate of $\psi_i$ is at most $p = \sqrt \Delta$. In addition, $v$ may have at most $O(p)$ neighbors with the same original $\psi$-color, since the coloring $\psi$ computed in line 1 is $O(p)$-defective. Thus, upon termination all vertices of the same $\psi_{2\left \lceil \Delta/p \right \rceil + 1}$-color induce a subgraph with arboricity $O(p)$. This is because each vertex in such a subgraph has $O(p)$ outgoing edges, each of which can be assigned a distinct label from a range of size $O(p)$. Then, all edges of the same label form a forest, and the number of forests is $O(p)$. In other words, the resulting coloring has arbdefect at most $O(p)$.
	\end{proof}
%}

In the next lemma we analyze the running time of the algorithm.
\begin{lem} \label{arbrounds}
	The running time of the algorithm is $O(\Delta/p + \log^* n) = O(\sqrt \Delta + \log^* n)$.
\end{lem}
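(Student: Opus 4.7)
The plan is to tally up the cost of the two phases of Algorithm Arbdefective-Color and invoke Lemma \ref{arbcolorparts} for the termination bound. The initial phase, namely line 1, invokes the algorithm of \cite{BEK14} to compute an $O(p)$-defective $O((\Delta/p)^2)$-coloring, which by the cited result runs in $\log^* n + O(1)$ rounds. Lines 2 and 3 are purely local preprocessing (representing the color as a pair and selecting a prime $q$ of size $\Theta(\Delta/p)$, which exists by Bertrand-Chebyshev as recalled in Section \ref{sc:pr}) and incur no communication.

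Next I would analyze the main loop of lines 4--10. By Lemma \ref{arbcolorparts}, the loop runs for exactly $2\lceil \Delta/p \rceil + 1$ iterations, and by that lemma each vertex has terminated (i.e.\ reached a color of the form $\langle 0,b\rangle$) by the end of the final iteration. In each iteration every vertex needs only to examine the second coordinates of its neighbors' current colors in order to decide between the two update rules of lines 6 and 8. This is a standard single-round exchange in the $\mathsf{LOCAL}$ model, so each loop iteration costs $O(1)$ rounds.

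Putting the two phases together, the total running time is
\[
\log^* n + O(1) + O(1) \cdot \bigl(2\lceil \Delta/p \rceil + 1\bigr) = O(\Delta/p + \log^* n).
\]
Substituting $p = \sqrt{\Delta}$ yields the stated bound $O(\sqrt{\Delta} + \log^* n)$. There is no real obstacle here; the only point to double-check is that the condition chosen for $q$ in line 3 (namely $q > 2\lceil \Delta/p \rceil + 1$) indeed admits a prime of the required size $\Theta(\Delta/p)$, but this is immediate from Bertrand-Chebyshev, so the argument goes through cleanly.
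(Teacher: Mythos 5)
Your proposal is correct and follows essentially the same approach as the paper: charge $O(\log^* n)$ for the defective-coloring subroutine from \cite{BEK14}, note that lines 2--3 are local, and observe that the for-loop is executed $O(\Delta/p)$ times with one communication round per iteration. (One small stylistic note: invoking Lemma \ref{arbcolorparts} is not actually needed for the running-time bound, since the loop has a fixed iteration count by construction; that lemma is for correctness rather than termination.)
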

	\begin{proof}
		Computing a defective coloring in line 1 requires $O(\log^* n)$ time. Each iteration of the for-loop requires a single round. There are $O(\Delta/p) = O(\sqrt \Delta)$ such iterations.
	\end{proof}
%}
The latter result gives rise to improved $(1 + \epsilon)\Delta$-coloring and $(\Delta + 1)$-coloring algorithms. This is summarized in the next theorem.

\begin{thm} \label{coloring}
	We compute $(1 + \epsilon)\Delta$-coloring within $O(\sqrt \Delta + \log^* n)$ deterministic time, for an arbitrarily small constant $\epsilon > 0$, and $(\Delta + 1)$-coloring within $O(\sqrt {\Delta \log \Delta} \log^* \Delta + \log^* n)$ deterministic time.
\end{thm}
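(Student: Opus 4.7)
The plan is to plug our faster Arbdefective-Color subroutine (Lemmas \ref{arbcolorparts}--\ref{arbrounds}) into the decomposition-and-recolor framework of Barenboim \cite{B15} and Fraigniaud et al.\ \cite{FHK16}. In that framework, one first computes an arbdefective coloring to split the input graph into color classes of small arboricity, and then properly recolors each class in parallel by running a subroutine tailored to bounded-arboricity subgraphs on disjoint (or carefully coordinated) segments of the global palette. Because our arbdefective step runs in $O(\Delta/p+\log^*n)$ rounds (as opposed to the $\tilde O((\Delta/p)\cdot\mathrm{polylog}\,\Delta)$ that earlier constructions incur), substituting it into the frameworks of \cite{B15,FHK16} directly shaves the corresponding polylogarithmic factors from their running-time bounds.

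For the $(1+\epsilon)\Delta$-coloring, I would invoke Arbdefective-Color with $p=\Theta(\sqrt\Delta)$, obtaining in $O(\sqrt\Delta+\log^*n)$ rounds an $O(\sqrt\Delta)$-arbdefective coloring with $k=O(\sqrt\Delta)$ classes, each inducing a subgraph of arboricity $O(\sqrt\Delta)$. Each class is then recolored in parallel using its own palette of $\Theta(\sqrt\Delta)$ colors via a bounded-arboricity subroutine from \cite{B15}, so that the palette sizes add up to at most $(1+\epsilon)\Delta$. Tuning the constants hidden in the two $\Theta(\sqrt\Delta)$ expressions against one another, the inner recoloring finishes within $O(\sqrt\Delta+\log^*n)$ additional rounds, giving the stated bound.

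For the $(\Delta+1)$-coloring the sharper color budget forces a different tradeoff between the two stages. I would invoke Arbdefective-Color with $p=\Theta(\sqrt{\Delta\log\Delta})$ to obtain in $O(\sqrt{\Delta/\log\Delta}+\log^*n)$ rounds a coloring with $O(\sqrt{\Delta/\log\Delta})$ classes of arboricity $O(\sqrt{\Delta\log\Delta})$, and then recolor each class via a Kuhn--Wattenhofer-style reduction combined with $O(\log^*\Delta)$ further rounds of internal arbdefective refinement, analogously to \cite{B15,FHK16}. Each class takes $O(\sqrt{\Delta\log\Delta}\cdot\log^*\Delta)$ rounds, and since the classes are handled in parallel under a coordinated palette assignment, the total running time is $O(\sqrt{\Delta\log\Delta}\log^*\Delta+\log^*n)$.

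The main obstacle will be the palette bookkeeping in the second part: fitting into exactly $\Delta+1$ total colors (rather than a constant factor more) requires the per-class palettes to overlap in a controlled manner, which in turn requires that the arboricity bound delivered by Lemma \ref{arbcol} is tight enough, with explicit constants, for the inner recoloring to succeed within the allotted rounds. This is essentially the same obstacle addressed in \cite{B15,FHK16}, so once our arbdefective algorithm is shown to match the interface their outer framework expects, their palette-allocation analysis carries over and delivers both claimed time bounds.
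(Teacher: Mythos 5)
Your treatment of the $(1+\epsilon)\Delta$-coloring matches the paper: plug Arbdefective-Color with $p=\Theta(\sqrt\Delta)$ into the outer procedure of \cite{B15} (replacing line 1 of their Algorithm 1), and the polylogarithmic overhead disappears. That part is fine.

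For the $(\Delta+1)$-coloring, however, you have the parameter tradeoff backwards, and with your choice the recursion of \cite{FHK16} does not converge. Lemma 4.2 of \cite{FHK16} takes a $\beta$-arbdefective $k$-coloring subroutine and delivers a $(\Delta+1)$-coloring in time $T_A(\Delta)=O(k\log^*\Delta)+T_A\bigl(O(\beta^2\log\Delta)\bigr)$, so one needs $\beta^2\log\Delta=o(\Delta)$ for the recursion to shrink. In Arbdefective-Color the parameter $p$ controls the arbdefect (it is $O(p)$) and the number of classes (it is $O(\Delta/p)$). You set $p=\Theta(\sqrt{\Delta\log\Delta})$, which gives $\beta=O(\sqrt{\Delta\log\Delta})$ and hence $\beta^2\log\Delta=\Theta(\Delta\log^2\Delta)$: the recursion parameter grows rather than shrinks, and the argument collapses. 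The paper instead takes $p=\Theta(\sqrt{\Delta/\log\Delta})$, i.e.\ $\beta=\sqrt{\Delta/(c\log\Delta)}$ and $k=\sqrt{c\Delta\log\Delta}$, so that $\beta^2\log\Delta=\Delta/c$ decreases geometrically and the per-level cost $O(k\log^*\Delta)=O(\sqrt{\Delta\log\Delta}\log^*\Delta)$ dominates. (Note also that the arbdefective step then costs $O(\Delta/p+\log^*n)=O(\sqrt{\Delta\log\Delta}+\log^*n)$, not $O(\sqrt{\Delta/\log\Delta}+\log^*n)$ as you wrote.) Finally, your description of what happens per class (``Kuhn--Wattenhofer-style reduction plus $O(\log^*\Delta)$ rounds of refinement'') is looser than what is needed: the machinery you should invoke is precisely the recursion of Lemma 4.2 of \cite{FHK16}, with Arbdefective-Color replacing their arbdefective subroutine; once that substitution is made, the palette bookkeeping you worry about at the end is entirely their Lemma's job, not yours.
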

	\begin{proof}
		In \cite{B15} it was shown that given an $O(\sqrt \Delta)$-arbdefective $O(\sqrt \Delta)$-coloring one can compute a proper  $(1 + \epsilon)\Delta$-vertex-coloring within $O(\sqrt \Delta + \log^* n)$ deterministic time. (For more details, we refer the reader to the discussion in Section 3.4 of \cite{B15}.
		However, such an arbdefective coloring is computed in \cite{B15} only within time $(\sqrt \Delta \log^3 \Delta + \log^* n)$. See Lemma 3.5, Corollary 3.12, and the discussion preceding it in \cite{B15}. Consequently, the overall running time of the algorithm of \cite{B15} for $(1 + \epsilon)\Delta$-coloring is $(\sqrt \Delta \log^3 \Delta + \log^* n)$ as well.)  Our improved running time of arbdefective coloring (cf. Lemma \ref{arbrounds}) in conjunction with the procedure of \cite{B15} (i.e., by replacing the invocation of line 1 of Algorithm 1 of \cite{B15} by an invocation of our new algorithm Arbdefective-Color), gives rise to  a deterministic $(1 + \epsilon)\Delta$-coloring within $O(\sqrt \Delta + \log^* n)$ time.

It is shown in \cite{FHK16} that a deterministic $(\Delta + 1)$-coloring is obtained in $O(\sqrt \Delta \log^{2.5} \Delta + \log^* n)$ time using arbdefective colorings. Specifically, the proof of Lemma 4.2 of \cite{FHK16} shows that  given an algorithm that starting from a proper $O(\Delta^2)$-coloring computes a $\beta$-arbdefective $k$-coloring in $O(k)$ time, then a proper $(\Delta + 1)$-coloring is computed within time $O(\log^* n + T_A)$, where $T_A$ is given by the recursive formula  $T_A(\Delta) = O(k \log^* \Delta) + T_A(O(\beta^2 \log \Delta))$. By setting $\beta = \sqrt {\Delta/ (c \log \Delta)}$ and $k = \sqrt {c \Delta \log \Delta}$, for a sufficiently large constant $c$, this recursive formula evaluates to $O(\sqrt {\Delta \log \Delta} \log^* \Delta)$. Moreover, we compute such $\beta$-arbdefective $k$-coloring within $O(\sqrt {\Delta \log \Delta} + \log^* n)$ time. (See Lemma \ref{arbrounds}.) Thus by using our Arbdefective-Color algorithm in conjunction with the procedure of \cite{FHK16}, we obtain $(\Delta + 1)$-coloring in $O(\sqrt {\Delta \log \Delta} \log^* \Delta
+ \log^* n)$ time.
\end{proof}
%}
Hence this algorithm improves the state-of-the-art running time of $(\Delta + 1)$-coloring by a factor of $O(\log^2 \Delta / \log^* \Delta)$.

\section{3-Dimensional Additive Group Algorithm} \label{sc:3dag}

In Section \ref{sc:ag} we described our Additive Group (shortly AG) algorithm that starts from a proper $O(p^2)$-coloring, for some prime $p \geq 2 \cdot \Delta + 1$, and computes a proper $p$-coloring in $O(p)$ rounds.
This algorithm can be used, of course, also for decreasing the number of colors more than quadratically. Specifically, if we have an $O(p^3)$-coloring, for some prime $p \geq 2 \cdot \Delta +1$, we can decrease the number of colors to $O(p)$ in the following way. Partition the palette $[p^3]$ into $p$ disjoint sub-palettes $[p^2], [p^2+1, 2 p^2],...,[p^3-p^2+1,p^3]$, and run AG($p$) algorithm in each sub-palette in parallel. Within $O(p)$ rounds the number of colors reduces to $O(p^2)$, and by an additional application of AG($p$), we obtain a $p$-coloring in overall $2 \cdot O(p) =O(p)$ rounds.

In some faulty network setting it is, however, desirable to employ algorithms that do not consist of several distinct phases, like the algorithm above. These distinct phases may pose a problem when faults are introduced, and some vertices are in one phase of the algorithm, while others are in another.
We, therefore, next devise a variant of our AG algorithm that reduces the number of colors from $O(p^3)$ to $O(p)$ within $O(p)$ rounds, but it is more “uniform” than the above algorithm, i.e., at all times all vertices perform precisely the same step. We call this algorithm {\em 3-dimensional AG with a parameter $p$}, or shortly, 3AG($p$). 
The algorithm starts by representing colors $\psi(v) = \langle c_v, b_v, a_v \rangle$ as triples, $a_v,b_v,c_v \in Z_p$. 
It then runs the following iterative step for $2 \cdot p$ rounds. We will assume $p \geq 3 \cdot \Delta + 1$. All additions are in $Z_p$.

\begin{algorithm}
	\caption{3AG($p$)}
	\begin{algorithmic}[1]
		%\qquad
		
		\FOR {$v \in V$ in parallel}
		
		\IF {$c_v \neq 0$}
		
		\IF {$\forall u \in \Gamma(v)$ it holds that $b_v \neq b_u$}
		\STATE $\psi(v) = \langle 0,b_v, a_v\rangle $
		\ELSE
		\STATE $\psi(v) = \langle c_v,b_v + c_v, a_v\rangle $
		\ENDIF
		
		\ELSE
		
		\IF {$\forall u \in \Gamma(v)$ it holds that $a_v \neq a_u$}
		\STATE $\psi(v) = \langle 0,0, a_v\rangle $
		\ELSE
		\STATE $\psi(v) = \langle 0,b_v, a_v + b_v\rangle $
		\ENDIF
		
		\ENDIF
		
		\ENDFOR
	\end{algorithmic}
\end{algorithm}

Next, we analyze the algorithm.
\begin{lem} 
	Suppose we have a proper coloring $\varphi$. Then the coloring $\psi$ obtained after one round of 3AG($p$) is proper as well.
\end{lem}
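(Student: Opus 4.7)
My plan is to prove properness by a case analysis on whether $c_v$ and $c_u$ vanish, paralleling the two-coordinate argument of Lemma~\ref{lem:propercol}. A vertex takes one of two top-level branches depending on whether its $c$-coordinate is zero, and within each branch one of two sub-branches depending on whether it sees a conflict on the active coordinate ($b$ when $c\ne 0$, $a$ when $c=0$). Up to symmetry there are three top-level cases: both $c\ne 0$, both $c=0$, and mixed.

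\textbf{Both $c_v,c_u\ne 0$.} The $a$-coordinate is never touched, and the two vertices operate only on the pair $(c,b)$, using exactly the AG update rule. If $(c_v,b_v)\ne(c_u,b_u)$, Lemma~\ref{lem:propercol} applied to these two coordinates shows that $\psi(v)$ and $\psi(u)$ already disagree on the first two coordinates. If $(c_v,b_v)=(c_u,b_u)$, then properness of $\varphi$ forces $a_v\ne a_u$; each of $u,v$ sees the $b$-conflict they pose to each other, both take the working branch, and the updated triples agree on the first two coordinates but differ on the untouched third coordinate $a$. The symmetric case $c_v=c_u=0$ is handled identically, with $(b,a)$ playing the role of $(c,b)$: the active coordinate is now $a$, ``finalize'' means $b\gets 0$, and ``advance'' means $a\gets a+b$; the same dichotomy on $(b_v,a_v)$ versus $(b_u,a_u)$ concludes properness.

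\textbf{Mixed case.} Suppose $c_v\ne 0$ and $c_u=0$. If $v$ stays in the working branch, then $\psi(v)$ has first coordinate $c_v\ne 0$ while $\psi(u)$ has first coordinate $0$, so the triples differ. If $v$ finalizes, $\psi(v)=\langle 0,b_v,a_v\rangle$ and the finalization condition forces $b_v\ne b_w$ for every neighbor $w$, hence $b_v\ne b_u$. If in addition $u$ took its working sub-branch, then $\psi(u)=\langle 0,b_u,a_u+b_u\rangle$ and the second coordinates already separate $\psi(u)$ from $\psi(v)$. If instead $u$ finalized to $\langle 0,0,a_u\rangle$, then either $b_v\ne 0$ (second coordinates differ) or $b_v=0$, in which case $u$'s finalization condition forces $a_u\ne a_w$ for every neighbor, in particular $a_u\ne a_v$, so the third coordinates differ.

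\textbf{Main obstacle.} The only genuinely delicate sub-case is the last one, where $c_v\ne 0$, $c_u=0$, $v$ finalizes with $b_v=0$, and $u$ finalizes to $\langle 0,0,a_u\rangle$: here neither the first nor the second coordinate separates $\psi(v)$ from $\psi(u)$, and one must simultaneously exploit $v$'s no-$b$-conflict condition and $u$'s no-$a$-conflict condition to surface a discrepancy in the third coordinate. Everything else is a direct reuse of the AG analysis from Section~\ref{sc:ag}.
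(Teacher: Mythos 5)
Your proof is correct and follows essentially the same route as the paper's: a case analysis on which stage ($c\ne 0$ vs.\ $c=0$) each endpoint of a fixed edge is in, reusing the two-dimensional AG argument (Lemma~\ref{lem:propercol}) for the two homogeneous cases and a direct coordinate-by-coordinate check for the mixed case. Your top-level split into both-nonzero / both-zero / mixed is a slightly more symmetric organization than the paper's nested case tree, and you correctly isolate and resolve the one genuinely delicate sub-case (both endpoints finalizing to a color with zero first and second coordinates), but the content of the argument is the same.
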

\begin{proof}
	Denote $\varphi(v)= \langle c_v,b_v,a_v \rangle$ and consider an edge $(u,v)$.
	We split the analysis into two cases, depending on whether $c_v$ is non-zero.\\
	{\bf Case 1:} $(c_v \neq 0)$. In this case our analysis splits again into two cases, depending on whether all neighbors $u'$ of $v$ have $b_u \neq b_v$, or not.\\
	{\bf Case 1.1:} ($\forall u' \in \Gamma(v)$, $b_{u'} \neq b_v$). Then the algorithm sets: $\psi(v) = \langle 0,b_v, a_v\rangle$.\\
	The vertex $u \in \Gamma(v)$ (recall that we have fixed an edge $(u,v)$) with $\varphi(u) = \langle c_u,b_u,a_u \rangle$ could have been in one of the following cases.\\
	%This can be one of the following cases.\\
	{\bf Case 1.1.1:} $(c_u \neq 0)$. Then, if for every $z \in \Gamma(u)$, we have $b_z \neq b_u$, then $\psi(u) = \langle 0, b_u, a_u \rangle$. But recall that $b_u \neq b_v$, and thus $\psi(u) \neq \psi(v)$ as required.
	Otherwise, there exists a neighbor $z \in \Gamma(u)$ with $b_z = b_u$. Then the algorithm sets $\psi(u) = \langle c_u , b_u + c_u, a_u \rangle$
	and $c_u \neq 0$. 
	But $\psi(v) = \langle0, b_v, a_v \rangle$, i.e., $\psi(v) \neq \psi(u)$.\\
	%Then, we either have $a_v \neq a_u$ (implying $\psi(v) \neq \psi(u)$) or $\langle c_v, b_v \rangle \neq \langle c_u, b_u \rangle$, But we also have $b_v \neq b_u$. Hence, $\forall u' \in \Gamma(u)$, $b_u' \neq b_v$.\\
	%Now, if $b_v = b_u + c_u$ then $c_u \neq 0$ and so $\psi(u) \neq \psi(v)$. Otherwise, $b_v \neq b_u + c_u$ and thus the second coordinate of $\psi(u)$ and $\psi(v)$ are different. \\
	{\bf Case 1.1.2:} $(c_u = 0)$. In this case $\varphi(u) = \langle 0, b_u, a_u \rangle$. The analysis here splits again to a number of sub-cases.\\
	{\bf Case 1.1.2.a} ($b_u = 0$). Then $\varphi(u) = \langle 0,0,a_u\rangle$, and so $\psi(u) = \langle 0,0,a_u \rangle$ as well.
	But we have for every $u' \in \Gamma(v)$, $b_{u'} \neq b_v$, and so $b_v \neq 0$. 
	Hence $\psi(v) \neq \psi(u)$.
	\\ 
	{\bf Case 1.1.2.b:} $b_u \neq 0$, but $b_u$ stayed as is, i.e.,  $\psi(u) = \langle 0, b_u, a_u+b_u \rangle$ (this means that there exists a neighbor $z \in \Gamma(u)$ with $a_z = a_u$). 
	But then again $b_v \neq b_u$, because for every $u' \in \Gamma(v)$, $b_{u'} \neq b_v$. Hence $\psi(v) \neq \psi(u)$.
	\\
	{\bf Case 1.1.2.c:} $\varphi(u) = \langle 0 , b_u, a_u \rangle$ and $b_u \neq 0$ and $\forall z \in \Gamma(u)$, $a_z \neq a_u$. Then $\psi(u) = \langle 0, 0, a_u \rangle$. But then, in particular, $a_v \neq a_u$, and so $\psi(v) = \langle 0, b_v, a_v \rangle \neq \langle 0, 0, a_u \rangle = \psi(u)$, as required.\\
	{\bf Case 1.2:} ($c_v \neq 0$, and there exists $u' \in \Gamma(v)$ with $b_{u'} = b_v$).
	Then $\psi(v) = \langle c_v,b_v+c_v, a_v \rangle$. 
	Then if $c_u = 0$ (i.e., $\varphi(u) = \langle 0, b_u,a_u \rangle$), then in $\psi(u)$ the first coordinate is also $0$  (by the rules of the algorithm), and so $\psi(v) \neq \psi(u)$.\\
	Else we have $c_u \neq 0$. So both $v$ and $u$ have non-zero first coordinate, and so they do not change their third coordinate. So if $a_v \neq a_u$ then $\psi(v) \neq \psi(u)$. 
	Otherwise ($a_v = a_u$), and so $\langle c_v,b_v \rangle \neq \langle c_u,b_u \rangle$. So if $u$ sets $\psi(u) = \langle 0,b_u,a_u \rangle$, then $\psi(v) \neq \psi(u)$, because $c_v \neq 0$.\\
	Else, $u$ sets $\psi(u) = \langle c_u,b_u + c_u,a_u \rangle$, but $\langle c_v, b_v+c_v \rangle \neq \langle c_u, b_u + c_u \rangle$ because $\langle c_v,b_v \rangle \neq  \langle c_u,b_u \rangle$. In either case $\psi(v) \neq \psi(u)$.\\
	{\bf Case 2:} ($ c_v = 0 $).
	If $\varphi(u) = \langle c_u,b_u,a_u \rangle$ and $c_u \neq 0$, then by symmetric argument, $\psi(v) \neq \psi(u)$.
	Finally, if $\varphi(v) = \langle 0,b_v,a_v \rangle$, $\varphi(u) = \langle 0,b_u,a_u \rangle$ and $\varphi(v) \neq \varphi(u)$, then by our analysis of the two-dimensional AG (see Lemma \ref{lem:propercol}), we have $\psi(v) \neq \psi(u)$.
\end{proof}
Within the first $3 \cdot \Delta + 1$ rounds, each vertex $v$ will have $c_v = 0$. This is because each neighbor $u$ of $v$ may have a conflicting $b_u$ to the $b$-value $b_v$ at most three times: once with a non-finalized $b$-value, once with a finalized $b$-value (on line 4 of the algorithm), and once with a $b$-value 0 (set on line 10 of the algorithm). So among $3 \cdot \Delta + 1$ first rounds, there will be a round on which for all $u \in \Gamma(v)$, $b_v \neq b_u$, and on that round $v$ finalizes its $b$-value (in line 4). (In fact, $2 \cdot \Delta +2$ rounds suffice, as $b_v$ can be zero at most once during all these rounds, assuming $p \geq 2 \cdot \Delta+2$.)\\
After all vertices have their $c_v = 0$, in $2 \cdot \Delta +1$ additional rounds, by the same argument, all $a_v$'s will be finalized.

\begin{col}
	The algorithm 3AG($p$), starting with a proper $p^3$-coloring, where $p \geq 2\Delta + 2$, computes a proper $p$-coloring in $O(p)$ rounds.
\end{col}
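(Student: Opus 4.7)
The plan is to combine the properness preservation established in the previous lemma with a two-phase accounting argument. By that lemma, properness is automatically maintained throughout the execution, so the only thing to verify is that the first coordinate $c_v$ of every vertex becomes $0$ within $O(p)$ rounds, and that after this happens all third coordinates $a_v$ finalize within $O(p)$ additional rounds.

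For the first phase, I would fix a vertex $v$ with $c_v \neq 0$ and argue that within the first $O(\Delta)$ rounds there must be a round in which no neighbor $u$ of $v$ has $b_u = b_v$; on that round line~4 of the algorithm sets $c_v$ to $0$. The accounting runs exactly along the lines sketched right after the lemma: for a single neighbor $u$, the value $b_u$ can coincide with $b_v$ at most three times during these rounds. This is because while both $c_v,c_u \neq 0$ the sequences $b_v + t\cdot c_v$ and $b_u + t\cdot c_u$ move in arithmetic progressions in $Z_p$ with distinct common differences (properness of $\varphi$ forces $c_v \neq c_u$ whenever $b_v=b_u$), so by the same argument as in Lemma~\ref{lem:enoughcolors} they coincide at most once; once one of $c_v,c_u$ becomes $0$ the same reasoning as in Lemma~\ref{lem:enoughcolorsb} gives at most one further coincidence; and finally a neighbor with $c_u=0$ and $b_u$ having already been reset to $0$ on line~10 can match $b_v$ at most one more time as $b_v$ sweeps through $Z_p$. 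Summing over all $\Delta$ neighbors gives at most $3\Delta$ "bad" rounds, so among any $3\Delta+1$ consecutive rounds there must be a round when $v$ finalizes $c_v$ to $0$. (The tighter bound $2\Delta+2$ mentioned in the text comes from noting that $b_v=0$ can happen only once.)

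Once every vertex has $c_v=0$, the algorithm degenerates to executing the two-dimensional AG algorithm of Section~\ref{sc:ag} on the pair $(a_v,b_v)$ (lines 9--13), with $b$ playing the role previously played by $a$ and $a$ playing the role of $b$. The second phase analysis is then literally an instance of Corollary~\ref{col:squarerootcoloring} with field size $p \geq 2\Delta+1$: within $q = O(p)$ further rounds every vertex finalizes $a_v$ and the resulting colors have the form $\langle 0,0,a_v\rangle$, giving a proper $p$-coloring.

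The one subtlety I expect to be the main obstacle is the bookkeeping in the first phase when different neighbors of $v$ transition between having nonzero $c_u$, having $c_u=0$ but $b_u\ne 0$, and having $c_u=b_u=0$ at different times, because the algorithm is perfectly uniform and these transitions may interleave arbitrarily. The cleanest way to handle this is to classify, for each fixed neighbor $u$, the rounds in which $b_u=b_v$ into the three categories above and show each category contributes at most one collision; then the bound of $3\Delta$ total bad rounds follows uniformly over any choice of interleaving. With this in hand, the two phase durations add to $O(p)$, and the final coloring uses only colors of the form $\langle 0,0,a\rangle$ with $a\in Z_p$, proving the corollary.
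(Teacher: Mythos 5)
Your two-phase structure is the same as the paper's, and your reduction of phase~2 to Corollary~\ref{col:squarerootcoloring} is exactly what the paper does. The gap is in the phase-1 accounting, specifically in the parenthetical claim that ``properness of $\varphi$ forces $c_v \neq c_u$ whenever $b_v = b_u$.'' That is false: properness is a constraint on the entire triple $\langle c_v, b_v, a_v\rangle$, and the state $c_v = c_u$, $b_v = b_u$, $a_v \neq a_u$ is perfectly proper. In that state, as long as both $v$ and $u$ have nonzero first coordinate, their $b$-coordinates evolve as \emph{the same} arithmetic progression (both step by $c_v = c_u$), so $b_v = b_u$ persists in every subsequent round; both vertices detect a $b$-conflict with each other in every round; and neither ever zeroes its $c$-coordinate. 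Thus ``at most once with a non-finalized $b$-value'' fails, the $3\Delta$-count fails, and phase~1 as accounted for need not terminate at all.

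To be fair, you are tracking the paper's own text here --- it makes the same ``at most three times'' assertion before the corollary without justifying the first category --- but this is precisely the spot that needs a real argument rather than an appeal to analogy. The structural mismatch with Lemmas~\ref{lem:enoughcolors}--\ref{lem:enoughcolorsb} is that in the 2D case the two common differences are $a_v$ and $a_u$, and properness of the \emph{pair} forces $a_v \neq a_u$ the moment $b_v = b_u$; in 3AG the common differences in phase~1 are $c_v$ and $c_u$, and properness of the \emph{triple} does not force $c_v \neq c_u$ when $b_v = b_u$, because the $a$-coordinates can absorb the required distinctness. A correct write-up has to either change the phase-1 conflict test (for instance, declare a conflict only when both $b_u = b_v$ and $a_u = a_v$, so that properness does force $c_u \neq c_v$, and then re-verify the properness-preservation lemma under that relaxed test) or supply some other mechanism that rules out or resolves the $c_v = c_u$, $b_v = b_u$ situation. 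As written, the per-neighbor collision bound you rely on does not hold.
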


We next argue that one can decrease the palette’s size (in both ordinary and 3-dimensional variants of the algorithm AG), at the expense of slightly increasing the running time. Consider first the ordinary (two dimensional) variant of algorithm AG, and suppose that instead of running it for $p \geq 2 \cdot \Delta + 1$ rounds, we run it for $p \geq (1 + \epsilon) \cdot \Delta$ rounds, for an arbitrary small constant $\epsilon > 0$. We will run it for $1 + \lceil \frac{1}{\epsilon} \rceil$ phases, each lasting for $p$ rounds.
(Observe, however, that vertices that run the algorithm are oblivious to the phases. They always run the same AG-iteration, on which a
vertex $v$ with $\varphi(v) = \langle b_v, a_v\rangle$ checks if it has a neighbor $u$ with $a_v = a_u$. If it does not, it finalizes its color to $\psi(v) = \langle 0,a_v \rangle$. Otherwise it sets it to $\psi(v) = \langle b_v, a_v + b_v \rangle$.)
Consider a fixed vertex $v$. Note that if it does not finalize its color on phase 1, it
means that at least $\epsilon \cdot \Delta$ of its neighbors $u$ have finalized their colors (and had a conflict with the color of $v$ at least twice during the phase). Observe also that
these neighbors $u$ will be able to conflict at most once with $v$ on each subsequent phase. Hence if $v$ does not finalize its color for $i$ phases, $i  = 1,2,...,$ $i < 1/\epsilon$, it means that at least $i \cdot \epsilon \Delta$ among its neighbors did. Hence after $\lceil \frac{1}{\epsilon} \rceil$ phases, all neighbors of $v$ have
finalized their colors, and on the next phase $v$ will necessarily finalize its color.
The same reasoning is applicable to the 3-dimensional variant of the AG algorithm, but the
number of phases grows by a factor of 2.
\begin{col}\label{col:epsilon_coloring}
	Given a proper $O(p^3)$-coloring, for some $p \geq (1 + \epsilon) \cdot \Delta$, for some $\epsilon > 0$, running 3AG($p$) for $O(\frac{1}{\epsilon}\cdot p )$ rounds produces a proper $p$-coloring.
\end{col}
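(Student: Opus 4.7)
The plan is to adapt the paragraph immediately preceding the statement into a rigorous proof by tracking two separate ``sweeps'' of $3$AG$(p)$: first driving $c_v$ to $0$ for every vertex $v$, then driving $b_v$ to $0$ for every vertex. Once both sweeps end, every color has the form $\langle 0,0,a_v\rangle$, and the properness invariant of $3$AG$(p)$ already established in this section ensures that this is a proper $p$-coloring indexed by the third coordinate. I will argue that each sweep finishes in $O(p/\epsilon)$ rounds, for a total of $O(p/\epsilon)$.

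For the first sweep I partition the execution into consecutive phases of length $p$, and fix a vertex $v$ with $c_v\neq 0$ at the start of some phase. The structure parallels the $2$-dimensional AG argument from the sketch: if $v$ fails to reach $c_v=0$ during a phase, then in each of the $p=(1+\epsilon)\Delta$ rounds of the phase $v$ must have a $b$-conflict with some neighbor $u$. I bound the per-neighbor conflict count case by case. If $u$ stays in the $c$-working stage throughout the phase, the pair conflicts at most once by the direct analogue of Lemma~\ref{lem:enoughcolors} with $q=p$. If $u$ is already $c$-finalized at the start of the phase, $b_u$ takes at most two distinct values (it may jump to $0$ once it clears all $a$-conflicts), so by the analogue of Lemma~\ref{lem:enoughcolorsb} applied to each value separately, the pair conflicts at most twice. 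If $u$ transitions from working to finalized in the middle of the phase, combining the two bounds still yields only $O(1)$ conflicts. Summing over the $\Delta$ neighbors, if $v$ does not finalize in the phase then at least $\Omega(\epsilon\Delta)$ of its neighbors must have had their $c$-coordinate freshly finalized during the phase, since otherwise the total conflict budget stays below $(1+\epsilon)\Delta$. Iterating, after $O(1/\epsilon)$ phases every neighbor of $v$ is $c$-finalized, and in the next phase $v$ must finalize as well. This gives the $O(p/\epsilon)$ bound for the first sweep.

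After the first sweep every vertex has $c_v=0$, and the $3$AG update rule collapses to the $2$-dimensional AG update on the pair $(b_v,a_v)$: $b_v$ is now the coordinate being driven to $0$, and $a_v$ plays the role of the conflict-test coordinate. Applying the enlarged-phase $2$D argument from the paragraph preceding the statement in this reduced setting, every vertex attains $b_v=0$ within another $O(p/\epsilon)$ rounds, at which point the coloring has the desired form $\langle 0,0,a_v\rangle$. Summing the two sweeps yields the claimed $O(p/\epsilon)$ bound.

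The main obstacle, as foreshadowed by the sketch, lies in the first sweep. Unlike the clean $2$D picture, where a finalized vertex has a strictly constant second coordinate, a $c$-finalized neighbor here can still alter its $b$-value once (by jumping to $0$). The remedy is the observation that $b_u$ takes at most two distinct values across the sweep, so the per-neighbor conflict contribution remains $O(1)$ and the $\Omega(\epsilon\Delta)$ per-phase progress bound survives up to a constant factor, which is absorbed into the $O(1/\epsilon)$ phase count.
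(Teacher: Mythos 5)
Your two-sweep decomposition (first drive every $c_v$ to $0$, then every $b_v$ to $0$) is a faithful expansion of the paper's terse remark that ``the same reasoning is applicable to the 3-dimensional variant... but the number of phases grows by a factor of 2,'' and the phase/budget structure mirrors the paper's 2D $\epsilon$-argument for AG, so you are on the same route. However, one step in your first-sweep accounting does not quite hold as written, and it is exactly at the spot you flag as the ``main obstacle.'' You conclude that if $v$ fails to $c$-finalize during a phase then ``at least $\Omega(\epsilon\Delta)$ of its neighbors must have had their $c$-coordinate freshly finalized during the phase,'' but this ignores the other source of extra conflicts you yourself identified: a neighbor $u$ that was already $c$-finalized at the start of the phase can contribute $2$ conflicts if its $b_u$ jumps to $0$ mid-phase, and this can push the per-phase conflict total past $(1+\epsilon)\Delta$ with zero fresh $c$-finalizations (e.g., many $c$-finalized neighbors all $b$-finalizing in the same phase). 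Consequently your inductive conclusion ``after $O(1/\epsilon)$ phases every neighbor of $v$ is $c$-finalized, and in the next phase $v$ must finalize'' is also too weak: even with all neighbors $c$-finalized, $b$-finalization events can keep the conflict count at $p$ for another phase. The fix is to count \emph{one-time transition events} rather than $c$-finalizations: each neighbor has at most two such events over the whole sweep (its own $c$-finalization and its own $b$-finalization), each contributing at most one extra conflict beyond its per-phase baseline of $1$. A phase in which $v$ does not finalize burns at least $\epsilon\Delta$ of this $2\Delta$ budget, so after $O(1/\epsilon)$ phases the budget is exhausted, every subsequent phase has at most $\Delta < p$ conflicts, and $v$ finalizes. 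This preserves your $O(p/\epsilon)$ bound for the first sweep and leaves the second sweep (which genuinely collapses to the clean 2D argument once every $c_v=0$) unchanged.
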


\section{Conclusion}
In this paper we showed that $(\Delta + 1)$-coloring can be computed using a locally-iterative algorithm below the $\Theta(\Delta \log \Delta)$ time barrier of Szegedy and Vishwanathan. In contrast to previous methods, our algorithm does not reduce the number of colors by a multiplicative factor in every single round. Instead, it guarantees that all colors enter the required range of $(\Delta + 1)$ within $O(\Delta + \log^*n)$ rounds, by performing appropriate simple operations in each round. Now, a natural question arises: is it possible to compute such a coloring using a locally-iterative algorithm with $o(\Delta) + \log^* n$ running time? While, according to previous lower bounds, this is not feasible using an algorithm that reduces the number of colors in every single iteration, a more delicate reduction with more sophisticated local rules may result in sublinear-in-$\Delta$ running time. This is a fascinating direction for future research.

\end{document}